\newtheorem{theorem}{Theorem}
\newtheorem{lemma}{Lemma}
\newtheorem{remark}{Remark}
\newtheorem{definition}{Definition}
\newtheorem{assumption}{Assumption}
\newtheorem{corollary}{Corollary}
\newtheorem{fact}{Fact}
\newtheorem{proc}{Procedure}
\def\BibTeX{{\rm B\kern-.05em{\sc i\kern-.025em b}\kern-.08em
    T\kern-.1667em\lower.7ex\hbox{E}\kern-.125emX}}
\DeclareMathOperator{\vect}{vec}
\DeclareMathOperator{\svec}{svec}
\begin{document}
\title{Reinforcement Learning for Adaptive Optimal Stationary Control of Linear Stochastic Systems}
\author{Bo Pang, \IEEEmembership{Student Member, IEEE} and Zhong-Ping Jiang, \IEEEmembership{Fellow, IEEE}
\thanks{This work has been supported in part by the U.S. National Science Foundation under Grants ECCS-1501044 and EPCN-1903781.}
\thanks{B. Pang and Z. P. Jiang are with the Control and Networks Lab, Department of Electrical and Computer Engineering, Tandon School of Engineering, New York University, 370 Jay Street, Brooklyn, NY 11201, USA (e-mail: bo.pang@nyu.edu; zjiang@nyu.edu).
}
}

\renewcommand{\arraystretch}{1.2}

\maketitle

\begin{abstract}
This paper studies the adaptive optimal stationary control of continuous-time linear stochastic systems with both additive and multiplicative noises, using reinforcement learning techniques. Based on policy iteration, a novel off-policy reinforcement learning algorithm, named optimistic least-squares-based policy iteration, is proposed which is able to find iteratively near-optimal policies of the adaptive optimal stationary control problem directly from input/state data without explicitly identifying any system matrices, starting from an initial admissible control policy. The solutions given by the proposed optimistic least-squares-based policy iteration are proved to converge to a small neighborhood of the optimal solution with probability one, under mild conditions. The application of the proposed algorithm to a triple inverted pendulum example validates its feasibility and effectiveness.
\end{abstract}

\section{Introduction}
Recently reinforcement learning (RL) and approximate/adaptive dynamic programming (ADP) have attracted the attention of both researchers from academia and practitioners from industry, due to their successful applications in the Chinese game of Go and video games \cite{sutton2018reinforcement}, multi-robot repair \cite{bhattacharya2020multiagent}, voltage regulation \cite{8763899}, human sensorimotor control \cite{Tao2020Bio,Jiang2014} and so on. Adaptive control \cite{astrom1995adaptive} is a field that deals with dynamical control systems with unknown parameters, but usually ignores the optimality of the control systems (with a few exceptions \cite{Bitmead1990Adaptive}). Optimal control \cite{Bryson1975optimal} is a branch of control theory that discusses the synthesis of feedback controllers to achieve optimality properties for dynamical control systems, but often requires the knowledge of the model parameters. Synthesizing the advantages of these two control methods \cite{126844}, RL/ADP searches for optimal controllers with respect to some performance index through interactions between the controller and the dynamical system, without the full knowledge of system dynamics \cite{sutton2018reinforcement,Bertsekas2019optimal}. Over the past decades, numerous RL and ADP methods have been proposed for different optimal control problems with various kinds of dynamical systems, see books \cite{sutton2018reinforcement,Bertsekas2019optimal,jiang2017robust,Dixon2018reinforcement} and recent surveys \cite{BUSONIU20188,8169685,powell2019reinforcement,levine2020offline,Jiang2020Survey} for details. However, most of existing RL and ADP methods are designed for stochastic discrete-time systems described by Markov decision processes \cite{sutton2018reinforcement,Bertsekas2019optimal} or deterministic continuous-time systems described by ordinary differential equations \cite{jiang2017robust,Dixon2018reinforcement}. Relatively fewer results are known for stochastic continuous-time systems described by stochastic differential equations \cite{pavliotis2014stochastic}, which are important in the modeling of stochastic uncertainties in physical systems, problems in finance and phenomenons in neuroscience, to new a few\cite{pavliotis2014stochastic}.

In this paper, a novel off-policy RL method named optimistic least-squares-based policy iteration (OLSbPI) is proposed, to find directly from input/state data near-optimal controllers solving the adaptive optimal stationary control problem, for continuous-time linear stochastic systems with additive and multiplicative noises. The optimal stationary control has been a classic problem in stochastic optimal control \cite{wonham1967dependent,wonham1968matrix,doi:10.1137/0311030,doi:10.1137/0309016,1099303}, whose objective is to find an optimal controller minimizing the expectation of a performance index with respect to the invariant probability measure of the closed-loop system. Dynamic programming, as a powerful tool for solving optimal control problems, has two main classes of algorithms: value iteration (VI) and policy iteration (PI) \cite{Bertsekas2019optimal}. Policy iteration involves two steps, \textit{policy evaluation} and \textit{policy improvement}. In the step of policy evaluation, a given controller is evaluated based on a scalar performance index. Then this performance index is utilized to generate a new controller in policy improvement. If all the system matrices are known and the two steps are iterated in turn, then the optimal solutions of the optimal stationary control problem are provably guaranteed to be found \cite{1099303}. In comparison, the proposed OLSbPI algorithm removes the restrictive requirement that the system matrices are known in policy iteration, which is achieved by directly implementing the policy evaluation and policy improvement steps from the input/state data, starting from an initial admissible control policy. Since the data is contaminated by unmeasurable stochastic noises, policy evaluation cannot be exactly implemented anymore because of the estimation errors occurring in each step of OLSbPI. Given that policy iteration is a nonlinear process, it is a non-trivial problem whether the policy iteration still converges to the optimal solution or its neighborhood in the presence of the estimation errors \cite{bertsekas2011approximate}. To this end, we firstly show that the policy iteration is robust to small estimation errors in the learning process (Theorem \ref{theorem_local_ISS} and Corollary \ref{corollary_global}), in the sense that as long as the estimation error is small, the solutions generated by policy iteration is still able to converge to a neighborhood of the optimal solution. Furthermore, under mild conditions, the solutions found by OLSbPI are proved to converge asymptotically to the optimal solutions (Theorem \ref{theorem_stochastic_algorithm_convergence}).

As compared with the past literature on similar topics, the proposed OLSbPI algorithm has several advantages. Firstly, the general case in which both additive and multiplicative stochastic noises are present in the systems are considered in the OLSbPI algorithm. Stochastic multiplicative noises are important in modeling the random perturbation in system parameters and coefficients, and are widely found in modern control systems such as networked control systems with noisy communication channels \cite{4118465}, modern power networks \cite{8814402}, neuronal brain networks \cite{Breakspear2017} and human sensorimotor control \cite{Jiang2014,Tao2020Bio}. Previous studies consider either only additive noises \cite{788532,doi:10.1137/18M1214147,basei2020linear} or deterministic cases \cite{bradtke1993reinforcement,8795815}. Secondly, starting from an initial admissible control policy, the proposed OLSbPI algorithm is able to find near-optimal solutions of the optimal stationary control problem directly from the input/state data, without the usage of noise-dependent information and the exact knowledge of all the system matrices. The optimal stationary control of linear stochastic systems with multiplicative noises are also investigated in \cite{wonham1967dependent,wonham1968matrix,doi:10.1137/0311030,doi:10.1137/0309016,1099303,6026952,7447723,Jiang2014,Tao2020Bio,Jiang2020Survey}. However to find the optimal policy, either the full knowledge of all the system matrices need to be known \cite{wonham1967dependent,wonham1968matrix,doi:10.1137/0311030,doi:10.1137/0309016,1099303}, or the knowledge of input matrix and gain matrices of the noises need to be known \cite{6026952}, or the stochastic noises are assumed measurable and explicitly used \cite{7447723,Jiang2014,Tao2020Bio,Jiang2020Survey}. Thirdly, the convergence of the OLSbPI algorithm is mathematically analyzed and rigorously proved. As a by-product of the convergence analysis, the policy iteration for optimal stationary control of linear stochastic systems is shown to be robust to the estimation errors in the learning process. It is well-known \cite{bertsekas2011approximate} that policy iteration may exhibit complex behavior when implemented inexactly, thus the derived robustness results of the policy iteration (Theorem \ref{theorem_local_ISS} and Corollary \ref{corollary_global}) are non-trivial and may be of independent interest. Finally, different from the RL/ADP methods designed for discrete-time control problems (with discrete state and action spaces), e.g. \cite{sutton2018reinforcement,bradtke1993reinforcement,8795815,lagoudakis2003least} and so on, our proposed OLSbPI algorithm deals with continuous-time control problems with continuous state and action spaces, and directly utilizes the continuous-time data flow.

This technical note is organized as follows. The optimal stationary control problem and related preliminaries are introduced in Section \ref{section_preliminaries}. The OLSbPI algorithm is derived in Section \ref{section_OLSbPI}, and its convergence analysis is given in Section \ref{section_Convergence_analysis}. A numerical example is employed to show the effectiveness of the proposed algorithm in Section \ref{section_simulation}. Finally, the conclusion is drawn in Section \ref{section_conclusion}.

\textbf{Notations.} $\mathbb{R}$ is the set of all real numbers; $\mathbb{Z}_+$ denotes the set of nonnegative integers; $\mathbb{S}^{n}$ is the set of all real symmetric matrices of order $n$; $\otimes$ denotes the Kronecker product; $I_n$ denotes the identity matrix of order $n$ while $0_{m\times n}$ denotes the $m\times n$ zero matrix; $\Vert \cdot\Vert_F$ is the Frobenius norm; $\Vert\cdot\Vert_2$ is the Euclidean norm for vectors and the spectral norm for matrices; for function $u:\mathbb{F}\rightarrow \mathbb{R}^{n\times m}$, $\Vert u\Vert_\infty$ denotes its $l^\infty$-norm when $\mathbb{F} = \mathbb{Z}_+$, and $L^\infty$-norm when $\mathbb{F} = \mathbb{R}$. For matrices $X\in \mathbb{R}^{m\times n}$, $Y\in \mathbb{S}^m$, and vector $v\in \mathbb{R}^{n}$, define 
$\vect(X) = [X_1^T, X_2^T, \cdots, X_n^T]^T$,
$\svec(Y) = [y_{11},\sqrt{2}y_{12},\cdots,\sqrt{2}y_{1m},y_{22},\sqrt{2}y_{23},\cdots,\sqrt{2}y_{m-1,m},y_{m,m}]^T\in \mathbb{R}^{\frac{1}{2}m(m+1)}$,
where $X_i$ is the $i$th column of $X$. $\vect^{-1}(\cdot)$ and $\svec^{-1}(\cdot)$ are operations such that $X=\vect^{-1}(\vect(X))$, and $Y=\svec^{-1}(\svec(Y))$. For $Z\in\mathbb{R}^{m\times n}$, define $\mathcal{B}_{r}(Z) = \{X\in\mathbb{R}^{m\times n}\vert \Vert X - Z\Vert_F<r\}$ and $\mathcal{\bar{B}}_{r}(Z)$ as the closure of $\mathcal{B}_{r}(Z)$. $Z^\dagger$ is the Moore-Penrose pseudoinverse of matrix $Z$. The direct sum of matrices $Z_1$ and $Z_2$ is denoted as $Z_1\oplus Z_2$.
\section{Problem Formulation and Preliminaries}\label{section_preliminaries}
Let $w_1\in\mathbb{R}^{q_1}$, $w_2\in\mathbb{R}^{q_2}$, $w_3\in\mathbb{R}^{q_3}$ be independent standard Brownian motions defined on a probability space $(\Omega,\mathcal{F},\mathcal{P})$. Consider the system described by the following stochastic differential equation:
\begin{equation}\label{LTI_sys_ito}
    \dd x = (Ax + Bu)\dd t + \sum_{j=1}^{q_1} D_jx\dd w_{1,j} + \sum_{k=1}^{q_2} F_ku\dd w_{2,k} + C\dd w_3,
\end{equation}
where $x\in\mathbb{R}^n$, $u\in\mathbb{R}^m$, $A$, $B$, $\{D_j\}_{j=1}^{q_1}$, $\{F_k\}_{k=1}^{q_2}$, $C$ are real constant matrices of compatible
dimensions. $(A,B)$ is controllable. Letting $u=\varphi(x)$, where $\varphi(\cdot)\colon \mathbb{R}^n \rightarrow \mathbb{R}^m$ is a Lipschitz continuous function on $\mathbb{R}^n$, and the random variable $x(0)$ be independent of $w_i$, $i=1,2,3$, notice that equation \eqref{LTI_sys_ito} is an equation of Ito's type, and determines a diffusion process $X_\varphi = \{x(t) \colon t\in\mathbb{R}, t\geq 0\}$. We are interested in the case where $X_\varphi$ has an invariant probability measure $\mu_\varphi$ defined on the Borel sets of $\mathbb{R}^n$; i.e., if $x(0)$ has the probability distribution $\mu_{\varphi}$, then so does $x(t)$, $\forall t>0$.
\begin{definition}[{\cite{doi:10.1137/0309016}}]
    Let $\Phi$ denote the class of admissible control policies, i.e., the set of functions $\varphi(\cdot)$ such that
    \begin{enumerate}
        \item $\varphi(\cdot)$ is Lipschitz continuous on its domain;
        \item an invariant probability measure $\mu_\varphi$ exists;
        \item for any invariant probability measure $\mu$,\\
        $\mathbb{E}_{\mu}[\Vert x\Vert_2^2] = \int_{\mathbb{R}^n} \Vert x\Vert_2^2\mu(dx)<\infty$.
    \end{enumerate}
\end{definition}
Given $A$, $B$, $\{D_j\}_{j=1}^{q_1}$, $\{F_k\}_{k=1}^{q_2}$ and $C$ in \eqref{LTI_sys_ito}, it is possible that set $\Phi$ is empty \cite{wonham1967dependent,wonham1968matrix,doi:10.1137/0309016,doi:10.1137/0311030}, since $\{\Vert D_j\Vert_F\}_{j=1}^{q_1}$ and $\{\Vert F_k\Vert_F\}_{k=1}^{q_2}$ may be so large that $X_{\varphi}$ always diverges to the infinity whatever the control policy $\varphi(\cdot)$ is, in which case no invariant measure exists. For $X\in\mathbb{R}^{n\times n}$, $Y\in\mathbb{R}^{m\times n}$, define
\begin{equation*}
\begin{split}
    \mathcal{L}_{Y}(X) &= (A-BY)^TX + X(A-BY) + \Pi(X) + Y^T\Sigma(X)Y, \\
    \mathcal{A}(Y) &= I_n\otimes (A-BY)^T + (A-BY)^T\otimes I_n \\
    &+ \sum_{j=1}^{q_1}D^T_j\otimes D^T_j + \sum_{k=1}^{q_2}(F_kY)^T\otimes(F_kY)^T,
\end{split}
\end{equation*}
where
$\Pi(X) = \sum_{j=1}^{q_1}D_j^TXD_j$, $\Sigma(X) = \sum_{k=1}^{q_2}F_k^TXF_k$. Then it is easy to check that
\begin{equation}\label{lyapunov_operator_equivalent}
    \vect(\mathcal{L}_{Y}(X)) = \mathcal{A}(Y)\vect(X).
\end{equation}
If $\mathcal{A}(Y)$ is Hurwitz, then $\mathcal{L}^{-1}_{Y}$ exists and for any $Z\in\mathbb{R}^{n\times n}$,
$$\mathcal{L}^{-1}_{Y}(Z) = \vect^{-1}\left(\mathcal{A}^{-1}(Y)\vect(Z)\right).$$
In fact, $\mathcal{L}^{-1}_{Y}(Z)$ is the unique solution of the generalized Lyapunov equation
$$\mathcal{L}_{Y}(X) = Z.$$
The following assumptions are made throughout this paper.
\begin{assumption}\label{Assumption_positive_definite}
$CC^T>0$ in system \eqref{LTI_sys_ito}.
\end{assumption}
\begin{assumption}\label{Assumption_admissible_nonempty}
There exists a $K\in\mathbb{R}^{m\times n}$, such that $\mathcal{A}(K)$ is Hurwitz.
\end{assumption}
\begin{fact}[{\cite[Section 2]{doi:10.1137/0309016}}]\label{fact_admissible_not_empty}
Under Assumption \ref{Assumption_admissible_nonempty}, $\Phi$ is not empty. If in addition Assumption \ref{Assumption_positive_definite} holds, then the invariant probability measure is unique for each admissible control policy.
\end{fact}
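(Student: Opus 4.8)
The plan is to prove the two assertions in turn: nonemptiness of $\Phi$ by exhibiting a single admissible policy, and uniqueness by exploiting the nondegeneracy $CC^T>0$. For the first part I would take the linear feedback $\varphi(x)=-Kx$, with $K$ the gain furnished by Assumption~\ref{Assumption_admissible_nonempty}; it is globally Lipschitz, so admissibility condition~1 holds, and the closed loop of \eqref{LTI_sys_ito} is a linear Itô equation. Because $\mathcal{A}(K)$ in \eqref{lyapunov_operator_equivalent} is the Kronecker representation of the positive (Lyapunov-type) operator $\mathcal{L}_K$, its Hurwitzness is equivalent---by the stochastic Lyapunov theorem---to the existence of $S\succ0$ and $Q\succ0$ with $\mathcal{L}_K(S)=-Q$. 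Taking the Lyapunov function $V(x)=x^TSx$ and computing the generator $\mathcal{G}$ of the closed loop via Itô's formula gives the clean identity $\mathcal{G}V(x)=x^T\mathcal{L}_K(S)x+\operatorname{tr}(SCC^T)=-x^TQx+\operatorname{tr}(SCC^T)$, i.e. a Foster--Lyapunov drift inequality $\mathcal{G}V(x)\le -c_1\lVert x\rVert_2^2+c_2$ with $c_1=\lambda_{\min}(Q)>0$ and $c_2=\operatorname{tr}(SCC^T)$.

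From this drift inequality both remaining admissibility conditions follow by standard arguments. Dynkin's formula together with the inequality yields the uniform bound $\sup_{t\ge0}\mathbb{E}[\lVert x(t)\rVert_2^2]<\infty$, whence the family of time-$t$ laws is tight (Chebyshev and Prokhorov) and the Krylov--Bogolyubov procedure produces an invariant probability measure $\mu_\varphi$, giving condition~2. For condition~3, I would integrate the drift inequality against an arbitrary invariant measure $\mu$: invariance gives $\int\mathcal{G}V\,d\mu=0$, so $c_1\mathbb{E}_\mu[\lVert x\rVert_2^2]\le c_2$ and hence $\mathbb{E}_\mu[\lVert x\rVert_2^2]\le c_2/c_1<\infty$ (a short localization/stopping-time argument legitimizes $\int\mathcal{G}V\,d\mu=0$ despite the unboundedness of $V$). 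Thus $\varphi=-Kx\in\Phi$ and $\Phi\neq\emptyset$.

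The uniqueness statement, which invokes the additional Assumption~\ref{Assumption_positive_definite}, is where the real difficulty lies. The multiplicative terms $D_jx$ and $F_ku$ vanish at the origin, so the diffusion matrix $\sum_{j}D_jxx^TD_j^T+\sum_{k}F_kuu^TF_k^T+CC^T$ of \eqref{LTI_sys_ito} is degenerate on its own; it is exactly the hypothesis $CC^T>0$ that bounds it uniformly below by a positive definite matrix, making the diffusion uniformly elliptic. This nondegeneracy renders the transition semigroup strong Feller and the process irreducible, and then the classical ergodic-uniqueness theorem (of Khasminskii/Doob type) applies: a nondegenerate diffusion can possess at most one invariant probability measure, since any two would be mutually equivalent and therefore equal. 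The step I expect to be the main obstacle is verifying the strong Feller and irreducibility properties rigorously for a closed loop driven by a merely Lipschitz (not smooth) $\varphi$, so that they indeed follow from $CC^T>0$ alone; this is precisely the place where the first assumption enters and where existence of an invariant measure is upgraded to uniqueness.
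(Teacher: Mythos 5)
The paper does not prove this fact at all --- it is imported verbatim from Section 2 of the cited Wonham reference --- so there is no internal proof to compare against; your reconstruction is correct and follows exactly the classical route one finds there (a quadratic Lyapunov function with $\mathcal{L}_K(S)=-Q$ giving a Foster--Lyapunov drift bound, hence tightness, Krylov--Bogolyubov existence, and the second-moment bound for every invariant measure; then nondegeneracy from $CC^T>0$ giving strong Feller plus irreducibility and hence uniqueness). The only cosmetic point is that the diffusion matrix is uniformly bounded \emph{below} by $CC^T$ but grows quadratically above, so ``locally uniformly elliptic / uniformly nondegenerate'' is the accurate phrasing; this does not affect the strong Feller and irreducibility conclusions.
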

The following lemma gives a condition under which Assumption \ref{Assumption_admissible_nonempty} is satisfied.
\begin{lemma}[{\cite[Lemma]{1099303},\cite[Theorem 3.1]{doi:10.1137/0309016}}]\label{lemma_lyapunov}
For $K\in\mathbb{R}^{m\times n}$, $\mathcal{A}(K)$
is Hurwitz if and only if given any positive definite matrix $S$, the (unique) solution $P$ of
$\mathcal{L}_K(P) = -S $ is positive definite. A control policy $\varphi_K(x)=-Kx$ is admissible if $\mathcal{A}(K)$ is Hurwitz, in which case $K$ is also said to be admissible.
\end{lemma}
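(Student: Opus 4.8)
The plan is to work directly with the operator $\mathcal{L}_K$ on $\mathbb{R}^{n\times n}$, for which $\mathcal{A}(K)$ is exactly the matrix in $\vect$ coordinates, so that Hurwitzness of $\mathcal{A}(K)$ is equivalent to $e^{t\mathcal{L}_K}\to 0$. Writing $M=A-BK$, everything will rest on one structural fact: the flow $\Theta_t:=e^{t\mathcal{L}_K}$, together with its trace-adjoint $\Theta_t^{*}=e^{t\mathcal{L}_K^{*}}$ where $\mathcal{L}_K^{*}(Y)=MY+YM^{T}+\sum_j D_jYD_j^{T}+\sum_k F_kKY(F_kK)^{T}$, is a \emph{positive} semigroup, i.e. it maps positive semidefinite matrices to positive semidefinite matrices. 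I would prove this by splitting $\mathcal{L}_K=\mathcal{C}+\mathcal{P}$ with $\mathcal{C}(X)=M^{T}X+XM$ generating the congruence $e^{t\mathcal{C}}(X)=e^{tM^{T}}Xe^{tM}$ and $\mathcal{P}(X)=\sum_j D_j^{T}XD_j+\sum_k (F_kK)^{T}X(F_kK)$ a positive map; both $e^{t\mathcal{C}}$ and $e^{t\mathcal{P}}$ preserve the cone, hence so does the Trotter product $\Theta_t=\lim_{n}(e^{(t/n)\mathcal{C}}e^{(t/n)\mathcal{P}})^{n}$.

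For the direction ``$\mathcal{A}(K)$ Hurwitz $\Rightarrow P\succ 0$'', Hurwitzness gives the operator identity $\mathcal{L}_K^{-1}=-\int_0^{\infty}\Theta_t\,dt$, so the unique solution of $\mathcal{L}_K(P)=-S$ is $P=\int_0^{\infty}\Theta_t(S)\,dt$. Positivity of $\Theta_t$ and $S\succ 0$ give $\Theta_t(S)\succeq 0$ for all $t$, hence $P\succeq 0$; strictness follows since $\Theta_0(S)=S\succ 0$ forces $\Theta_t(S)\succeq\tfrac12 S$ on some interval $[0,\varepsilon]$, whence $P\succeq\tfrac{\varepsilon}{2}S\succ 0$.

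For the converse I would read $\Theta_t^{*}$ as a covariance dynamics and use $P$ as a Lyapunov certificate. For $\Sigma_0\succeq 0$ set $\Sigma(t)=\Theta_t^{*}(\Sigma_0)\succeq 0$; then, by duality of $\mathcal{L}_K$ and $\mathcal{L}_K^{*}$, $\frac{d}{dt}\tr(P\Sigma(t))=\tr(P\mathcal{L}_K^{*}(\Sigma(t)))=\tr(\mathcal{L}_K(P)\Sigma(t))=-\tr(S\Sigma(t))\le -c\,\tr(P\Sigma(t))$ with $c=\lambda_{\min}(S)/\lambda_{\max}(P)>0$, using $\tr(S\Sigma)\ge\lambda_{\min}(S)\tr(\Sigma)$ and $\tr(P\Sigma)\le\lambda_{\max}(P)\tr(\Sigma)$ for $\Sigma\succeq0$. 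Gr\"onwall's inequality then yields $\tr(P\Sigma(t))\le e^{-ct}\tr(P\Sigma_0)\to0$; choosing $\Sigma_0=I$ and using $\lambda_{\min}(P)\Vert\Sigma(t)\Vert_2\le\tr(P\Sigma(t))$ gives $\Vert\Theta_t^{*}(I)\Vert_2\to 0$.

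The final step, which I expect to be the real obstacle, is upgrading this decay on the semidefinite cone to Hurwitzness of the \emph{full} $n^2\times n^2$ matrix $\mathcal{A}(K)$. The subtlety is that $\mathcal{L}_K$ leaves both the symmetric and the \emph{antisymmetric} matrices invariant, so a priori the antisymmetric-block eigenvalues of $\mathcal{A}(K)$ are invisible to the covariance (symmetric) dynamics. I would close this gap by exploiting positivity once more: since $\Theta_t^{*}$ is a positive linear map on $M_n(\mathbb{C})$, the Russo--Dye/Kadison theorem gives $\Vert\Theta_t^{*}\Vert=\Vert\Theta_t^{*}(I)\Vert\to 0$, so $\mathcal{L}_K^{*}$, and therefore $\mathcal{A}(K)^{T}$ and $\mathcal{A}(K)$, is Hurwitz; equivalently, a Perron--Frobenius argument shows the spectral abscissa of the positive semigroup is attained at a semidefinite eigenvector and dominates the antisymmetric part. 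The admissibility claim is then immediate: when $\mathcal{A}(K)$ is Hurwitz, Assumption~\ref{Assumption_admissible_nonempty} holds for this $K$, the map $\varphi_K(x)=-Kx$ is Lipschitz, and Fact~\ref{fact_admissible_not_empty} supplies the unique invariant measure with finite second moment $\Sigma_\infty=\int_0^{\infty}\Theta_t^{*}(CC^{T})\,dt$, so $\varphi_K\in\Phi$.
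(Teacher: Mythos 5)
The paper offers no proof of this lemma at all: it is quoted verbatim from Wonham's work (\cite{1099303}, \cite{doi:10.1137/0309016}), where the argument runs through stochastic Lyapunov functions and mean-square stability of the diffusion itself. Your proposal is therefore necessarily a different route, and it is essentially correct as a self-contained operator-theoretic proof. The splitting $\mathcal{L}_K=\mathcal{C}+\mathcal{P}$ with the Trotter product establishing positivity of $e^{t\mathcal{L}_K}$ is sound, the integral representation $P=\int_0^\infty e^{t\mathcal{L}_K}(S)\,\dd t$ gives the forward direction cleanly, and the Gr\"onwall argument on $\trace(P\Sigma(t))$ gives decay of $e^{t\mathcal{L}_K^{*}}$ on the semidefinite cone. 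Most importantly, you correctly identify the point that is genuinely easy to get wrong -- that $\mathcal{L}_K$ leaves the antisymmetric matrices invariant, so decay on the cone does not by itself yield Hurwitzness of the full $n^2\times n^2$ matrix $\mathcal{A}(K)$ -- and the Russo--Dye step $\Vert\Theta_t^{*}\Vert=\Vert\Theta_t^{*}(I)\Vert\to0$ closes it; if you prefer to avoid citing Russo--Dye, the elementary bound $\Vert\Phi(X)\Vert_2\leq 2\Vert X\Vert_2\Vert\Phi(I)\Vert_2$ for positive maps (sandwich the Hermitian parts between $\pm\Vert X\Vert_2 I$) suffices. The only soft spot is the last sentence: Fact \ref{fact_admissible_not_empty} as stated only asserts that $\Phi$ is nonempty, not that this particular linear feedback is admissible; to conclude $\varphi_K\in\Phi$ you should say explicitly that $x^TPx$ serves as a stochastic Lyapunov function giving a uniform second-moment bound, whence tightness and Krylov--Bogolyubov yield the invariant measure with $\mathbb{E}_\mu[\Vert x\Vert_2^2]<\infty$ -- which is exactly what the cited Section 2 of \cite{doi:10.1137/0309016} does.
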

Since $(A,B)$ is controllable, there exists a $K$ such that $A-BK$ is Hurwitz. Then if $\{\Vert D_j\Vert_F\}_{j=1}^{q_1}$ and $\{\Vert F_k\Vert_F\}_{k=1}^{q_2}$ are sufficiently small, by Lemma \ref{math_fundamentals} in Appendix \ref{Appendix_auxiliary}, $\mathcal{A}(K)$ remains Hurwitz and Assumption \ref{Assumption_admissible_nonempty} holds, see \cite{wonham1967dependent} for details. Other conditions under which Assumption \ref{Assumption_admissible_nonempty} is satisfied can be found in \cite{doi:10.1137/0309016,doi:10.1137/0311030,WILLEMS1976277}.

In optimal stationary control \cite{wonham1967dependent,wonham1968matrix,doi:10.1137/0311030,doi:10.1137/0309016,1099303}, we want to find a $\varphi^*\in\Phi$, such that 
$$\mathbb{E}_{\mu_{\varphi^*}}[r(x,\varphi^*)] = \inf_{\varphi\in\Phi}\{\mathbb{E}_{\mu_\varphi}[r(x,\varphi)]\},$$
where $r(x,\varphi) = x^TQx + \varphi^TR\varphi$, $Q\in\mathbb{S}^n$ and $R\in\mathbb{S}^m$ are positive definite matrices. For $X\in\mathbb{R}^{n\times n}$, define
$$\mathcal{R}(X) = Q + A^TX + XA + \Pi(X) - XB(R+\Sigma(X))^{-1}B^TX.$$
\begin{lemma}[{\cite[Theorem 2]{1099303},\cite[Theorem 3.2]{doi:10.1137/0309016}}]\label{lemma_existence_of_optimal_control}
The optimal control policy is of the form $\varphi^*(x)=-K^*x$, and $\mathcal{A}(K^*)$ is Hurwitz, where
$$K^* = (R+\Sigma(P^*))^{-1}B^TP^*,$$
$P^*\in\mathbb{S}^n$ is the unique positive definite solution of
\begin{equation}\label{algebraic_riccati_equation}
    \mathcal{R}(P)=0,
\end{equation}
and the minimum cost is $\mathbb{E}_{\mu_{\varphi^*}}[r(x,\varphi^*)]=\trace(C^TP^*C)$.
\end{lemma}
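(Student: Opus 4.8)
The plan is to reduce the problem to the algebraic Riccati equation \eqref{algebraic_riccati_equation} through a verification argument, obtaining the optimal gain by completing the square in the control. First I would record the cost of an arbitrary admissible linear feedback $\varphi_K(x)=-Kx$. Applying Ito's formula to $V(x)=x^TPx$ along the closed loop of \eqref{LTI_sys_ito} and collecting the drift together with the diffusion contributions of $D_jx$, $F_ku$ and $C$, the infinitesimal generator acting on $V$ equals $x^T\mathcal{L}_K(P)x+\trace(C^TPC)$. Taking $P=P_K$ to be the unique solution of $\mathcal{L}_K(P_K)=-(Q+K^TRK)$, which is positive definite by Lemma \ref{lemma_lyapunov} because $\mathcal{A}(K)$ is Hurwitz, the generator collapses to $-r(x,\varphi_K)+\trace(C^TP_KC)$. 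Integrating against the invariant measure $\mu_K$ and using stationarity, so that the expectation of the generator vanishes, then yields $\mathbb{E}_{\mu_K}[r(x,\varphi_K)]=\trace(C^TP_KC)$.

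The next and central ingredient is a purely algebraic identity. Writing $M=R+\Sigma(P^*)$ and $K^*=M^{-1}B^TP^*$, a direct expansion that substitutes $\mathcal{R}(P^*)=0$ shows that for every $K\in\mathbb{R}^{m\times n}$,
\[
\mathcal{L}_K(P^*)=-(Q+K^TRK)+(K-K^*)^TM(K-K^*).
\]
Setting $K=K^*$ gives $\mathcal{L}_{K^*}(P^*)=-(Q+(K^*)^TRK^*)$; since $Q>0$ makes the right-hand side negative definite and $P^*>0$, Lemma \ref{lemma_lyapunov} shows $\mathcal{A}(K^*)$ is Hurwitz, so $K^*$ is admissible and, by the first step, its cost is exactly $\trace(C^TP^*C)$.

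I would then prove optimality over all of $\Phi$, rather than only over linear feedbacks, by an HJB-type verification. For any admissible $\varphi$, applying the generator of \eqref{LTI_sys_ito} to $V(x)=x^TP^*x$, substituting $\mathcal{R}(P^*)=0$, and completing the square in the control variable gives
\[
\mathcal{G}_\varphi V(x)=-r(x,\varphi)+(\varphi+K^*x)^TM(\varphi+K^*x)+\trace(C^TP^*C).
\]
As $M>0$, the quadratic term is nonnegative, so integrating against $\mu_\varphi$ and using stationarity produces $\mathbb{E}_{\mu_\varphi}[r(x,\varphi)]\ge\trace(C^TP^*C)$, with equality exactly when $\varphi(x)=-K^*x$. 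This simultaneously shows that the optimal policy is the linear feedback $-K^*x$, that no nonlinear admissible policy improves the cost, and that the minimum value is $\trace(C^TP^*C)$. Uniqueness of the positive definite ARE solution then follows by sandwiching: the displayed identity shows that each solution $P_i>0$ satisfies $P_K\ge P_i$ for all admissible $K$, with equality at $K_i=(R+\Sigma(P_i))^{-1}B^TP_i$, so $P_{K_i}=P_i$ (here $-\mathcal{L}_K^{-1}$ is positivity preserving, which follows from Lemma \ref{lemma_lyapunov} by a semidefinite limit); applying this with $i=1$, $K=K_2$ gives $P_2=P_{K_2}\ge P_1$, and symmetrically $P_1\ge P_2$, whence $P_1=P_2$.

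The step I expect to be most delicate is the verification for genuinely nonlinear admissible policies: justifying both that the martingale part has zero expectation and that $\int\mathcal{G}_\varphi V\,d\mu_\varphi=0$ requires the integrability supplied by admissibility (the finite second moment in condition (3) of the admissibility definition) together with the quadratic growth of $V$. Establishing mere existence of a positive definite solution of \eqref{algebraic_riccati_equation} is likewise nontrivial; I would obtain it either by citing \cite{1099303,doi:10.1137/0309016} or by showing that policy iteration generates a monotonically nonincreasing, positive definite sequence of cost matrices bounded below, whose limit solves the Riccati equation—an argument that again rests on the monotonicity of $\mathcal{L}_K^{-1}$ granted by Lemma \ref{lemma_lyapunov}.
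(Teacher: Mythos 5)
The paper does not actually prove this lemma: it is imported verbatim from \cite{1099303} and \cite{doi:10.1137/0309016}, so there is no in-paper argument to compare against. Your reconstruction is, in substance, the classical completion-of-squares verification argument from those references, and the algebra is correct. In particular, the identity $\mathcal{L}_K(P^*)=-(Q+K^TRK)+(K-K^*)^TM(K-K^*)$ checks out (it is exactly the manipulation the paper itself performs later, in vectorized form, to obtain \eqref{Kleinman_eqn_1}), the deduction that $\mathcal{A}(K^*)$ is Hurwitz from $\mathcal{L}_{K^*}(P^*)=-(Q+(K^*)^TRK^*)<0$ with $P^*>0$ is the same use of Lemma \ref{lemma_lyapunov} the paper makes elsewhere, the generator computation $\mathcal{G}_\varphi V(x)=-r(x,\varphi)+(\varphi+K^*x)^TM(\varphi+K^*x)+\trace(C^TP^*C)$ is correct, and the sandwiching argument for uniqueness of the positive definite ARE solution (via $P_K\geq P_i$ for every admissible $K$, with equality at $K_i$) is the standard one and is sound given that $-\mathcal{L}_K^{-1}$ preserves positive semidefiniteness for Hurwitz $\mathcal{A}(K)$.

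Two caveats, both of which you flag yourself. First, existence of a positive definite solution of \eqref{algebraic_riccati_equation} is nowhere established in your argument; everything else is conditional on it, so as written the proof is a verification theorem rather than a complete proof of the lemma, and the existence step must either be cited or carried out via the monotone policy-iteration construction you sketch (which is essentially \cite[Theorem 1]{1099303}). Second, the steps $\mathbb{E}_{\mu_\varphi}[\mathcal{G}_\varphi V]=0$ and the vanishing expectation of the stochastic integral for a general nonlinear admissible $\varphi$ require the second-moment condition in the definition of $\Phi$ together with the Lipschitz growth of $\varphi$; this is precisely the technical content supplied by the cited references (and by the ergodic-theorem machinery the paper invokes later for \eqref{ergodic_converge_1}), so acknowledging it without proof is acceptable but it is the one place where rigor is genuinely deferred. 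One minor refinement: the uniqueness of the minimizer only gives $\varphi(x)=-K^*x$ for $\mu_\varphi$-almost every $x$; upgrading this to identity of the policy uses that, under Assumption \ref{Assumption_positive_definite}, the invariant measure has full support and $\varphi$ is continuous.
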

For each linear state feedback control policy $\varphi_K(x) = -Kx$ with $\mathcal{A}(K)$ Hurwitz, by \cite[Theorem 3.1]{doi:10.1137/0309016}, the cost it induces is $$\mathbb{E}_{\mu_{\varphi_K}}[r(x,\varphi_K)]=\trace(C^TP_KC),$$
where $P_K\in\mathbb{S}^n$ is the unique positive definite solution of
\begin{equation}\label{algebraic_lyapunov_equation}
    \mathcal{L}_K(P_K) + Q + K^TRK = 0,
\end{equation}
and the subscript in $P_K$ is used to emphasize that $P_K$ is the solution of \eqref{algebraic_lyapunov_equation} associated with control gain $K$.

Define
\begin{equation*}
\begin{split}
G(P_K) &= \left[\begin{array}{c|c}
    [G(P_K)]_{xx} & [G(P_K)]^T_{ux} \\
    \hline
    [G(P_K)]_{ux} & [G(P_K)]_{uu}
\end{array}\right]\\
&\triangleq \left[\begin{array}{cc}
                    Q + A^TP_K + P_KA + \Pi(P_K) & P_KB \\
                    B^TP_K & R+\Sigma(P_K)
            \end{array}\right].
\end{split}
\end{equation*}
Then \eqref{algebraic_lyapunov_equation} is equivalent to
$$\mathcal{H}(G(P_K),K) = 0,$$
where
$\mathcal{H}(G(P_K),K) = \left[
    I_n , -K^T 
\right]G(P_K)\left[
    I_n, -K^T 
\right]^T$.
Since $P_K$ is positive definite if $\mathcal{A}(K)$ is Hurwitz, and $R$ is positive definite, $[G(P_K)]_{uu}$ is positive definite and invertible as long as $\mathcal{A}(K)$ is Hurwitz. The policy iteration to find the $P^*$ and $K^*$ in Lemma \ref{lemma_existence_of_optimal_control} is summarized in the following procedure.
\begin{proc}[Standard Policy Iteration]\label{procedure_policy_itration}
\ \par
\begin{enumerate}[label=\arabic*)]
    \item Choose a control gain $K_1$ with $\mathcal{A}(K_1)$ Hurwitz, and let $i=1$.
    \item\label{standard_PI_PE} (Policy evaluation) Evaluate the performance  of control gain $K_i$, by solving 
    \begin{equation}\label{RPI_PE}
       \mathcal{H}(G_i,K_i)=0
    \end{equation}
    for $P_i\in\mathbb{S}^n$, where $G_i=G(P_i)$.
    \item (Policy improvement) \label{standard_PI_PI}Get the improved policy by
    \begin{equation*}
        K_{i+1} = [G_{i}]_{uu}^{-1}[G_{i}]_{ux}.
    \end{equation*}
    \item Set $i\gets i+1$ and go back to Step \ref{standard_PI_PE}.
\end{enumerate}
\end{proc}
When there is only additive noise and one control-dependent noise term, i.e., $q_1 = 0$ and $q_2=1$ in system \eqref{LTI_sys_ito}, one can check that Procedure \ref{procedure_policy_itration} is actually a reformulation of the iterative method proposed in \cite[Theorem 1]{1099303}. By \cite[Remark 4), Remark 5)]{1099303}, the iterative method and its convergence properties in \cite[Theorem 1]{1099303} also apply to the general case involving multiple state-dependent and input-dependent noise terms as we considered here in system \eqref{LTI_sys_ito}. Thus we have the following convergence result for Procedure \ref{procedure_policy_itration}. 
\begin{theorem}[\cite{1099303}]\label{theorem_standard_PI}
    In Procedure \ref{procedure_policy_itration} we have:
    \begin{enumerate}[label=\roman*)]
        \item $\mathcal{A}(K_i)$ is Hurwitz for all $i=1,2,\cdots$.  \label{Hurwitz_matrix}
        \item $P_1\geq P_2 \geq P_3\geq \cdots \geq P^*$.\label{PI_monotone}
        \item $\lim_{i\rightarrow\infty}P_i=P^*$, $\lim_{i\rightarrow\infty}K_i = K^*$.\label{PI_converge}
    \end{enumerate}
\end{theorem}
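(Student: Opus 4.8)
The plan is to reconstruct the classical Kleinman/Wonham-type argument, organized around a single completion-of-squares identity for the generalized Riccati operator. Writing $\tilde K(X) = (R+\Sigma(X))^{-1}B^TX$ for the greedy gain associated with $X$ (so that the policy-improvement step of Procedure \ref{procedure_policy_itration} reads $K_{i+1}=\tilde K(P_i)$), a direct expansion using the definitions of $\mathcal{L}_K$, $\mathcal{R}$, $\Pi$ and $\Sigma$ shows that for every $X\in\mathbb{S}^n$ and every $K\in\mathbb{R}^{m\times n}$,
\begin{equation*}
\mathcal{L}_K(X)+Q+K^TRK=\mathcal{R}(X)+\big(K-\tilde K(X)\big)^T\big(R+\Sigma(X)\big)\big(K-\tilde K(X)\big).
\end{equation*}
This identity, together with the Lyapunov characterization in Lemma \ref{lemma_lyapunov} and the positivity of the inverse operator $-\mathcal{L}_K^{-1}$ for Hurwitz $\mathcal{A}(K)$ (i.e.\ $\mathcal{L}_K(Y)\le 0$ with $\mathcal{A}(K)$ Hurwitz implies $Y\ge 0$), is the single engine driving all three claims.

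For part \ref{Hurwitz_matrix} I would argue by induction. The base case holds since $K_1$ is chosen with $\mathcal{A}(K_1)$ Hurwitz, and then $\mathcal{L}_{K_1}(P_1)=-(Q+K_1^TRK_1)<0$ forces $P_1>0$ by Lemma \ref{lemma_lyapunov}. Assuming $\mathcal{A}(K_i)$ Hurwitz and $P_i>0$, evaluating the identity at $(X,K)=(P_i,K_i)$ and using the policy-evaluation equation \eqref{algebraic_lyapunov_equation} gives $\mathcal{R}(P_i)=-(K_i-K_{i+1})^T(R+\Sigma(P_i))(K_i-K_{i+1})\le 0$; evaluating it instead at $(X,K)=(P_i,K_{i+1})$ makes the quadratic remainder vanish, so that $\mathcal{L}_{K_{i+1}}(P_i)=\mathcal{R}(P_i)-Q-K_{i+1}^TRK_{i+1}\le -Q<0$. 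Thus $P_i>0$ is a strict Lyapunov certificate for $\mathcal{A}(K_{i+1})$, and Lemma \ref{lemma_lyapunov} yields $\mathcal{A}(K_{i+1})$ Hurwitz and $P_{i+1}>0$, closing the induction.

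For the monotonicity in part \ref{PI_monotone} I would compare Lyapunov equations under the fixed, now-Hurwitz gain $K_{i+1}$. Subtracting $\mathcal{L}_{K_{i+1}}(P_{i+1})=-Q-K_{i+1}^TRK_{i+1}$ from $\mathcal{L}_{K_{i+1}}(P_i)=\mathcal{R}(P_i)-Q-K_{i+1}^TRK_{i+1}$ gives $\mathcal{L}_{K_{i+1}}(P_i-P_{i+1})=\mathcal{R}(P_i)\le 0$, whence $P_i-P_{i+1}\ge 0$ by positivity of $-\mathcal{L}_{K_{i+1}}^{-1}$; evaluating the identity at $(P^*,K_{i+1})$ and using $\mathcal{R}(P^*)=0$ from \eqref{algebraic_riccati_equation} gives $\mathcal{L}_{K_{i+1}}(P^*-P_{i+1})=(K_{i+1}-K^*)^T(R+\Sigma(P^*))(K_{i+1}-K^*)\ge 0$, so $P_{i+1}-P^*\ge 0$ (the same step with $K_1$ in place of $K_{i+1}$ gives $P_1\ge P^*$). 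This establishes $P_1\ge P_2\ge\cdots\ge P^*$. Part \ref{PI_converge} then follows from monotone convergence: $\{P_i\}$ is nonincreasing and bounded below by $P^*>0$, hence converges entrywise to some $P_\infty\ge P^*$, and continuity of $\tilde K$ gives $K_i\to K_\infty=\tilde K(P_\infty)$. Passing to the limit in \eqref{algebraic_lyapunov_equation} yields $\mathcal{L}_{K_\infty}(P_\infty)+Q+K_\infty^TRK_\infty=0$; since $K_\infty=\tilde K(P_\infty)$ makes the remainder in the identity vanish, this reduces to $\mathcal{R}(P_\infty)=0$, and as $P_\infty>0$ while $P^*$ is the unique positive definite solution of \eqref{algebraic_riccati_equation} by Lemma \ref{lemma_existence_of_optimal_control}, we conclude $P_\infty=P^*$ and $K_\infty=K^*$.

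The routine algebra is the completion-of-squares identity; the genuinely delicate point, and the one I would scrutinize most, is the operator-theoretic input behind Lemma \ref{lemma_lyapunov} in the multiplicative-noise setting: namely that a single pair $(P,S)$ with $P>0$, $S>0$ and $\mathcal{L}_K(P)=-S$ certifies that $\mathcal{A}(K)$ is Hurwitz, and that $-\mathcal{L}_K^{-1}$ preserves positive semidefiniteness. In the deterministic LQR case these are immediate from $-\mathcal{L}_K^{-1}(S)=\int_0^\infty e^{(A-BK)^Tt}Se^{(A-BK)t}\,dt$, but here the presence of $\Pi$ and $\Sigma$ means the relevant semigroup is generated by $\mathcal{A}(K)$ on $\mathbb{S}^n$ rather than by $A-BK$ on $\mathbb{R}^n$; its positivity (resolvent-positivity of $\mathcal{L}_K$) is exactly what legitimizes the Lyapunov comparison arguments, and is the substance imported from \cite{doi:10.1137/0309016,1099303}.
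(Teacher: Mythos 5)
Your argument is correct, but be aware that the paper offers no proof of this theorem at all: it is imported from \cite{1099303} via the remark that Procedure 1 is a reformulation of the iterative scheme in \cite[Theorem 1]{1099303}, so there is nothing in the paper to compare against line by line. What you have written is, in substance, the classical Kleinman--Wonham argument underlying that citation, organized around the completion-of-squares identity (which expands correctly: both sides equal $Q+A^TX+XA+\Pi(X)+K^T(R+\Sigma(X))K-K^TB^TX-XBK$), and the induction for i), the two Lyapunov comparisons for ii), and the monotone-limit argument identifying $P_\infty$ with $P^*$ via the uniqueness clause of Lemma \ref{lemma_existence_of_optimal_control} are all sound. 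The two operator-theoretic facts you flag at the end are indeed the only nontrivial inputs, and both are consistent with how the paper itself uses Lemma \ref{lemma_lyapunov} elsewhere: the ``single certificate'' direction (one pair $P>0$, $S>0$ with $\mathcal{L}_K(P)=-S$ already implies $\mathcal{A}(K)$ Hurwitz) is exactly what is invoked at the end of the proof of Lemma \ref{lemma_stability}, and the order-preservation of $-\mathcal{L}_K^{-1}$ for Hurwitz $\mathcal{A}(K)$ is exactly what is invoked in the proof of Lemma \ref{lemma_boundedness} in passing from \eqref{key_inequality_3} to \eqref{key_inequality_4_1}. One small caution: Lemma \ref{lemma_lyapunov} as literally stated phrases the converse as ``for any positive definite $S$,'' whereas you need the (stronger-sounding but equivalent, and standard) ``for some'' version; this is precisely the Wonham result the paper is implicitly relying on, but it deserves the one-sentence acknowledgment you in fact give it.
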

Theorem \ref{theorem_standard_PI} guarantees that we can find the suboptimal approximations of $P^*$ and $K^*$ by implementing Procedure \ref{procedure_policy_itration} with sufficiently large number of iterations. However, the precise knowledge of system matrices is needed in Procedure \ref{procedure_policy_itration}, because $G_i$ depends on $A$, $B$, $\{D_j\}_{j=1}^{q_1}$, $\{F_k\}_{k=1}^{q_2}$, $Q$ and $R$. In this work, we assume that $A$, $B$, $\{D_j\}_{j=1}^{q_1}$, $\{F_k\}_{k=1}^{q_2}$, $C$, $Q$ and $R$ are all unknown, and propose a novel OLSbPI algorithm to directly estimate $G_i$ from input/state data, such that near-optimal solutions of the optimal stationary control problem can be found without explicitly identifying any system matrices in \eqref{LTI_sys_ito}. Due to the stochastic noises, the error arising for the estimation of each $G_i$ is unavoidable. Thus, to take into account the estimation errors, we propose the following procedure.
\begin{proc}[Robust Policy Iteration]\label{procedure_robust_policy_iteration}
\ \par
 \begin{enumerate}[label=\arabic*)]
        \item Choose a control gain $\hat{K}_1$ with $\mathcal{A}(\hat{K}_1)$ Hurwitz, and let $i=1$.
        \item\label{inexact_PI_PE} (Inexact policy evaluation) Obtain $\hat{G}_i = \Delta G_i + \tilde{G}_i$ (e.g., by approximately evaluating the performance of $\hat{K}_i$ directly from the input/state data, see Section \ref{section_OLSbPI}), where $\Delta G_i\in\mathbb{S}^{m+n}$ is a disturbance, $\tilde{G}_i=G(\tilde{P}_i)$, $\tilde{P}_i\in\mathbb{S}^n$ satisfies
        \begin{equation}\label{eRPI_PE}
            \mathcal{H}(\tilde{G}_i,\hat{K}_i)=0.
        \end{equation}
        \item (Policy update) Construct a new control gain
        \begin{equation*}
            \hat{K}_{i+1} = [\hat{G}_{i}]_{uu}^{-1}[\hat{G}_{i}]_{ux}.
        \end{equation*}
        \item Set $i\gets i+1$ and go back to Step \ref{inexact_PI_PE}.
    \end{enumerate}
\end{proc}

\section{Optimistic Least-squares-based Policy Iteration}\label{section_OLSbPI}
In this section, the optimistic least-squares-based policy iteration (OLSbPI) algorithm is proposed, which provides a concrete approach to construct estimation $\hat{G}_i$ in Procedure \ref{procedure_robust_policy_iteration} directly from input/state data. We firstly introduce the model-based optimistic policy iteration, then transform it into the model-free OLSbPI, Algrotihm \ref{algorithm_off_policy_stochastic}. The convergence analysis for Procedure 2 is given in Section \ref{section_Convergence_analysis} which yields the convergence of the OLSbPI algorithm. 

The optimistic policy iteration is based on the following result.
\begin{lemma}\label{Lemma_optimistic_PI}
For any control gain $K$ with $\mathcal{A}(K)$ Hurwitz, its associated $P_K$ satisfying
\eqref{algebraic_lyapunov_equation} is the unique stable equilibrium of linear dynamical system
\begin{equation}\label{ode_policy_evaluation_model}
    \dot{P} = \mathcal{H}(G(P),K), \quad P(0)\in\mathbb{S}^n
\end{equation}
\end{lemma}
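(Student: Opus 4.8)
The plan is to reduce the matrix-valued ODE \eqref{ode_policy_evaluation_model} to an affine linear flow whose system operator is exactly $\mathcal{A}(K)$, and then to read off uniqueness and stability from the Hurwitz hypothesis. First I would unpack the quadratic form $\mathcal{H}(G(P),K) = [I_n,\,-K^T]G(P)[I_n,\,-K^T]^T$ using the block decomposition of $G(P)$. Carrying out the multiplication gives
\[
\mathcal{H}(G(P),K) = [G(P)]_{xx} - K^T[G(P)]_{ux} - [G(P)]_{ux}^T K + K^T[G(P)]_{uu}K,
\]
and substituting $[G(P)]_{xx}=Q+A^TP+PA+\Pi(P)$, $[G(P)]_{ux}=B^TP$, $[G(P)]_{uu}=R+\Sigma(P)$ and regrouping the first-order terms as $A^TP+PA-K^TB^TP-PBK=(A-BK)^TP+P(A-BK)$ yields the key identity
\[
\mathcal{H}(G(P),K) = (A-BK)^TP + P(A-BK) + \Pi(P) + K^T\Sigma(P)K + Q + K^TRK = \mathcal{L}_K(P) + Q + K^TRK.
\]
This identity is the crux of the argument: it recognizes the right-hand side of \eqref{ode_policy_evaluation_model} as the generalized Lyapunov operator $\mathcal{L}_K$ plus a constant symmetric term.

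Next, I would observe that \eqref{ode_policy_evaluation_model} is therefore an \emph{affine} (inhomogeneous linear) matrix ODE. Applying $\vect(\cdot)$ and using the equivalence \eqref{lyapunov_operator_equivalent}, namely $\vect(\mathcal{L}_K(P)) = \mathcal{A}(K)\vect(P)$, transforms it into the standard linear system
\[
\frac{d}{dt}\vect(P) = \mathcal{A}(K)\vect(P) + \vect(Q + K^TRK).
\]
Since $\mathcal{A}(K)$ is Hurwitz by hypothesis, it is invertible, so this system has a unique equilibrium, characterized by $\mathcal{A}(K)\vect(P)+\vect(Q+K^TRK)=0$, i.e. by $\mathcal{L}_K(P)+Q+K^TRK=0$. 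This is precisely \eqref{algebraic_lyapunov_equation}, whose unique solution is $P_K$; hence $P_K$ is the unique equilibrium of the flow. Moreover, because the system operator $\mathcal{A}(K)$ is Hurwitz, every trajectory converges to $P_K$ exponentially from any initial condition, so the equilibrium is (globally asymptotically) stable.

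Finally, I would check consistency of restricting the flow to $\mathbb{S}^n$: since $G(P)$ is symmetric whenever $P$ is symmetric, $\mathcal{H}(G(P),K)$ is symmetric, so $\mathbb{S}^n$ is an invariant subspace of \eqref{ode_policy_evaluation_model}; as $P_K\in\mathbb{S}^n$ and the full-space equilibrium is unique, the restriction to symmetric initial data $P(0)\in\mathbb{S}^n$ inherits both uniqueness and stability. I do not anticipate a substantive obstacle in this lemma; the only care needed lies in the bookkeeping of the block expansion in the first step, and in noting that Hurwitzness of $\mathcal{A}(K)$ is exactly the hypothesis that upgrades ``equilibrium'' to ``unique stable equilibrium.''
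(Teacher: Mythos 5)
Your proposal is correct and follows essentially the same route as the paper: identify $\mathcal{H}(G(P),K)=\mathcal{L}_K(P)+Q+K^TRK$, vectorize via \eqref{lyapunov_operator_equivalent} to obtain $\vect(\dot P)=\mathcal{A}(K)\vect(P)+\vect(Q+K^TRK)$, and invoke the Hurwitz hypothesis for the unique stable equilibrium $P_K$. The only differences are that you spell out the block expansion and the invariance of $\mathbb{S}^n$, which the paper leaves implicit.
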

\begin{proof}
By definition, \eqref{ode_policy_evaluation_model} can be rewritten as
$$\dot{P} = \mathcal{L}_K(P) + Q + K^TRK.$$
Vectorizing this equation and using \eqref{lyapunov_operator_equivalent}, we have
\begin{equation}\label{ode_policy_evaluation_model_vector}
    \vect(\dot{P}) = \mathcal{A}(K)\vect(P) + \vect(Q+K^TRK), \quad P(0)\in\mathbb{S}^{n}.
\end{equation}
Since $\mathcal{A}(K)$ is Hurwitz, obviously this linear dynamical system admits a unique stable equilibrium $P_K$.
\end{proof}
Lemma \ref{Lemma_optimistic_PI} implies that in policy evaluation, instead of solving \eqref{algebraic_lyapunov_equation}, one can solve the ODE \eqref{ode_policy_evaluation_model}. This is actually the continuous-time version of the optimistic policy iteration in \cite{tsitsiklis2002convergence,bertsekas2011approximate} for finite state and action spaces (thus the name ``optimistic''). Now, we show how \eqref{ode_policy_evaluation_model} in Lemma \ref{Lemma_optimistic_PI} is utilized to construct the estimation $\hat{G}_i$ in Procedure \ref{procedure_robust_policy_iteration} directly from input/state data, together with the least squares method. Suppose a control policy
\begin{equation}\label{input_collect_data_sys}
    u = -\hat{K}_1x + \sigma_uy
\end{equation}
is applied to the system \eqref{LTI_sys_ito}, where $\mathcal{A}(\hat{K}_1)$ is Hurwitz, $\sigma_u>0$ is a constant, $y\in\mathbb{R}^m$ is the exploration noise generated by stochastic differential equation
\begin{equation}\label{noise_sys}
    \dd y = -y\dd t + \dd w_4,
\end{equation}
and $w_4\in\mathbb{R}^m$ is standard Brownian motion independent of $w_i$, $i=1,2,3$. The cascaded system consisting of \eqref{LTI_sys_ito} and \eqref{noise_sys} is
\begin{equation}\label{Cascaded_LTI_sys}
\begin{split}
    \dd v &= \mathscr{A}v\dd t + \sum_{j=1}^{q_1}\mathscr{D}_jv\dd w_{1,j} + \sum_{k=1}^{q_2}\mathscr{F}_kv\dd w_{2,k} + \mathscr{C}\dd w_5.
    \end{split}
\end{equation} 
where $v = [x^T,y^T]^T$, $w_5 = [w^T_3,w^T_4]^T$, $\mathscr{D}_j = D_j\oplus 0_{m\times m}$, $\mathscr{C}=C\oplus I_m$ and
\begin{equation*}
\begin{split}
    \mathscr{A} &= \left[\begin{array}{cc}
         A-B\hat{K}_1 & \sigma_uB \\
         0_{m\times n} & -I_m
    \end{array}\right],\ \mathscr{F}_k =
    \left[\begin{array}{cc}
         -F_k\hat{K}_1 & \sigma_uF_k \\
         0_{m\times n} & 0_{m\times m}
    \end{array}\right].
\end{split}
\end{equation*}
Since $\mathcal{A}(\hat{K}_1)$ is Hurwitz, system \eqref{LTI_sys_ito} with control policy $\varphi(x)=-\hat{K}_1x$ is mean square stable \cite[Definition 1.]{WILLEMS1976277}, in the absence of the noise term $w_3$ \cite{1099206}. Obviously, system \eqref{noise_sys} is also mean square stable, in the absence of the noise term $w_4$. Then by the stochastic small gain theorem \cite{DRAGAN1997243,1712310}, the cascaded system \eqref{Cascaded_LTI_sys} is mean square stable, in the absence of $w_5$. Thus control policy \eqref{input_collect_data_sys} is admissible \cite[Equation (6)]{1099206}, and induces a unique invariant probability measure $\mu_{v}$ on the Borel sets of $\mathbb{R}^n\times \mathbb{R}^m$ for the cascaded system \eqref{Cascaded_LTI_sys}.
\begin{assumption}\label{assumption_moments}
For some integer $p\geq 4$, $\mathbb{E}_{\mu_v}[\Vert v\Vert^p_2]<\infty$.
\end{assumption}
\begin{remark}\label{remark_moment_exist}
According to \cite[Lemma 4.1]{haussmann1974existence}, Assumption \ref{assumption_moments} holds if for \eqref{Cascaded_LTI_sys}
$$\left\Vert\int_0^{\infty} e^{t\mathscr{A}^T}\Gamma(D_1,\cdots,D_{q_1}, F_1,\cdots,F_{q_2})e^{t\mathscr{A}}\dd t\right\Vert_2<\frac{1}{p-1},$$
where $\Gamma(\cdot)$ is a continuous matrix-valued function of $\{D_j\}_{j=1}^{q_1}$ and $\{F_k\}_{k=1}^{q_2}$, and vanishes at the origin. Since $\mathscr{A}$ is Hurwitz, the above inequality is satisfied when $\{\Vert D_j\Vert_F\}_{j=1}^{q_1}$ and $\{\Vert F_k\Vert_F\}_{k=1}^{q_2}$ are small. When there is only additive noise, i.e., $D_j=0$ and $F_k=0$ for all $j$ and $k$, the above equation holds automatically and thus Assumption \ref{assumption_moments} holds for arbitrarily large $p$. See \cite{haussmann1974existence} for details.
\end{remark}
For any $P\in\mathbb{S}^{n}$, Ito's formula \cite[Lemma 3.2]{pavliotis2014stochastic} yields
\begin{equation}\label{ADP_noise_dependent_data_eqn_basic}
\begin{split}
    \dd (x^TPx) &= 2x^TP(Ax+Bu)\dd t + x^T\Pi(P)x\dd t \\
    &+ u^T\Sigma(P)u\dd t + \trace(C^TPC)\dd t + 2x^TP\dd w,
\end{split}
\end{equation}
where $\dd w = \sum_{j=1}^{q_1} D_jx\dd w_{1,j} + \sum_{k=1}^{q_2} F_ku\dd w_{2,k} + C\dd w_3$.
By vectorization, from \eqref{ADP_noise_dependent_data_eqn_basic} we have
\begin{equation}\label{ADP_dependent_noise_data_eqn_differential}
\begin{split}
    \tilde{z}\dd(\tilde{x}^T)\svec(P) &= \tilde{z}\tilde{z}^T\svec(\theta(P))\dd t \\
    &- \tilde{z}r(x,u)\dd t + 2\tilde{z}x^TP\dd w,
\end{split}
\end{equation}
where $z=[x^T,u^T,1]^T$, $\tilde{z} = \svec(zz^T)$, $\tilde{x} = \svec(xx^T)$ and $\theta(P) = G(P)\oplus\trace(C^TPC)$. Integrating \eqref{ADP_dependent_noise_data_eqn_differential} from 0 to $t_f>0$ yields,
\begin{equation}\label{data_eqn_proto}
\begin{split}
&\psi_{t_f}\svec(\theta(P)) = \zeta_{t_f}\svec(P) + \xi_{t_f} - \eta_{t_f},
\end{split}
\end{equation}
where
\begin{equation*}
\begin{split}
\psi_{t_f} &= \frac{1}{t_f}\int_0^{t_f}\tilde{z}\tilde{z}^T\dd t,\quad \zeta_{t_f} = \frac{1}{t_f}\int_0^{t_f}  \tilde{z}\dd(\tilde{x}^T),\\
\xi_{t_f} &= \frac{1}{t_f}\int_0^{t_f}\tilde{z}r(x,u)\dd t,\quad
\eta_{t_f} = \frac{1}{t_f}\int_0^{t_f}2\tilde{z}x^TP\dd w.
\end{split}
\end{equation*}
By Assumption \ref{assumption_moments}, Birkhoff ergodic theorem \cite[Theorem 16.14]{korlov2007theory}, and \cite[the lemma on page 530]{lee_kozin_1977}, the following relationships hold almost surely:
\begin{equation}\label{ergodic_converge_1}
\begin{split}
    \lim_{t_f\rightarrow\infty} \psi_{t_f} &= \Psi \triangleq \mathbb{E}_{\mu_{v}}[\tilde{z}\tilde{z}^T],\quad \lim_{t_f\rightarrow\infty} \eta_{t_f} = 0,\\ \lim_{t_f\rightarrow\infty} \xi_{t_f} &= \Xi \triangleq \mathbb{E}_{\mu_{v}}[\tilde{z}r(x,u)].
\end{split}
\end{equation}
From the fact that \eqref{data_eqn_proto} holds for any $P\in\mathbb{S}^{n}$, almost surely
\begin{equation}\label{ergodic_converge_2}
    \lim_{t_f\rightarrow\infty} \zeta_{t_f} = \mathcal{Z},
\end{equation}
where $\mathcal{Z}$ is a constant matrix.
\begin{assumption}\label{PE_assumption}
$\Psi$ is nonsingular.
\end{assumption}
Combining Assumption \ref{PE_assumption}, \eqref{data_eqn_proto}, \eqref{ergodic_converge_1} and \eqref{ergodic_converge_2} gives
\begin{equation}\label{exact_theta}
    \svec(\theta(P)) = \Psi^{-1}\left(\mathcal{Z}\svec(P) + \Xi\right).
\end{equation}
Since $G(P) = \mathcal{H}(\theta(P),0)$, \eqref{ode_policy_evaluation_model} is identical to the following dynamical system
\begin{equation}\label{ode_policy_evaluation_data_precise}
    \dot{P} = \mathcal{H}(\mathcal{H}(\svec^{-1}(\Psi^{-1}\left(\mathcal{Z}\svec(P) + \Xi\right)),0),K),
\end{equation}
where $P(0)\in\mathbb{S}^n$. The OLSbPI is presented in Algorithm \ref{algorithm_off_policy_stochastic}. By \eqref{ergodic_converge_1} and \eqref{ergodic_converge_2}, Line \ref{algorithm_policy_evaluation} of Algorithm \ref{algorithm_off_policy_stochastic} is an approximation of \eqref{ode_policy_evaluation_data_precise}, while Line \ref{algorithm_estimate_final_Q_1} of Algorithm \ref{algorithm_off_policy_stochastic} is an approximation of \eqref{exact_theta} with $P$ replaced by $\hat{P}(s_f)$. Thus the solution of the ODE at $s_f$ in Line \ref{algorithm_policy_evaluation} of Algorithm \ref{algorithm_off_policy_stochastic} is an estimation of $\tilde{P}_i$ satisfying \eqref{eRPI_PE}, while $\hat{\theta}_i$ given by Line \ref{algorithm_estimate_final_Q_1} of Algorithm \ref{algorithm_off_policy_stochastic} is an estimation of $\theta(\tilde{P}_i)$. By definition of $\theta(\cdot)$ in \eqref{ADP_dependent_noise_data_eqn_differential}, the estimation $\hat{G}_i$ of $\tilde{G}_i$ satisfying \eqref{eRPI_PE} is given by Line \ref{algorithm_estimate_final_Q_2} of Algorithm \ref{algorithm_off_policy_stochastic}. Notice that the same data matrices $\psi_{t_f}$, $\zeta_{t_f}$ and $\xi_{t_f}$ are reused for all iterations, thus OLSbPI is off-policy. In addition, only state $x$ and input $u$ appear in data matrices $\psi_{t_f}$, $\zeta_{t_f}$ and $\xi_{t_f}$. Therefore OLSbPI does not explicitly use the noise information, which is different from the methods proposed in \cite{7447723,Jiang2014,Tao2020Bio,Jiang2020Survey}.
\begin{remark}
Assumption \ref{PE_assumption} is in the spirit of persistent excitation condition in adaptive control \cite{astrom1995adaptive}. Similar assumptions are needed in other RL methods, see \cite{doi:10.1137/18M1214147,jiang2017robust,Dixon2018reinforcement,8169685,7447723,Jiang2014,Tao2020Bio}. Assumption \ref{PE_assumption} makes the data-based differential equation \eqref{ode_policy_evaluation_data_precise} equivalent to the model-based differential equation \eqref{ode_policy_evaluation_model}, which is a key component in the convergence analysis of the next section. 
\end{remark}
\begin{remark}\label{remark_explore}
The presence of exploration noise $y$ in \eqref{input_collect_data_sys} is necessary for Assumption \ref{PE_assumption} to hold. Otherwise $u$ will always be linearly dependent on $x$ and, as a result, $\Psi$ cannot be nonsingular. To see this, consider the case where both $x$ and $u$ are scalars, and $\sigma_u=0$ in \eqref{input_collect_data_sys}. By definition, $\tilde{z} = \left[x^2, \sqrt{2}\hat{K}_1x^2, \sqrt{2}x, \hat{K}_1^2x^2, \sqrt{2}\hat{K}_1x, 1 \right]^T$. Then it is easy to check that the first two rows of $\Psi = \mathbb{E}_{\mu_v}[\tilde{z}\tilde{z}^T]$ are linearly dependent. This is also true for the case where both $x$ and $u$ are vectors.
\end{remark}
\begin{algorithm}
\KwIn{Initial control gain $\hat{K}_1$ with $\mathcal{A}(\hat{K}_1)$ Hurwitz, Number of policy iterations $N$, Length of policy evaluation $s_f$, Length of rollout $t_f$, Exploration noise magnitude $\sigma_u$.}
Apply control policy \eqref{input_collect_data_sys} to system \eqref{LTI_sys_ito} to generate a trajectory of input/state data of length $t_f$\;
Construct data matrices $\psi_{t_f}$, $\zeta_{t_f}$ and $\xi_{t_f}$ defined in \eqref{data_eqn_proto}\;
\For{$i=1,\cdots,N-1$}{
$\hat{P}_i(0)\leftarrow 0$\label{algorithm_policy_evaluation_start}\;
Solving the following ODE on $[0,s_f]$:\\
$\dot{\hat{P}}_i = \mathcal{H}(\mathcal{H}(\svec^{-1}(\psi_{t_f}^\dagger(\zeta_{t_f}\svec(\hat{P}_i) + \xi_{t_f}),0),\hat{K_i})$\label{algorithm_policy_evaluation}\;
$\hat{\theta}_{i}\leftarrow\svec^{-1}(\psi_{t_f}^\dagger(\zeta_{t_f}\svec(\hat{P}_i(s_f))+ \xi_{t_f})$\;\label{algorithm_estimate_final_Q_1}
$\hat{G}_{i}\leftarrow \mathcal{H}(\hat{\theta}_{i},0)$\;\label{algorithm_estimate_final_Q_2}
$\hat{K}_{i+1}\leftarrow [\hat{G}_{i}]_{uu}^{-1}[\hat{G}_{i}]_{ux}$\;
}
\KwRet{$\hat{K}_N$.}
\caption{OLSbPI\label{algorithm_off_policy_stochastic}}
\end{algorithm}
\section{Convergence Analysis}\label{section_Convergence_analysis}
In this section, we first show in Corollary \ref{corollary_global} that whenever $\Vert\Delta G_i\Vert_\infty$ is small, $\tilde{P}_i$ in Procedure \ref{procedure_robust_policy_iteration} (the solution of \eqref{eRPI_PE}) is bounded, and, moreover, enters and stays in a small neighborhood of $P^*$. Then we show that $\Vert\Delta G_i\Vert_\infty$ in Algorithm \ref{algorithm_off_policy_stochastic} can be made small by choosing $s_f$ and $t_f$ large enough (Lemma \ref{Lemma_policy_evaluation_error_bounded}), which completes the convergence analysis of OLSbPI (Theorem \ref{theorem_stochastic_algorithm_convergence}).
\subsection{Convergence Analysis of Procedure \ref{procedure_robust_policy_iteration}}
For $P\in\mathbb{S}^n$ define
$$\mathscr{K}(P) = \mathscr{R}(P)^{-1}B^TP,\quad \mathscr{R}(P) = R + \Sigma(P).$$
Suppose in Procedure \ref{procedure_policy_itration}, $K_1 = \mathscr{K}(P_0)$ and $\mathcal{A}(K_1)$ is Hurwitz, where $P_0\in\mathbb{S}^n$. Such a $K_1$ always exists, for example, when $P_0$ is close to $P^*$ by Lemma \ref{math_fundamentals} in Appendix \ref{Appendix_auxiliary}. From Theorem \ref{theorem_standard_PI} and \eqref{lyapunov_operator_equivalent}, sequence $\{P_i\}_{i=0}^{\infty}$ generated by Procedure \ref{procedure_policy_itration} satisfies
\begin{equation}\label{Kleinman_nonlinear_system}
    \vect(P_{i+1}) = \mathcal{A}^{-1}(\mathscr{K}(P_i))\vect(-Q-\mathscr{K}(P_i)^TR\mathscr{K}(P_i)).
\end{equation}
\begin{lemma}\label{lemma_invertible_hurwitz}
There exists a $\delta_0>0$, such that $\mathscr{R}(P)$ is positive definite, and $\mathcal{A}(\mathscr{K}(P))$ is Hurwitz for all $P\in\bar{\mathcal{B}}_{\delta_0}(P^*)$.
\end{lemma}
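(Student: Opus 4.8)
The plan is to prove the lemma by a continuity (perturbation) argument anchored at $P^*$, exploiting that both the positive-definite cone and the set of Hurwitz matrices are open. First I would record the two facts that hold exactly at the center of the ball. By Lemma \ref{lemma_existence_of_optimal_control}, $P^*$ is positive definite, so $\Sigma(P^*)=\sum_{k=1}^{q_2}F_k^TP^*F_k$ is positive semidefinite and hence $\mathscr{R}(P^*)=R+\Sigma(P^*)$ is positive definite because $R$ is; moreover $\mathscr{K}(P^*)=(R+\Sigma(P^*))^{-1}B^TP^*=K^*$ and $\mathcal{A}(K^*)$ is Hurwitz. The lemma then asks me to propagate these two open conditions to a whole closed neighborhood of $P^*$.

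For the first claim, I would note that $P\mapsto\mathscr{R}(P)=R+\Sigma(P)$ is affine, hence continuous, and that the smallest eigenvalue $\lambda_{\min}(\cdot)$ of its symmetric part is continuous. Since $\lambda_{\min}(\mathscr{R}(P^*))>0$, there is a $\delta_1>0$ such that $\mathscr{R}(P)$ is positive definite for all $P$ in the open ball $\mathcal{B}_{\delta_1}(P^*)$; shrinking $\delta_1$ slightly ensures this on the closed ball $\bar{\mathcal{B}}_{\delta_1}(P^*)$. This already establishes invertibility of $\mathscr{R}(P)$ throughout that ball, which is what makes $\mathscr{K}(P)$ well defined there.

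For the second claim, on $\bar{\mathcal{B}}_{\delta_1}(P^*)$ matrix inversion is continuous on the set of invertible matrices, so $P\mapsto\mathscr{R}(P)^{-1}$ is continuous and therefore $\mathscr{K}(P)=\mathscr{R}(P)^{-1}B^TP$ is continuous with $\mathscr{K}(P^*)=K^*$. The map $Y\mapsto\mathcal{A}(Y)$ is polynomial (affine plus the quadratic term $\sum_{k=1}^{q_2}(F_kY)^T\otimes(F_kY)^T$), hence continuous, and the spectral abscissa $\max_i\mathrm{Re}\,\lambda_i(\cdot)$ is continuous, so the set of Hurwitz matrices is open. As $\mathcal{A}(K^*)$ is Hurwitz, there is $\varepsilon>0$ with $\mathcal{A}(Y)$ Hurwitz whenever $\Vert Y-K^*\Vert_F<\varepsilon$; by continuity of $\mathscr{K}$ at $P^*$, there is $\delta_2>0$ with $\Vert\mathscr{K}(P)-K^*\Vert_F<\varepsilon$ for all $P\in\bar{\mathcal{B}}_{\delta_2}(P^*)$ (again possibly after a harmless shrink to pass to the closed ball). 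Composing, $\mathcal{A}(\mathscr{K}(P))$ is Hurwitz on $\bar{\mathcal{B}}_{\delta_2}(P^*)$. Taking $\delta_0=\min\{\delta_1,\delta_2\}$ yields both conclusions simultaneously.

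I expect no serious obstacle here: the result is a chaining of standard openness/continuity facts, close to the perturbation statements already invoked as Lemma \ref{math_fundamentals} in Appendix \ref{Appendix_auxiliary}. The only point requiring care is the order of the argument, since $\mathscr{K}$ is defined only where $\mathscr{R}(P)$ is invertible: positive definiteness (hence invertibility) of $\mathscr{R}$ must be secured on the ball \emph{before} continuity of $\mathscr{K}$ and the composition with $\mathcal{A}(\cdot)$ can be invoked. This is why I treat the two claims sequentially and intersect the radii only at the end.
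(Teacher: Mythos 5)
Your proof is correct and follows essentially the same route as the paper, which compresses the entire argument into a one-line appeal to the continuity facts of Lemma \ref{math_fundamentals} (continuity of eigenvalues and of matrix inversion) anchored at $\mathscr{R}(P^*)>0$ and $\mathcal{A}(K^*)$ Hurwitz. Your elaboration, including the care about defining $\mathscr{K}$ only where $\mathscr{R}$ is invertible, is a faithful expansion of what the paper leaves implicit.
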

\begin{proof}
Since $R>0$, $P^*>0$ and $\mathcal{A}(\mathscr{K}(P^*))$ is Hurwitz by Lemma \ref{lemma_existence_of_optimal_control}, the lemma is proved using Lemma \ref{math_fundamentals} in Appendix \ref{Appendix_auxiliary}.
\end{proof}
The next lemma shows that $P^*$ is a locally exponentially stable equilibrium \cite[Definition 2.5]{JIANG201858} of nonlinear system \eqref{Kleinman_nonlinear_system}.
\begin{lemma}\label{lemma_local_exponentail_stable}
    For any $\sigma<1$, there exists a $\delta_1(\sigma)\in(0,\delta_0]$ and a $C_1(\delta_1)>0$, where $\delta_0$ is defined in Lemma \ref{lemma_invertible_hurwitz}, such that for any $P_i\in\mathcal{B}_{\delta_1}(P^*)$,
    \begin{enumerate}[label=\roman*)]
    \item\label{feasibility} $\mathcal{A}(\mathscr{K}(P_i))$ is Hurwitz, $[G_{i}]_{uu}$ is invertible.
    \item\label{quadratic_rate} Procedure \ref{procedure_policy_itration} has a local quadratic convergence rate, i.e.,
    $$\Vert P_{i+1}-P^* \Vert_F \leq C_1\Vert P_i-P^*\Vert_F^2.$$
    \item\label{local_exp} \eqref{Kleinman_nonlinear_system} is locally exponentially stable at $P^*$
    $$\Vert P_{i+1}-P^* \Vert_F \leq \sigma\Vert P_i-P^*\Vert_F.$$
    \end{enumerate}
\end{lemma}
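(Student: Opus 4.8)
The plan is to recognize Procedure \ref{procedure_policy_itration}, written as the map \eqref{Kleinman_nonlinear_system}, as \emph{Newton's method} applied to the generalized algebraic Riccati equation $\mathcal{R}(P)=0$ of \eqref{algebraic_riccati_equation}, and then invoke the textbook local quadratic convergence of Newton's iteration on the compact ball supplied by Lemma \ref{lemma_invertible_hurwitz}. Part \ref{feasibility} is immediate: taking $\delta_1\le\delta_0$ and noting that $[G_i]_{uu}=R+\Sigma(P_i)=\mathscr{R}(P_i)$, Lemma \ref{lemma_invertible_hurwitz} directly yields that $\mathcal{A}(\mathscr{K}(P_i))$ is Hurwitz and $[G_i]_{uu}=\mathscr{R}(P_i)$ is positive definite, hence invertible, for every $P_i\in\mathcal{B}_{\delta_1}(P^*)$.

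For Part \ref{quadratic_rate} I would first compute the Fréchet derivative of $\mathcal{R}$. Differentiating $\mathcal{R}(P)=Q+A^TP+PA+\Pi(P)-PB\mathscr{R}(P)^{-1}B^TP$ in a direction $\Delta\in\mathbb{S}^n$, using $\frac{\mathrm{d}}{\mathrm{d}t}\mathscr{R}(P+t\Delta)^{-1}|_{t=0}=-\mathscr{R}(P)^{-1}\Sigma(\Delta)\mathscr{R}(P)^{-1}$ and the identity $B^TP=\mathscr{R}(P)\mathscr{K}(P)$, the cross terms collapse to give
$$\mathcal{R}'(P)[\Delta]=(A-B\mathscr{K}(P))^T\Delta+\Delta(A-B\mathscr{K}(P))+\Pi(\Delta)+\mathscr{K}(P)^T\Sigma(\Delta)\mathscr{K}(P)=\mathcal{L}_{\mathscr{K}(P)}(\Delta).$$
I would then verify that the Newton update $P_{i+1}=P_i-\mathcal{R}'(P_i)^{-1}[\mathcal{R}(P_i)]$ coincides with \eqref{Kleinman_nonlinear_system}: substituting $\mathcal{R}'(P_i)=\mathcal{L}_{K_{i+1}}$ with $K_{i+1}=\mathscr{K}(P_i)$ and simplifying $\mathcal{L}_{K_{i+1}}(P_i)-\mathcal{R}(P_i)=-Q-K_{i+1}^TRK_{i+1}$ reduces the update exactly to the Lyapunov equation \eqref{algebraic_lyapunov_equation} for $P_{i+1}$, i.e.\ the policy-evaluation/improvement step. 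Since $\mathcal{R}'(P^*)=\mathcal{L}_{K^*}$ and $\mathcal{A}(K^*)$ is Hurwitz by Lemma \ref{lemma_existence_of_optimal_control}, the discussion following \eqref{lyapunov_operator_equivalent} gives that $\mathcal{R}'(P^*)$ is invertible, and by Lemma \ref{lemma_invertible_hurwitz} the same holds for $\mathcal{R}'(P)=\mathcal{L}_{\mathscr{K}(P)}$ on all of $\bar{\mathcal{B}}_{\delta_0}(P^*)$; moreover $\mathcal{R}$ is $C^2$ there because $\mathscr{R}(P)^{-1}$ is smooth wherever $\mathscr{R}(P)>0$. A second-order Taylor expansion of $\mathcal{R}$ about $P_i$ evaluated at $P^*$, together with $\mathcal{R}(P^*)=0$, then yields the Newton error identity
$$P_{i+1}-P^*=\mathcal{R}'(P_i)^{-1}\!\int_0^1 (1-s)\,\mathcal{R}''\!\big(P_i+s(P^*-P_i)\big)[P_i-P^*,\,P_i-P^*]\,\mathrm{d}s,$$
and bounding $\|\mathcal{R}'(P)^{-1}\|$ and $\|\mathcal{R}''(P)\|$ over the compact set $\bar{\mathcal{B}}_{\delta_0}(P^*)$ gives a finite constant $C_1$ with $\|P_{i+1}-P^*\|_F\le C_1\|P_i-P^*\|_F^2$.

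Part \ref{local_exp} then follows by shrinking the ball: from $\|P_{i+1}-P^*\|_F\le C_1\|P_i-P^*\|_F^2\le C_1\delta_1\|P_i-P^*\|_F$, the choice $\delta_1=\min\{\delta_0,\sigma/C_1\}$ forces the contraction factor $C_1\delta_1\le\sigma$; this also renders $\mathcal{B}_{\delta_1}(P^*)$ forward invariant, so the estimate persists along the whole trajectory, which is what local exponential stability of \eqref{Kleinman_nonlinear_system} at $P^*$ requires. The main obstacle is the derivative computation and the algebraic verification that $\mathcal{R}'(P)=\mathcal{L}_{\mathscr{K}(P)}$ and that the induced Newton step is precisely the policy-evaluation Lyapunov equation; once this equivalence is established, the remainder is the standard Newton argument, with the only additional care being the finite-dimensional operator bounds controlling the Taylor remainder over the compact ball.
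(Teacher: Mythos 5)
Your proof is correct, but it takes a genuinely different route from the paper's. You recast the iteration \eqref{Kleinman_nonlinear_system} as Newton's method for $\mathcal{R}(P)=0$, verify that the Fr\'echet derivative satisfies $\mathcal{R}'(P)[\Delta]=\mathcal{L}_{\mathscr{K}(P)}(\Delta)$ and that the Newton step reproduces the policy-evaluation Lyapunov equation, and then invoke the standard Taylor-remainder error identity, bounding $\Vert\mathcal{R}'(P)^{-1}\Vert$ and $\Vert\mathcal{R}''(P)\Vert$ over the compact (and convex, which is what keeps the Taylor segment inside) ball $\bar{\mathcal{B}}_{\delta_0}(P^*)$. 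The paper instead proceeds by a direct algebraic manipulation: it completes the square in the Riccati equation to obtain $\mathcal{A}(K_{i+1})\vect(P^*)=\vect(-Q-K_{i+1}^TRK_{i+1})+\vect((K_{i+1}-K^*)^T\mathscr{R}(P^*)(K_{i+1}-K^*))$, subtracts this from \eqref{Kleinman_nonlinear_system} to get the exact error identity \eqref{Kleinman_difference}, and then bounds the right-hand side using a Lipschitz estimate for $\mathscr{K}(\cdot)$ and the boundedness of $\Vert\mathcal{A}^{-1}(\mathscr{K}(\cdot))\Vert_F$ on the compact ball. The two arguments are essentially the same theorem seen from two angles: the paper's explicit residual $(K_{i+1}-K^*)^T\mathscr{R}(P^*)(K_{i+1}-K^*)$ is a closed form of your integral remainder. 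What the paper's version buys is elementarity and explicit constants ($C_1$ in terms of $L_{\mathscr{K}}$ and $\Vert\mathscr{R}(P^*)\Vert_F$) without ever defining or bounding a second Fr\'echet derivative; what your version buys is conceptual clarity (the Kleinman-type iteration \emph{is} Newton's method, which also explains why quadratic convergence should be expected) at the cost of the derivative computation and the smoothness bookkeeping. Parts \ref{feasibility} and \ref{local_exp} are handled identically in both: Lemma \ref{lemma_invertible_hurwitz} for the former, and shrinking $\delta_1$ so that $C_1\delta_1\le\sigma$ (with the resulting forward invariance of the ball) for the latter.
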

\begin{proof}
Item \ref{feasibility} follows directly from Lemma \ref{lemma_invertible_hurwitz}. Subtracting $$K_{i+1}^TB^TP^*+P^*BK_{i+1}-K_{i+1}^T(R+\Sigma(P^*))K_{i+1}$$ from both sides of \eqref{algebraic_riccati_equation} yields
\begin{equation}\label{Kleinman_eqn_1}
\begin{split}
    &\mathcal{A}(K_{i+1})\vect(P^*) = \vect(-Q-K_{i+1}^TRK_{i+1}) \\
    &+ \vect((K_{i+1}-K^*)^T\mathscr{R}(P^*)(K_{i+1}-K^*)), 
\end{split}
\end{equation}
Since $K_{i+1}=\mathscr{K}(P_i)$, subtracting \eqref{Kleinman_eqn_1} from \eqref{Kleinman_nonlinear_system} yields
\begin{align}
    \vect(P_{i+1}-P^*) &= -\mathcal{A}^{-1}(\mathscr{K}(P_i))\vect((\mathscr{K}(P_i)-\mathscr{K}(P^*))^T \nonumber\\
    &\times\mathscr{R}(P^*)(\mathscr{K}(P_i)-\mathscr{K}(P^*))).    \label{Kleinman_difference}
\end{align}
By \eqref{linear_perturbation} in Appendix \ref{Appendix_auxiliary}, we have for all $P_i\in\bar{\mathcal{B}}_{\delta_0}(P^*)$,
\begin{equation*}
    \begin{split}
        &\Vert\mathscr{K}(P_i)-\mathscr{K}(P^*)\Vert_F
        \leq \Vert \mathscr{R}^{-1}(P_i)\Vert_F\Vert B^T\Vert_F\Vert P_i-P^*\Vert_F \\
        &+ \Vert \mathscr{R}^{-1}(P_i)\Vert_F \Vert \Sigma(P_i-P^*)\Vert_F \Vert\mathscr{R}^{-1}(P^*)\Vert_F\Vert B^TP^*\Vert_F \\
        &\leq L_{\mathscr{K}}\Vert P_i-P^*\Vert_F,
    \end{split}
\end{equation*}
where 
$$L_{\mathscr{K}} \triangleq L_I\left(\Vert B^T\Vert_F + \Vert\mathscr{R}^{-1}(P^*)\Vert_F\Vert B^TP^*\Vert_F + \sum_{j=1}^{q_1}\Vert D_j\Vert^2_F \right) $$ and $L_I>0$ is an upper bound of continuous function $\Vert \mathscr{R}^{-1}(\cdot)\Vert_F$ on compact set $\bar{\mathcal{B}}_{\delta_0}(P^*)$ (see Lemma \ref{math_fundamentals} in Appendix \ref{Appendix_auxiliary}). Thus, \eqref{Kleinman_difference} implies for all $P_i\in\bar{\mathcal{B}}_{\delta_0}(P^*)$,
\begin{align}
    \Vert P_{i+1}-P^*\Vert_F 
    &\leq \Vert\mathcal{A}^{-1}(\mathscr{K}(P_i))\Vert_F\Vert\mathscr{R}(P^*)\Vert_F L^2_{\mathscr{K}}\Vert P_i-P^*\Vert_F^2 \nonumber \\
    &\leq C_1\Vert P_i-P^*\Vert_F^2, \nonumber
\end{align}
where again the fact that continuous function $\Vert\mathcal{A}^{-1}(\mathscr{K}(P_i))\Vert_F$ is upper bounded on compact set $\bar{\mathcal{B}}_{\delta_0}(P^*)$ is used. This proves Item \ref{quadratic_rate}. Choosing $\delta_1\in(0,\delta_0]$ small enough such that
$C_1\Vert P_i-P^*\Vert_F\leq \sigma$ proves Item \ref{local_exp}. 
\end{proof}
Suppose in Procedure \ref{procedure_robust_policy_iteration},
$\hat{K}_1 = \mathscr{K}(\tilde{P}_0)$ and $\Delta G_0 = 0$,
where $\tilde{P}_0\in\mathbb{S}^n$ is chosen such that $\mathcal{A}(\hat{K}_1)$ is Hurwitz. If $[\hat{G}_{i}]_{uu}$ is invertible and $\mathcal{A}(\hat{K}_i)$ is Hurwitz for all $i$ (which will be proved later), the sequence $\{\tilde{P}_i\}_{i=0}^{\infty}$ generated by Procedure \ref{procedure_robust_policy_iteration} satisfies
\begin{equation}\label{Robust_Policy_Iteration_nonlinear_system}
\begin{split}
    \vect(\tilde{P}_{i+1})  &= \mathcal{A}^{-1}(\mathscr{K}(\tilde{P}_i))\vect(-Q-\mathscr{K}(\tilde{P}_i)^TR\mathscr{K}(\tilde{P}_i)) \\
    &+ \mathcal{E}(\tilde{G}_i,\Delta G_i),
\end{split}
\end{equation}
where
\begin{equation*}
\begin{split}
    \mathcal{E}(\tilde{G}_i,\Delta G_i) &= \mathcal{A}^{-1}(\mathscr{K}(\tilde{P}_i))\vect(Q+\mathscr{K}(\tilde{P}_i)^TR\mathscr{K}(\tilde{P}_i)) \\
    &- \mathcal{A}^{-1}(\hat{K}_{i+1})\vect(Q+\hat{K}_{i+1}^TR\hat{K}_{i+1}).
\end{split}
\end{equation*}
Based on Lemma \ref{lemma_local_exponentail_stable}, the following theorem is obtained, which states that nonlinear system \eqref{Robust_Policy_Iteration_nonlinear_system} is locally input-to-state stable \cite[Definition 2.1]{JIANG201858} at $P^*$, if $\Delta G_i$ is regarded as the input. The proof is given in Appendix \ref{section_proof_local_ISS}. 
\begin{theorem}\label{theorem_local_ISS}
For $\sigma$ and its associated $\delta_1$ in Lemma \ref{lemma_local_exponentail_stable}, there exists $\delta_3(\delta_1)>0$, such that if $\Vert\Delta G\Vert_\infty<\delta_3$, $\tilde{P}_0\in\mathcal{B}_{\delta_1}(P^*)$,
\begin{enumerate}[label=(\roman*)]
    \item\label{theorem_local_ISS_item_well_defined} $[\hat{G}_{i}]_{uu}$ is invertible and $\mathcal{A}(\hat{K}_i)$ is Hurwitz, $\forall i\in\mathbb{Z}_+$, $i>0$;
    \item\label{theorem_local_ISS_item_local_ISS} the following local input-to-state stability holds:
    $$\Vert \tilde{P}_i - P^*\Vert_F\leq \beta(\Vert\tilde{P}_0-P^*\Vert_F,i) + \gamma(\Vert \Delta G\Vert_\infty),$$
    where for any $q\in\mathbb{R}$, $\beta(q,i) = \sigma^iq$, $\gamma(q) = qC_2/(1-\sigma)$ and $C_2(\delta_3)>0$.
    \item\label{theorem_local_ISS_item_K_bounded} $\Vert\hat{K}_i\Vert_F<\kappa^b_1$ for some $\kappa^b_1>0$, $\forall i\in\mathbb{Z}_+$, $i>0$;
    \item\label{theorem_local_ISS__item_converging_input_converging_state} $\lim_{i\rightarrow\infty} \Vert\Delta G_i\Vert_F = 0$ implies $\lim_{i\rightarrow\infty} \Vert \tilde{P}_i-P^* \Vert_F=0$.
\end{enumerate}
\end{theorem}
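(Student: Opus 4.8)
The plan is to read \eqref{Robust_Policy_Iteration_nonlinear_system} as a disturbed version of the nominal exact iteration \eqref{Kleinman_nonlinear_system}. Writing the right-hand side of \eqref{Kleinman_nonlinear_system} as $f$, item \ref{local_exp} of Lemma \ref{lemma_local_exponentail_stable} tells us that $\Vert f(P)-P^*\Vert_F\le\sigma\Vert P-P^*\Vert_F$ on $\mathcal{B}_{\delta_1}(P^*)$, while \eqref{Robust_Policy_Iteration_nonlinear_system} reads $\vect(\tilde P_{i+1})=f(\tilde P_i)+\mathcal{E}(\tilde G_i,\Delta G_i)$. The whole theorem then collapses to two facts: (a) on a fixed compact neighbourhood of $P^*$ and for $\Vert\Delta G\Vert_\infty$ below a threshold $\delta_3$, the inexact step is well defined and $\Vert\mathcal{E}(\tilde G_i,\Delta G_i)\Vert_F\le C_2\Vert\Delta G_i\Vert_F$ for a constant $C_2$; and (b) the resulting scalar recursion $\Vert\tilde P_{i+1}-P^*\Vert_F\le\sigma\Vert\tilde P_i-P^*\Vert_F+C_2\Vert\Delta G\Vert_\infty$ gives local ISS by a forward-invariance-plus-telescoping argument.

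First I would work on the compact set $\bar{\mathcal{B}}_{\delta_0}(P^*)$ of Lemma \ref{lemma_invertible_hurwitz}, where $\mathscr{R}(P)>0$ and $\mathcal{A}(\mathscr{K}(P))$ is Hurwitz. Since $[\tilde G_i]_{uu}=\mathscr{R}(\tilde P_i)$ and $[\tilde G_i]_{ux}=B^T\tilde P_i$, the exact improvement of $\tilde G_i$ equals $\mathscr{K}(\tilde P_i)$, whereas $\hat K_{i+1}$ is the improvement of the perturbed $\hat G_i=\tilde G_i+\Delta G_i$; in particular $\mathcal{E}=0$ whenever $\Delta G_i=0$. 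Because $[\tilde G_i]_{uu}$ is positive definite and uniformly bounded away from singularity on the compact set, for $\Vert\Delta G_i\Vert_F$ small the block $[\hat G_i]_{uu}$ stays invertible (matrix-inverse perturbation, Lemma \ref{math_fundamentals}), which gives item \ref{theorem_local_ISS_item_well_defined}; the map $G\mapsto[G]_{uu}^{-1}[G]_{ux}$ being Lipschitz there, we get $\Vert\hat K_{i+1}-\mathscr{K}(\tilde P_i)\Vert_F\le c\Vert\Delta G_i\Vert_F$. Continuity of the spectrum (Lemma \ref{math_fundamentals} together with \eqref{linear_perturbation}) keeps $\mathcal{A}(\hat K_{i+1})$ Hurwitz, and since $K\mapsto\mathcal{A}^{-1}(K)\vect(Q+K^TRK)$ is Lipschitz on the compact set of admissible gains, the difference defining $\mathcal{E}$ is bounded by $C_2\Vert\Delta G_i\Vert_F$.

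With this bound, I would run the induction. Fix $\delta_3>0$ so small that $C_2\delta_3\le(1-\sigma)\delta_1$ and that all continuity thresholds of the previous step hold for every $P\in\bar{\mathcal{B}}_{\delta_1}(P^*)$. For $\tilde P_0\in\mathcal{B}_{\delta_1}(P^*)$, the recursion gives $\Vert\tilde P_{i+1}-P^*\Vert_F\le\sigma\delta_1+C_2\delta_3\le\delta_1$, so $\bar{\mathcal{B}}_{\delta_1}(P^*)$ is forward invariant; this legitimizes using the contraction at each step and delivers items \ref{theorem_local_ISS_item_well_defined} and \ref{theorem_local_ISS_item_K_bounded} (for the latter, $\mathscr{K}(\cdot)$ is bounded on the compact ball and $\hat K_i$ lies within $c\delta_3$ of it). Telescoping yields $\Vert\tilde P_i-P^*\Vert_F\le\sigma^i\Vert\tilde P_0-P^*\Vert_F+\frac{C_2}{1-\sigma}\Vert\Delta G\Vert_\infty$, i.e. item \ref{theorem_local_ISS_item_local_ISS} with the stated $\beta$ and $\gamma$. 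Item \ref{theorem_local_ISS__item_converging_input_converging_state} is the converging-input--converging-state property of $a_{i+1}\le\sigma a_i+b_i$ with $b_i\to 0$: splitting the induced sum at a large index $N$ makes its tail at most $\varepsilon/2$, while the prefix is annihilated by $\sigma^{i-N}$.

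The hard part will be fact (a)---the uniform linear bound on $\mathcal{E}$. Its difficulty is structural: invertibility of $[\hat G_i]_{uu}$, Hurwitzness of $\mathcal{A}(\hat K_{i+1})$, and the estimate $\Vert\mathcal{E}\Vert_F\le C_2\Vert\Delta G_i\Vert_F$ are only available while $\tilde P_i$ stays in the compact neighbourhood, yet that confinement is itself the ISS conclusion. These properties and the forward invariance must therefore be proved together by a single induction on $i$, with $\delta_3$ chosen once so that every Lipschitz and spectral-continuity threshold holds simultaneously over the fixed ball $\bar{\mathcal{B}}_{\delta_1}(P^*)$. Obtaining a constant $C_2$ genuinely independent of $i$ is precisely what forces the replacement of the moving, iterate-dependent estimates by the uniform bounds furnished by Lemma \ref{math_fundamentals} on the fixed compact set.
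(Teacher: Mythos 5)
Your proposal is correct and follows essentially the same route as the paper: the paper likewise establishes (on the fixed compact set $\bar{\mathcal{B}}_{\delta_0}(P^*)$) the well-posedness of the perturbed step and the uniform bound $\Vert\mathcal{E}(\tilde{G}_i,\Delta G_i)\Vert\leq C_2\Vert\Delta G_i\Vert_F$ via the matrix-inverse perturbation estimate and continuity of the spectrum (its Lemmas \ref{lemma_stability_ISS} and \ref{lemma_total_error_bounded_ISS}), then chooses $\delta_3$ with $C_2\delta_3<(1-\sigma)\delta_1$ and runs the same forward-invariance induction on $\mathcal{B}_{\delta_1}(P^*)$ followed by unrolling the recursion $\Vert\tilde{P}_{i+1}-P^*\Vert_F\leq\sigma\Vert\tilde{P}_i-P^*\Vert_F+C_2\Vert\Delta G\Vert_\infty$. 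Your treatment of items \ref{theorem_local_ISS_item_K_bounded} and \ref{theorem_local_ISS__item_converging_input_converging_state} (the gain-perturbation bound and the standard converging-input--converging-state argument) also matches the paper's.
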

Intuitively, Theorem \ref{theorem_local_ISS} implies that in Procedure \ref{procedure_robust_policy_iteration} if $\tilde{P}_0$ is close to $P^*$ (thus $\hat{K}_1$ is close to $K^*$), and the disturbance input $\Delta G$ is bounded and not too large, then the cost of the generated control policy $\hat{K}_i$ is also bounded, and will ultimately be no larger than a constant proportional to the $l^\infty$-norm of the disturbance. The smaller the disturbance is, the better the ultimately generated policy is. In other words, Procedure \ref{procedure_robust_policy_iteration} is not sensitive to small errors when the initial condition is in a neighbourhood of the optimal solution.

The next corollary removes the restrictive assumption that $\tilde P_0$ (resp. $\hat K_1$) needs to be in a neighbourhood of $P^*$  (resp. $K^*$). Its proof can be found in Appendix \ref{section_proof_corollary}.
\begin{corollary}\label{corollary_global}
For any $\epsilon>0$ and any given control gain $\hat{K}_1$ with $\mathcal{A}(\hat{K}_1)$ Hurwitz, there exist $\delta_4(\epsilon,\hat{K}_1)>0$, $\kappa^a_2(\delta_4)>0$, $\kappa^b_2(\delta_4)>0$, such that as long as
$\Vert \Delta G \Vert_\infty < \delta_4$, $[\hat{G}_{i}]_{uu}$ is invertible, $\mathcal{A}(\hat{K}_i)$ is Hurwitz, $\Vert \tilde{P}_i\Vert_F<\kappa^a_2$, $\Vert \hat{K}_i\Vert_F<\kappa^b_2$, $\forall i\in\mathbb{Z}_+$, $i>0$ and $\limsup_{i\rightarrow\infty} \Vert \tilde{P}_i-P^* \Vert_F<\epsilon$. If in addition $\lim_{i\rightarrow\infty} \Vert\Delta G_i\Vert_F = 0$, then $\lim_{i\rightarrow\infty} \Vert \tilde{P}_i-P^* \Vert_F=0$.
\end{corollary}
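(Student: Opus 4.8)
The plan is to obtain this semiglobal result by combining the \emph{global} convergence of the nominal (undisturbed) policy iteration with the \emph{local} input-to-state stability of Theorem~\ref{theorem_local_ISS}, through a finite-horizon shadowing argument. The starting observation is that when $\Delta G_i\equiv 0$ the error term $\mathcal{E}(\tilde{G}_i,\Delta G_i)$ in \eqref{Robust_Policy_Iteration_nonlinear_system} vanishes, because then $\hat{K}_{i+1}=\mathscr{K}(\tilde{P}_i)$ and the two terms defining $\mathcal{E}$ cancel; the recursion thus collapses to \eqref{Kleinman_nonlinear_system}, i.e.\ to the standard policy iteration of Procedure~\ref{procedure_policy_itration} started from the admissible gain $\mathscr{K}(\tilde{P}_0)$. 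By Theorem~\ref{theorem_standard_PI}, this nominal sequence $\{P_i\}$ is feasible at every step and satisfies $P_i\to P^*$. Hence there is a finite index $N_0$ with $P_{N_0}\in\mathcal{B}_{\delta_1/2}(P^*)$, where $\delta_1$ is the radius of the local region in Lemma~\ref{lemma_local_exponentail_stable} and Theorem~\ref{theorem_local_ISS}.

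First I would establish that the disturbed iterates shadow this nominal trajectory over the finite horizon $0\le i\le N_0$. The right-hand side of \eqref{Robust_Policy_Iteration_nonlinear_system} is continuous in $(\tilde{P}_i,\Delta G_i)$ on the open feasible set where $\mathscr{R}(\tilde{P}_i)>0$ and $\mathcal{A}(\hat{K}_{i+1})$ is Hurwitz, and it reduces to the nominal map at $\Delta G_i=0$. Since the finite nominal segment $\{P_0,\dots,P_{N_0}\}$ is a compact subset of the interior of this feasible set, there is a uniform tube around it on which the map is well-defined and Lipschitz. A discrete Gr\"onwall-type induction on $i$ then shows that, for a sufficiently small threshold $\delta_4>0$, the condition $\Vert\Delta G\Vert_\infty<\delta_4$ keeps every $\tilde{P}_i$ feasible and within $\delta_1/2$ of $P_i$ for all $0\le i\le N_0$; in particular $\tilde{P}_{N_0}\in\mathcal{B}_{\delta_1}(P^*)$. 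This $\delta_4$ necessarily depends on $\hat{K}_1$ (equivalently $\tilde{P}_0$), since both $N_0$ and the nominal segment do, matching the dependence $\delta_4(\epsilon,\hat{K}_1)$ in the statement.

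With $\tilde{P}_{N_0}\in\mathcal{B}_{\delta_1}(P^*)$ secured, I would restart the analysis at index $N_0$ and invoke a time-shifted Theorem~\ref{theorem_local_ISS} (requiring also $\delta_4\le\delta_3$). Its item~(i) gives feasibility of every subsequent step, item~(iii) gives $\Vert\hat{K}_i\Vert_F<\kappa^b_1$, and item~(ii) gives $\Vert\tilde{P}_i-P^*\Vert_F\le\sigma^{\,i-N_0}\Vert\tilde{P}_{N_0}-P^*\Vert_F+\gamma(\Vert\Delta G\Vert_\infty)$ for $i\ge N_0$, so that $\limsup_{i\to\infty}\Vert\tilde{P}_i-P^*\Vert_F\le\gamma(\Vert\Delta G\Vert_\infty)$. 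Since $\gamma$ is increasing, shrinking $\delta_4$ so that $\gamma(\delta_4)<\epsilon$ yields $\limsup_{i\to\infty}\Vert\tilde{P}_i-P^*\Vert_F<\epsilon$. The uniform bounds $\kappa^a_2$ on $\Vert\tilde{P}_i\Vert_F$ and $\kappa^b_2$ on $\Vert\hat{K}_i\Vert_F$ then follow by combining the finite-horizon tube bound ($i\le N_0$) with the ISS bound ($i>N_0$). Finally, if $\Delta G_i\to 0$, then, the tail lying in the local region, item~(iv) of Theorem~\ref{theorem_local_ISS} gives $\tilde{P}_i\to P^*$.

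I expect the main obstacle to be the finite-horizon shadowing of the second paragraph, where two requirements must be met simultaneously: the perturbed recursion must stay \emph{well-defined} — feasibility of $\mathscr{R}(\tilde{P}_i)>0$ and Hurwitzness of $\mathcal{A}(\hat{K}_{i+1})$ is needed even to form $\hat{K}_{i+1}$ — and the accumulated deviation from $\{P_i\}$ must remain controlled. Both rest on the fact that feasibility and closeness are open conditions and that the nominal segment is compact, so a single $\delta_4$ suffices; the delicate point is that the Lipschitz constant of the map degrades near the boundary of the feasible set, so the tube radius and the per-step error amplification must be balanced against each other when fixing $\delta_4$.
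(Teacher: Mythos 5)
Your proposal is correct and shares the paper's two-phase architecture: drive the perturbed iterates into the basin $\mathcal{B}_{\delta_1}(P^*)$ of Theorem~\ref{theorem_local_ISS} in finitely many steps, then let the local ISS result govern the tail (the paper states this reduction explicitly just before Lemma~\ref{lemma_converge_to_neighbourhood}, with $\delta_4<\min(\gamma^{-1}(\epsilon),\delta_3)$). Where you genuinely differ is in how the finite phase is executed. You shadow the single nominal PI trajectory $\{P_i\}$ launched from $\hat{K}_1$ over a fixed horizon $N_0$, using openness of the feasibility conditions on a compact tube plus a discrete Gr\"onwall induction; this is the more elementary route. The paper instead first proves a priori uniform bounds on \emph{all} reachable perturbed iterates (Lemmas~\ref{lemma_stability} and~\ref{lemma_boundedness}: under $\Vert\Delta G_i\Vert_F<(1+i^2)^{-1}a_i$ it derives the matrix inequality $\tilde{P}_{i+1}<\bigl(1+\epsilon_{2,i}/(1-\epsilon_{2,i})\bigr)\tilde{P}_i$ and uses convergence of $\prod_i(1+i^{-2})$ to get $\tilde{P}_i\leq 6\tilde{P}_1$ and the explicit gain bound $C_0$), which yields a compact set $\mathcal{M}$ containing every possible $\tilde{P}_i$; compactness then gives a single $k_0$ such that exact PI restarted from \emph{any} point of $\mathcal{M}$ lands within $\delta_1/2$ of $P^*$ after $k_0$ steps, and the Gr\"onwall comparison in \eqref{continuous_dependence} is run only over those last $k_0$ steps rather than from the start. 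The paper's route buys explicit, closed-form constants ($\kappa^a_2$ proportional to $\Vert\tilde{P}_1\Vert_F$, $\kappa^b_2=C_0$, threshold $b_{\bar{i}}$) and decouples the admissible disturbance size from the transient length; your route reaches the same qualitative conclusion with less machinery, at the price of constants that degrade like $(1+L)^{N_0}$ with an initial-condition-dependent horizon. Both arguments correctly isolate the delicate point, namely keeping the recursion well-posed (invertibility of $[\hat{G}_i]_{uu}$ and Hurwitzness of $\mathcal{A}(\hat{K}_{i+1})$) while the deviation accumulates, so I see no gap.
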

In Corollary \ref{corollary_global}, $\hat{K}_1$ can be any control gain whose $\mathcal{A}(\hat{K}_1)$ is Hurwitz, which is different from Theorem \ref{theorem_local_ISS}. When there is no error, i.e. $\Delta G_i \equiv 0$ for all $i=1,2,\cdots$, Corollary \ref{corollary_global} implies the convergence result of Procedure \ref{procedure_policy_itration} in Theorem \ref{theorem_standard_PI}.
\subsection{Convergence Analysis of OLSbPI}
Now we use Corollary \ref{corollary_global} to derive a convergent result of the OLSbPI algorithm. For given $\hat{K}_1$, let $\mathcal{K}$ denote the set of control gains (including $\hat{K}_1$) generated by Procedure \ref{procedure_robust_policy_iteration} with all possible $\{\Delta G_i\}_{i=1}^\infty$ satisfying $\Vert \Delta G\Vert_\infty<\delta_4$, where $\delta_4$ is the one in Corollary \ref{corollary_global}. The following result is firstly derived, whose proof can be found in Appendix \ref{section_proof_lemma_algorithm}.
\begin{lemma}\label{Lemma_policy_evaluation_error_bounded}
Under Assumptions \ref{assumption_moments} and \ref{PE_assumption}, there exist $s_0>0$ and $t_0>0$, such that for any $s_f\geq s_0$ and any $t_f\geq t_0$, $\hat{K}_i\in\mathcal{K}$ implies $\Vert\Delta G_i\Vert_F< \delta_4$, almost surely.
\end{lemma}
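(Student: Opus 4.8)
The plan is to bound the estimation error $\Vert\Delta G_i\Vert_F=\Vert\hat{G}_i-\tilde{G}_i\Vert_F$ by isolating the two distinct sources of inexactness in Algorithm \ref{algorithm_off_policy_stochastic}: the finite rollout length $t_f$, which makes the data matrices $\psi_{t_f},\zeta_{t_f},\xi_{t_f}$ differ from their ergodic limits $\Psi,\mathcal{Z},\Xi$, and the finite policy-evaluation horizon $s_f$, which makes the ODE solution $\hat{P}_i(s_f)$ differ from its equilibrium $\tilde{P}_i$. Since $\hat{G}_i=\mathcal{H}(\hat\theta_i,0)$ and $\tilde{G}_i=G(\tilde P_i)=\mathcal{H}(\theta(\tilde P_i),0)$ with $\mathcal{H}(\cdot,0)$ linear and bounded, it suffices to bound $\Vert\hat\theta_i-\theta(\tilde P_i)\Vert_F$ by a fixed constant times a controllable quantity. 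Writing $\theta(\tilde P_i)$ through the exact identity \eqref{exact_theta} and $\hat\theta_i$ through Line \ref{algorithm_estimate_final_Q_1}, I would insert the auxiliary quantity obtained by feeding $\hat P_i(s_f)$ into the exact-limit map $\svec^{-1}(\Psi^{-1}(\mathcal{Z}\svec(\cdot)+\Xi))$, and split by the triangle inequality into a pure data-perturbation term (same argument $\hat P_i(s_f)$, different coefficients) and a term $\Vert\Psi^{-1}\mathcal{Z}\Vert_F\,\Vert\hat P_i(s_f)-\tilde P_i\Vert_F$ governed by the horizon error.

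For the horizon error I would further compare $\hat P_i(s_f)$ with the solution $\bar P_i(s_f)$ of the same ODE driven by the \emph{exact} limits $(\Psi,\mathcal{Z},\Xi)$; by \eqref{exact_theta} and Lemma \ref{Lemma_optimistic_PI} this exact-limit ODE coincides with the model-based evaluation ODE \eqref{ode_policy_evaluation_model}, whose vectorized form \eqref{ode_policy_evaluation_model_vector} is affine with Hurwitz matrix $\mathcal{A}(\hat K_i)$ and equilibrium $\tilde P_i$. Hence $\Vert\bar P_i(s_f)-\tilde P_i\Vert_F\leq\Vert e^{s_f\mathcal{A}(\hat K_i)}\Vert_2\,\Vert\tilde P_i\Vert_F$ decays exponentially in $s_f$, while $\Vert\hat P_i(s_f)-\bar P_i(s_f)\Vert_F$ is the gap between two affine ODEs with nearby coefficients over the compact interval $[0,s_f]$ and is controlled, via Gronwall's inequality, by the data error $\Vert\psi_{t_f}^\dagger-\Psi^{-1}\Vert_F+\Vert\zeta_{t_f}-\mathcal{Z}\Vert_F+\Vert\xi_{t_f}-\Xi\Vert_F$ times a factor depending only on $s_f$ and on a bound for $\hat K_i$. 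The pure data-perturbation term is handled identically, using that $\Psi$ is nonsingular (Assumption \ref{PE_assumption}), so that $\psi_{t_f}^\dagger\to\Psi^{-1}$ whenever $\psi_{t_f}\to\Psi$.

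With this decomposition the proof is completed in two stages. First, I would fix $s_0$ so large that the exponentially decaying term is below $\delta_4/2$ (after the fixed linear constants) for every $s_f\geq s_0$; this must hold \emph{uniformly} over all $\hat K_i\in\mathcal{K}$, i.e. I need a single pair $(M,\lambda)$ with $\Vert e^{t\mathcal{A}(\hat K)}\Vert_2\leq Me^{-\lambda t}$ for all $\hat K\in\mathcal{K}$, together with the uniform bound $\Vert\tilde P_i\Vert_F<\kappa^a_2$ from Corollary \ref{corollary_global}. Second, with $s_0$ now fixed, I would invoke the almost-sure convergences \eqref{ergodic_converge_1} and \eqref{ergodic_converge_2} (valid under Assumption \ref{assumption_moments}) to pick, on the relevant probability-one event, a $t_0$ so large that for all $t_f\geq t_0$ the data error pushes the remaining terms below $\delta_4/2$; here the Gronwall factor is finite because $s_0$ is already fixed, and uniform over $\hat K_i\in\mathcal{K}$ because $\mathcal{K}$ is bounded. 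Crucially, the data matrices are generated once and do not depend on $i$, so a single $t_0$ serves all iterates.

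The main obstacle is the uniform exponential stability required in the first stage: although Corollary \ref{corollary_global} guarantees that every $\hat K\in\mathcal{K}$ is bounded and renders $\mathcal{A}(\hat K)$ Hurwitz, a uniform decay rate $(M,\lambda)$ does not follow from pointwise Hurwitzness plus boundedness alone. I would resolve it by exhibiting a compact set $\mathcal{S}\supseteq\mathcal{K}$ on which $\mathcal{A}(\cdot)$ is Hurwitz throughout, so that the continuous map $\hat K\mapsto\max_j\mathrm{Re}\,\lambda_j(\mathcal{A}(\hat K))$ attains a strictly negative maximum on $\mathcal{S}$; continuity of the matrix exponential and of the associated Lyapunov solution over $\mathcal{S}$ (Lemma \ref{math_fundamentals} in Appendix \ref{Appendix_auxiliary}) then yields uniform $M$ and $\lambda$. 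Constructing such an $\mathcal{S}$ is the delicate point, and it is where the uniform bounds $\Vert\hat K_i\Vert_F<\kappa^b_2$ and $\Vert\tilde P_i\Vert_F<\kappa^a_2$ of Corollary \ref{corollary_global}, together with the fact that each $\tilde P_i$ solves \eqref{eRPI_PE} with $Q+\hat K_i^TR\hat K_i\geq Q>0$ on the right-hand side (keeping the closure away from the boundary of the Hurwitz region), are essential.
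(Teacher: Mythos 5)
Your overall architecture matches the paper's: split $\Vert\Delta G_i\Vert_F$ into a data-perturbation term (finite $t_f$) and a policy-evaluation-horizon term (finite $s_f$), use the a.s.\ ergodic limits \eqref{ergodic_converge_1}--\eqref{ergodic_converge_2} and the nonsingularity of $\Psi$ for the former, and obtain uniformity over $\hat K_i\in\mathcal{K}$ by embedding $\{\mathcal{A}(K):K\in\mathcal{K}\}$ in a compact set of Hurwitz matrices. Your construction of that compact set (closure of $\mathcal{K}$ stays Hurwitz because $\tilde P_i$ solves \eqref{eRPI_PE} with right-hand side bounded below by $Q>0$ and $\Vert\tilde P_i\Vert_F$ is bounded, so the iterates cannot approach the boundary of the Hurwitz region) is exactly the contradiction argument the paper uses to build $\mathcal{K}_1$ and its $\delta_5$-neighborhood $\mathcal{K}_2$, and Lemma \ref{uniform_exp_stab_over_compact_set} supplies the uniform pair $(M,\lambda)$ you ask for.

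There is, however, a genuine gap in how you treat the horizon error. You compare the data-driven trajectory $\hat P_i(\cdot)$ with the exact-limit trajectory $\bar P_i(\cdot)$ by a plain Gronwall estimate over $[0,s_f]$, yielding a bound of the form $(\text{data error})\cdot C(s_f)$ with $C(s_f)$ growing (typically like $e^{Ls_f}$) in $s_f$; you then claim this factor is ``finite because $s_0$ is already fixed.'' But the lemma must hold for \emph{every} $s_f\geq s_0$ with a \emph{single} $t_0$, so you would need $\sup_{s_f\geq s_0}C(s_f)\cdot(\text{data error})<\delta_4/2$, which fails for any fixed $t_0$ when $C(s_f)\to\infty$. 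The quantities $s_0$ and $s_f$ are not interchangeable here. The paper avoids this by first showing that for $t_f$ large the perturbed system matrix $\mathcal{T}^1_{t_f,i}$ itself lands in the compact Hurwitz set $\bar{\mathcal{K}}_2$ almost surely, so the data-driven ODE has its own exponentially stable equilibrium $\mathring P_i$; it then bounds $\Vert\mathring P_i-\tilde P_i\Vert_F$ by an $s_f$-independent equilibrium-perturbation estimate (via \eqref{linear_perturbation}) and $\Vert\hat P_i(s_f)-\mathring P_i\Vert_F$ by uniform exponential decay. To repair your argument you would either adopt this equilibrium comparison, or replace plain Gronwall by a variation-of-constants bound through the exponentially stable nominal semigroup together with a time-uniform bound on $\hat P_i(\cdot)$ --- which again forces you to first establish Hurwitzness of the perturbed matrix, the step your proposal omits.
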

With Lemma \ref{Lemma_policy_evaluation_error_bounded} at hand, we are ready to prove the convergence of the OLSbPI.
\begin{theorem}\label{theorem_stochastic_algorithm_convergence}
In Algorithm \ref{algorithm_off_policy_stochastic}, under Assumptions \ref{assumption_moments} and \ref{PE_assumption}, for any initial control gain $\hat{K}_1$ with $\mathcal{A}(\hat{K}_1)$ Hurwitz and any $\epsilon>0$, there exist $s_0>0$ and $t_0>0$, such that for any $s_f\geq s_0$ and $t_f\geq t_0$, almost surely, $\limsup_{N\rightarrow\infty}\Vert\tilde{P}_N - P^*\Vert_F<\epsilon$
and $\mathcal{A}(\hat{K}_i)$ is Hurwitz for all $i=1,\cdots,N$, where $\tilde{P}_N$ is the unique solution of \eqref{algebraic_lyapunov_equation} for $\hat{K}_N$.
\end{theorem}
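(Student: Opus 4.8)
The plan is to recognize Algorithm \ref{algorithm_off_policy_stochastic} as a concrete realization of the Robust Policy Iteration (Procedure \ref{procedure_robust_policy_iteration}) and then to chain Corollary \ref{corollary_global} with Lemma \ref{Lemma_policy_evaluation_error_bounded}. As explained in Section \ref{section_OLSbPI}, the gain $\hat K_{i+1}$ produced by Algorithm \ref{algorithm_off_policy_stochastic} is exactly the policy update of Procedure \ref{procedure_robust_policy_iteration} applied to the estimate $\hat G_i=\Delta G_i+\tilde G_i$, where $\tilde P_i$ solves \eqref{eRPI_PE} (equivalently \eqref{algebraic_lyapunov_equation}) for $\hat K_i$ and $\Delta G_i$ is the estimation error caused by using finite $s_f$ and $t_f$. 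Hence proving the theorem reduces to showing that, almost surely, the disturbance sequence $\{\Delta G_i\}$ stays below the threshold $\delta_4$ supplied by Corollary \ref{corollary_global}, after which all conclusions of that corollary transfer directly to the algorithm.

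First I would fix $\epsilon>0$ and the admissible $\hat K_1$ and apply Corollary \ref{corollary_global} to obtain the constants $\delta_4,\kappa^a_2,\kappa^b_2$ together with the reachable set $\mathcal K$ of gains generated by Procedure \ref{procedure_robust_policy_iteration} under all error sequences with $\Vert\Delta G\Vert_\infty<\delta_4$. Since that corollary guarantees $\Vert\hat K_i\Vert_F<\kappa^b_2$ for every such gain, $\mathcal K$ is bounded and its closure is compact; this compactness is what will make the error estimate uniform. With this $\delta_4$ fixed, I would invoke Lemma \ref{Lemma_policy_evaluation_error_bounded} to produce $s_0,t_0$ such that, for all $s_f\ge s_0$ and $t_f\ge t_0$, membership $\hat K_i\in\mathcal K$ forces $\Vert\Delta G_i\Vert_F<\delta_4$ on a single probability-one event. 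The essential point is that this almost-sure event is the common event on which the ergodic averages $\psi_{t_f},\zeta_{t_f},\xi_{t_f}$ converge to $\Psi,\mathcal Z,\Xi$ (cf. \eqref{ergodic_converge_1}--\eqref{ergodic_converge_2}), which is where Assumptions \ref{assumption_moments} and \ref{PE_assumption} enter; it does not depend on the particular element of the (uncountable) set $\mathcal K$.

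The heart of the argument is then an induction on $i$, carried out pathwise on that event. The base case $\hat K_1\in\mathcal K$ holds by definition, and Lemma \ref{Lemma_policy_evaluation_error_bounded} gives $\Vert\Delta G_1\Vert_F<\delta_4$. For the inductive step, if $\hat K_1,\dots,\hat K_i\in\mathcal K$ with $\Vert\Delta G_j\Vert_F<\delta_4$ for $j\le i$, then $\hat K_1,\dots,\hat K_{i+1}$ is the beginning of a legitimate trajectory of Procedure \ref{procedure_robust_policy_iteration} driven by errors bounded by $\delta_4$, so $\hat K_{i+1}\in\mathcal K$ by the very definition of $\mathcal K$; applying Lemma \ref{Lemma_policy_evaluation_error_bounded} to $\hat K_{i+1}$ then yields $\Vert\Delta G_{i+1}\Vert_F<\delta_4$. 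Thus, almost surely, $\hat K_i\in\mathcal K$ and $\Vert\Delta G_i\Vert_F<\delta_4$ for every $i$, whence $\Vert\Delta G\Vert_\infty<\delta_4$. Invoking Corollary \ref{corollary_global} on this event delivers both conclusions simultaneously: $\mathcal A(\hat K_i)$ is Hurwitz for all $i$, and $\limsup_{N\to\infty}\Vert\tilde P_N-P^*\Vert_F<\epsilon$, with $\tilde P_N$ the unique solution of \eqref{algebraic_lyapunov_equation} for $\hat K_N$.

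I expect the main obstacle to be the apparent circularity between Lemma \ref{Lemma_policy_evaluation_error_bounded} and Corollary \ref{corollary_global}: the error bound needs $\hat K_i$ to lie in the bounded set $\mathcal K$, while the boundedness of the gains is itself a consequence of the error bound. The induction above is exactly what breaks this loop, but it is valid only if the almost-sure error estimate is \emph{uniform} over $\mathcal K$ rather than a separate full-measure event for each gain; otherwise one would be intersecting uncountably many such events, and moreover each realized $\hat K_i$ is random, so no fixed-gain statement would apply to it. I would therefore read Lemma \ref{Lemma_policy_evaluation_error_bounded} as a uniform bound tied to the single convergence event of the data matrices (which the compactness of the closure of $\mathcal K$ together with continuity of $\Delta G_i$ in $\hat K_i$ should justify), and I would note the minor point that the per-$i$ strict inequalities $\Vert\Delta G_i\Vert_F<\delta_4$ must be consolidated into the supremum bound $\Vert\Delta G\Vert_\infty<\delta_4$ required by Corollary \ref{corollary_global}, which is available because the uniform margin from compactness keeps the supremum strictly below $\delta_4$.
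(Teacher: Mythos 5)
Your proposal is correct and takes essentially the same route as the paper's own proof: fix $\delta_4$ and the reachable set $\mathcal{K}$ via Corollary \ref{corollary_global}, obtain $s_0,t_0$ from Lemma \ref{Lemma_policy_evaluation_error_bounded}, and induct on $i$ to show $\hat{K}_i\in\mathcal{K}$ and $\Vert\Delta G_i\Vert_F<\delta_4$ almost surely for all $i$, then conclude with Corollary \ref{corollary_global}. Your added remarks on the uniformity of the single full-measure event over $\mathcal{K}$ merely make explicit what the statement of Lemma \ref{Lemma_policy_evaluation_error_bounded} already encodes, so no further comparison is needed.
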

\begin{proof}
Since $\hat{K}_1\in\mathcal{K}$, Lemma \ref{Lemma_policy_evaluation_error_bounded} implies $\Vert\Delta G_1\Vert_F<\delta_4$ almost surely. By definition, $\hat{K}_2\in\mathcal{K}$. Thus $\Vert\Delta G_i\Vert_F<\delta_4, i=1,2,\cdots$ almost surely by mathematical induction. Then Corollary \ref{corollary_global} completes the proof.
\end{proof}
\begin{remark}
The optimal solution $P^*$ depends on the system matrices $\{D_j\}_{j=1}^{q_1}$, $\{F_k\}_{k=1}^{q_2}$ in the multiplicative noises. The change in $\{D_j\}_{j=1}^{q_1}$ or $\{F_k\}_{k=1}^{q_2}$ causes the change in $P^*$, then by Theorem \ref{theorem_stochastic_algorithm_convergence} the near-optimal solution found by OLSbPI algorithm changes accordingly. Thus the proposed OLSbPI algorithm is robust to the stochastic noises. This is an advantage not possessed by the off-policy RL algorithms in \cite{JIANG20122699,pang2020robust}.
\end{remark}
\section{Numerical Example}\label{section_simulation}
We consider a deterministic linear time-invariant system model of the triple inverted pendulum proposed in \cite{furut1984attitude}, and assume that it is perturbed by both multiplicative and additive noise\footnote{The code is available at \url{https://github.com/bo-pang/OLSbPI}}. Then the perturbed triple inverted pendulum can be described by system \eqref{LTI_sys_ito} with matrices $A\in\mathbb{R}^{6\times 6}$ and $B\in\mathbb{R}^{6\times 2}$ given in \cite[Section 3]{furut1984attitude}, $C=0.1I_6$ and
\begin{equation*}
\resizebox{\hsize}{!}{
    $(D_1)_{j,k} = \begin{cases}
               0.01, & \text{if }j=k=6,\\
               0, & \text{otherwise.}
            \end{cases},
    (F_1)_{j,k} = \begin{cases}
               0.01, & \text{if }j=4,k=1,\\
               0, & \text{otherwise.}
            \end{cases}.$
}
\end{equation*}
It is assumed that the parameters of all the system matrices are unknown, but an initial control gain
$$\hat{K}_1 = \left[\begin{array}{cccccc}
-9.44  & -3.11  & -1.2 &  -3.11 &  -1.31 &  -0.58\\
  -32.5 &  -11.51  &   -3.87  &  -10.72 &  -4.41 &  -2.01
\end{array}\right]$$
with $\mathcal{A}(\hat{K}_1)$ Hurwitz is available. Then the proposed OLSbPI algorithm is applied to find a near-optimal solution of the optimal stationary control problem with weighting matrices $Q=I_6$ and $R=I_2$. In the OLSbPI algorithm, we firstly apply the admissible control policy \eqref{input_collect_data_sys} with $\sigma_u = 100$ to the system to generate a trajectory of input/state data of length $t_f = 510$. Then data matrices $\psi_{t_f}$, $\zeta_{t_f}$ and $\xi_{t_f}$ are computed and policy evaluation step Line \ref{algorithm_policy_evaluation} in Algorithm \ref{algorithm_off_policy_stochastic} is implemented with $s_f=100$. The algorithm is terminated after $N=10$ iterations. The learning processes are illustrated in Fig. \ref{triple_inverted_pendulum}.
\begin{figure}[!htb]
    \centering
    \includegraphics[width=0.9\linewidth]{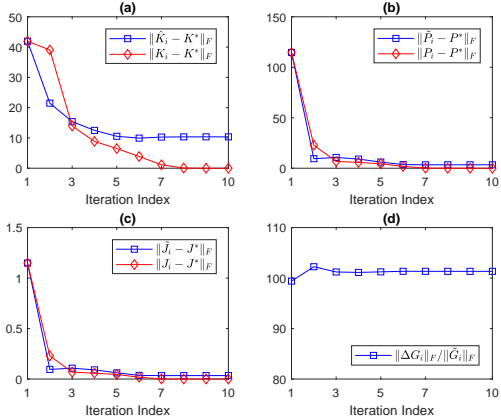}
    \caption{Learning processes of the OLSbPI algorithm on the triple inverted pendulum example.}\label{triple_inverted_pendulum}
\end{figure}
The trajectories of the norms of the differences between the control gain $\hat{K}_i$ generated by OLSbPI, its associated $\tilde{P}_i$, the true cost $\tilde{J}_i = \trace(C^T\tilde{P}_iC)$ it induced and the optimal values $K^*$, $P^*$ and $J^*$ are shown in Fig. \ref{triple_inverted_pendulum}-(a), Fig. \ref{triple_inverted_pendulum}-(b) and Fig. \ref{triple_inverted_pendulum}-(c), respectively. By Theorem \ref{theorem_standard_PI}, we run the model-based Procedure \ref{procedure_policy_itration} with $K_1=\hat{K}_1$ for a sufficiently large number of iterations, and use the results in the last iteration as the optimal values $K^*$, $P^*$ and $J^*$. The trajectories of the norms of the differences between the control gain $K_i$ generated by Procedure \ref{procedure_policy_itration}, its associated $P_i$, the true cost $J_i = \trace(C^TP_iC)$ it induced and the optimal values $K^*$, $P^*$ and $J^*$ are also drawn in Fig. \ref{triple_inverted_pendulum}. Fig. \ref{triple_inverted_pendulum}-(d) shows the trajectory of the estimation error $\Delta G_i$ relative to the true value $\tilde{G}_i$. One can see that although the estimation error $\tilde{G}_i$, caused by the unmeasurable stochastic noise in the system dynamics, distorts the trajectory generated by OLSbPI from the precise trajectory generated by the model-based Procedure \ref{procedure_policy_itration}, the OLSbPI still successfully finds a control policy with near-optimal cost. This is consistent with the convergence results we obtained in Corollary \ref{corollary_global} and Theorem \ref{theorem_stochastic_algorithm_convergence}.

The state trajectories of the closed-loop systems during the data collection and after the algorithm terminates has been shown in Fig. \ref{State_in_data_collection} and Fig. \ref{State_after_learning}, respectively. It can be observed in Fig. \ref{State_after_learning} that the state trajectories after the learning (the solid red lines) have smaller overshoot than those before the learning (the dashed blue lines).

\begin{figure}[!htb]
    \centering
    \includegraphics[width=\linewidth]{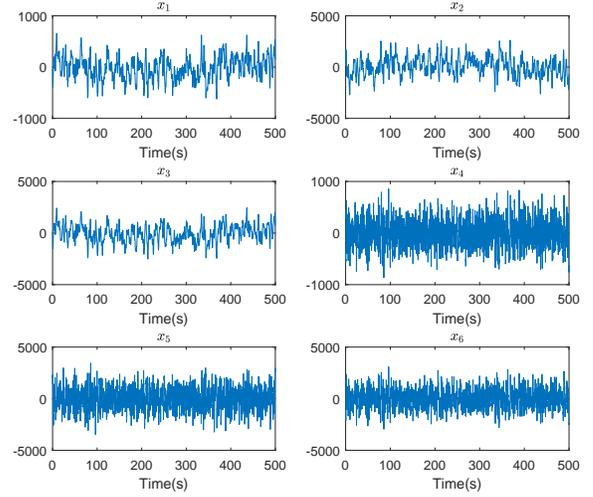}
    \caption{The state trajectories during the data collection phase.}\label{State_in_data_collection}
\end{figure}

\begin{figure}[!htb]
    \centering
    \includegraphics[width=\linewidth]{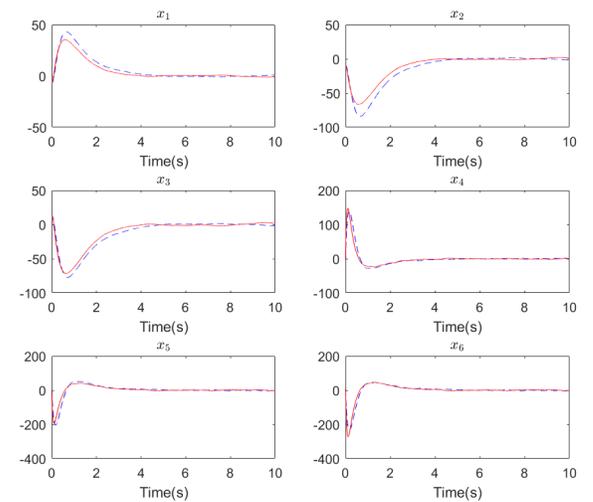}
    \caption{The state trajectories generated by the control policies before and after the learning. The solid red lines are state trajectories generated by control policy with $\hat{K}_{10}$, while the dashed blue lines are state trajectories generated by control policy with $\hat{K}_1$. }\label{State_after_learning}
\end{figure}

\section{Conclusion}\label{section_conclusion}
In this paper, we have proposed a novel data-driven optimistic least-squares-based policy iteration algorithm, to approximately solve the optimal stationary control problem for linear stochastic systems with additive and multiplicative noises, without requiring the precise knowledge of system matrices. Rigorous convergence analysis is given and shows that the proposed data-driven algorithm generates near-optimal solutions almost surely under suitable conditions, starting from an initial admissible control policy. The efficacy of the proposed optimal stationary control method has been validated by means of a triple inverted pendulum perturbed by stochastic noises.

\appendices
\section{Useful auxiliary results}\label{Appendix_auxiliary}
\begin{lemma}\label{uniform_exp_stab_over_compact_set}
Let $\mathcal{O}$ be a compact set of Huriwitz matrices, then there exist an $a_0>0$ and a $b_0>0$, such that $\Vert \exp(Ot)\Vert\leq a_0\exp(-b_0t)$ for any $O\in\mathcal{O}$.
\end{lemma}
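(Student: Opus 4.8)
The plan is to use a uniform Lyapunov argument. For each $O\in\mathcal{O}$, since $O$ is Hurwitz, the Lyapunov equation $O^TP+PO=-I$ admits a unique symmetric positive definite solution $P_O$, given explicitly by $P_O=\int_0^\infty e^{O^Tt}e^{Ot}\,\dd t$ (the integral converges because $O$ is Hurwitz, and one checks $O^TP_O+P_OO=[e^{O^Tt}e^{Ot}]_0^\infty=-I$). The key preliminary observation is that the map $O\mapsto P_O$ is continuous on $\mathcal{O}$: the Lyapunov operator $X\mapsto O^TX+XO$ is a linear map on $\mathbb{S}^n$ whose matrix representation depends continuously (indeed linearly) on the entries of $O$, and it is invertible precisely because $O$ is Hurwitz, its eigenvalues being the pairwise sums $\lambda_i(O)+\lambda_j(O)$, all of which have negative real part and hence are nonzero. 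Since matrix inversion is continuous on the open set of invertible operators, $P_O$ (the inverse of this operator applied to $-I$) varies continuously with $O$.

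Next I would extract uniform spectral bounds by compactness. The functions $O\mapsto\lambda_{\min}(P_O)$ and $O\mapsto\lambda_{\max}(P_O)$ are continuous, since eigenvalues depend continuously on a symmetric matrix and $P_O$ depends continuously on $O$. Because each $P_O$ is positive definite, $\lambda_{\min}(P_O)>0$ for every $O$, and on the compact set $\mathcal{O}$ both functions attain their extrema. Hence there exist constants $0<\lambda\le\Lambda<\infty$ with $\lambda I\le P_O\le\Lambda I$ for all $O\in\mathcal{O}$ simultaneously.

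With these uniform bounds in hand, the decay estimate is routine. Fix $O\in\mathcal{O}$ and $x_0$, set $x(t)=e^{Ot}x_0$ and $V(t)=x(t)^TP_Ox(t)$. Differentiating along the trajectory gives $\dot V=x^T(O^TP_O+P_OO)x=-\Vert x\Vert_2^2$, and since $V\le\Lambda\Vert x\Vert_2^2$ we obtain $\dot V\le-V/\Lambda$, whence $V(t)\le V(0)e^{-t/\Lambda}$. Combining with $\lambda\Vert x(t)\Vert_2^2\le V(t)$ and $V(0)\le\Lambda\Vert x_0\Vert_2^2$ yields $\Vert e^{Ot}x_0\Vert_2\le\sqrt{\Lambda/\lambda}\,e^{-t/(2\Lambda)}\Vert x_0\Vert_2$, and taking the supremum over $\Vert x_0\Vert_2=1$ gives $\Vert e^{Ot}\Vert_2\le a_0e^{-b_0t}$ with $a_0=\sqrt{\Lambda/\lambda}$ and $b_0=1/(2\Lambda)$, both independent of $O$.

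The main obstacle is the continuity/uniformity step: establishing that $P_O$ stays uniformly bounded above and uniformly bounded away from singularity across all of $\mathcal{O}$. This is where compactness is essential — for a single Hurwitz matrix the exponential bound is classical, but the content of the lemma is that the transient constant $a_0$ and the rate $b_0$ can be chosen uniformly over the whole set, and the continuity of the Lyapunov solution $O\mapsto P_O$ is precisely what converts the pointwise ``Hurwitz at each $O$'' property into a uniform estimate.
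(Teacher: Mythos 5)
Your proof is correct, but it takes a genuinely different route from the paper's. The paper's proof is a two-line covering argument: for each $O\in\mathcal{O}$ it invokes a cited robustness result (exponential stability of $\dot x = O'x$ persists, with uniform constants, for all $O'$ in a small ball around $O$) and then extracts a finite subcover of $\mathcal{O}$, taking the worst constants over the finitely many balls. You instead give a self-contained Lyapunov argument: solve $O^TP_O+P_OO=-I$, observe that $O\mapsto P_O$ is continuous because the Lyapunov operator is invertible exactly when $O$ is Hurwitz and inversion is continuous, use compactness to pin down uniform bounds $\lambda I\leq P_O\leq \Lambda I$, and then run the standard $\dot V\leq -V/\Lambda$ estimate to get explicit constants $a_0=\sqrt{\Lambda/\lambda}$, $b_0=1/(2\Lambda)$. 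Your version buys explicitness and independence from external references (and in fact reuses only facts the paper already lists in Lemma \ref{math_fundamentals}: continuity of eigenvalues, continuity of inversion, and the extreme value theorem); the paper's version is shorter but leans entirely on the cited local-uniformity theorem. One cosmetic remark: the lemma leaves the norm unspecified, and you prove the bound for the spectral norm; by equivalence of norms on a finite-dimensional space this only rescales $a_0$, so nothing is lost.
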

\begin{proof}
For each $O\in\mathcal{O}$, by \cite[Theorem 3.20. and Corollary 3.6.]{ODE_dynamical_systems} there exist $r>0$, $a>0$ and $b>0$, such that $\Vert \exp(O't)\Vert\leq a\exp(-bt)$ for all $\Vert O'-O\Vert<r$. Then the compactness of $\mathcal{O}$ completes the proof.
\end{proof}
\begin{lemma}[{\cite[Page 57]{Magnus2007matrix}}]\label{lemma_relationship_vec_svec}
For $X\in\mathbb{S}^n$, there exists a unique matrix $D_n\in\mathbb{R}^{n^2\times \frac{1}{2}n(n+1)}$ with full column rank, such that
$\vect(X) = D_n\svec(X)$, $\svec(X) = D_n^\dagger \vect(X)$.
\end{lemma}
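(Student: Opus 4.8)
The plan is to construct $D_n$ explicitly from the index correspondence between $\vect$ and $\svec$, and then to exploit the $\sqrt{2}$-scaling built into the definition of $\svec$ to obtain all of the stated properties at once. First I would fix notation for the two orderings: $\vect(X)$ lists the entries of $X$ in column-major order, so that $x_{ij}$ sits in position $(j-1)n+i$ of $\vect(X)$, while $\svec(X)$ lists the upper-triangular entries $x_{ij}$, $i\le j$, in the order prescribed in the Notations, carrying a factor $1$ on the diagonal and $\sqrt{2}$ off the diagonal. I would then define $D_n$ column by column: the column associated with a diagonal slot $(i,i)$ is the standard basis vector $e_{(i-1)n+i}\in\mathbb{R}^{n^2}$, and the column associated with an off-diagonal slot $(i,j)$, $i<j$, carries the value $1/\sqrt{2}$ in exactly the two positions of $\vect$ indexed by $(i,j)$ and $(j,i)$, and zeros elsewhere.

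With this construction the identity $\vect(X)=D_n\svec(X)$ is verified entry by entry: the diagonal position returns $x_{ii}$ directly, and an off-diagonal position returns $(1/\sqrt{2})\cdot\sqrt{2}\,x_{ij}=x_{ij}$, using that $\svec$ carries the $\sqrt{2}$ factor and that $X=X^T$ so $x_{ji}=x_{ij}$. The crucial observation is that this same scaling makes the columns of $D_n$ orthonormal: each diagonal column has unit norm, each off-diagonal column has squared norm $(1/\sqrt{2})^2+(1/\sqrt{2})^2=1$, and any two distinct columns have disjoint support. Hence $D_n^TD_n=I_{\frac{1}{2}n(n+1)}$, which immediately yields that $D_n$ has full column rank and that $D_n^\dagger=(D_n^TD_n)^{-1}D_n^T=D_n^T$.

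The second identity then follows with essentially no further work, since $D_n^\dagger\vect(X)=D_n^TD_n\svec(X)=\svec(X)$. For uniqueness I would invoke that $X\mapsto\svec(X)$ is a bijection from $\mathbb{S}^n$ onto $\mathbb{R}^{\frac{1}{2}n(n+1)}$; thus if a second matrix $D_n'$ satisfied $\vect(X)=D_n'\svec(X)$ for every symmetric $X$, then $D_nv=D_n'v$ for all $v\in\mathbb{R}^{\frac{1}{2}n(n+1)}$, forcing $D_n=D_n'$.

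I expect the only genuine obstacle to be notational bookkeeping: keeping the two index orderings aligned and confirming that the $\sqrt{2}$ factors conspire to make the columns of $D_n$ \emph{orthonormal} rather than merely independent. Once $D_n^TD_n=I$ is in hand, every remaining claim — full column rank, the pseudoinverse formula $D_n^\dagger=D_n^T$, and the second identity — is immediate, so this scaling check is really the heart of the argument.
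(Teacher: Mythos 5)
Your proposal is correct and complete. Note that the paper itself offers no proof of this lemma --- it is stated as a citation to Magnus and Neudecker --- so there is no in-paper argument to compare against; your explicit construction is a genuine addition. It is worth pointing out that the classical duplication matrix in that reference is defined relative to the half-vectorization $\operatorname{vech}$ (no $\sqrt{2}$ factors), for which $D_n^TD_n$ is a diagonal matrix with entries $1$ and $2$ rather than the identity, so $D_n^\dagger \ne D_n^T$ there. Your construction correctly adapts the matrix to the paper's $\sqrt{2}$-scaled $\svec$, and the observation that this scaling makes the columns orthonormal --- so that $D_n^\dagger = D_n^T$ and the second identity $\svec(X) = D_n^\dagger\vect(X)$ drops out with no further computation --- is exactly the right way to see why the lemma holds in the form stated. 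The entrywise verification of $\vect(X) = D_n\svec(X)$, the disjoint-support argument for $D_n^TD_n = I$, and the uniqueness argument via surjectivity of $\svec$ onto $\mathbb{R}^{\frac{1}{2}n(n+1)}$ are all sound.
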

\begin{lemma}\label{math_fundamentals} The following results hold.
\begin{enumerate}
    \item\label{continuity_of_eigenvalues} \cite[Appendix D]{horn2012matrix} The eigenvalues of a square matrix depends continuously on the elements of the matrix.
    \item\label{continuity_of_inversion} \cite{Rakocevic1997} The matrix inverse of a square matrix is continuous function of the elements of the matrix.
    \item\label{extremum_value_theorem} \cite[Appendix E]{horn2012matrix} If $X$ is a compact set of a metric space and $f:X\rightarrow\mathbb{R}$ is a continuous function, then $f$ is bounded on $X$.
\end{enumerate}
\end{lemma}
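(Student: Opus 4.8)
The plan is to establish each of the three items separately, since each is a classical fact reducible to elementary building blocks. For Item \ref{continuity_of_eigenvalues}, I would argue through the characteristic polynomial. The coefficients of $\det(\lambda I_n - A)$ are polynomial, hence continuous, functions of the entries of $A$, so the problem reduces to showing that the roots of a monic polynomial depend continuously on its coefficients. This last fact is the only nontrivial ingredient. I would prove it via a Rouch\'e/argument-principle argument: fixing a root $\lambda_0$ of multiplicity $k$ and a small circle around it enclosing no other root, a sufficiently small perturbation of the coefficients keeps the perturbed polynomial close in modulus to the original one along that circle, so by Rouch\'e's theorem the perturbed polynomial retains exactly $k$ roots (counted with multiplicity) inside the circle, forcing those roots to lie within the prescribed neighbourhood of $\lambda_0$. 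Composing with the continuous coefficient map then yields continuity of the eigenvalues in the entries of $A$.

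For Item \ref{continuity_of_inversion}, I would invoke Cramer's rule. On the open set of invertible matrices one has $A^{-1} = \det(A)^{-1}\,\mathrm{adj}(A)$, where each entry of the adjugate $\mathrm{adj}(A)$ and the determinant $\det(A)$ is a polynomial in the entries of $A$. Since $\det(A)\neq 0$ throughout this set, every entry of $A^{-1}$ is a ratio of continuous functions with nonvanishing denominator, hence continuous in the entries of $A$.

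Item \ref{extremum_value_theorem} is the extreme value theorem. I would recall that the continuous image $f(X)$ of a compact set $X$ is compact in $\mathbb{R}$, and that compact subsets of $\mathbb{R}$ are bounded; boundedness of $f$ on $X$ is then immediate.

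Among the three items, I expect Item \ref{continuity_of_eigenvalues} to be the only one requiring genuine work, precisely because of the root-continuity step, which is an analytic statement rather than a purely algebraic one; the other two follow directly from Cramer's rule and from basic point-set topology, respectively. Since all three are standard textbook results, in the paper itself I would simply cite the references indicated rather than reproduce these arguments in full.
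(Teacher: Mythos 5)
Your three sketches are correct and are exactly the standard arguments (Rouch\'e's theorem via the characteristic polynomial, Cramer's rule with the adjugate, and compactness of the continuous image) underlying the results the paper cites; the paper itself gives no proof of this lemma, treating all three items as textbook facts with references, which is precisely the disposition you arrive at in your closing remark.
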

For $X\in\mathbb{R}^{n\times n}$, $Y\in\mathbb{R}^{n\times m}$, $X+\Delta X\in\mathbb{R}^{n\times n}$, $Y + \Delta Y\in\mathbb{R}^{n\times m}$, supposing $X$ and $X + \Delta X$ are invertible, then
\begin{align}
    &\Vert X^{-1}Y - (X+\Delta X)^{-1}(Y+\Delta Y)\Vert_F\leq\left\Vert X^{-1}Y - \right.\nonumber\\
    &\left.X^{-1}(Y+\Delta Y)+ X^{-1}(Y+\Delta Y) - (X+\Delta X)^{-1}(Y+\Delta Y)\right\Vert_F\nonumber\\
    &\leq \Vert -X^{-1}\Delta Y + X^{-1}\Delta X(X + \Delta X)^{-1}(Y+\Delta Y)\Vert_F\nonumber\\
    &\leq \Vert X^{-1}\Vert_F\left(\Vert \Delta Y\Vert_F  \right.\nonumber\\
    &\left.+ \Vert(X + \Delta X)^{-1}\Vert_F\Vert(Y+\Delta Y)\Vert_F\Vert\Delta X\Vert_F\right).\label{linear_perturbation}
\end{align}
\section{Proof of Theorem \ref{theorem_local_ISS}}\label{section_proof_local_ISS}
Let $\delta_0$ be the one defined in Lemma \ref{lemma_invertible_hurwitz}. The following lemma can be proved by continuity in a similar way to that of \cite[Lemma 6]{pang2020robust_discrete}.
\begin{lemma}\label{lemma_stability_ISS}
For any $\tilde{P}_i\in\bar{\mathcal{B}}_{\delta_0}(P^*)$, there exists a $d(\delta_0)>0$, independent of $\tilde{P}_i$, such that $\mathcal{A}(\hat{K}_{i+1})$ is Hurwitz and $[\hat{G}_{i}]_{uu}$ is invertible, if $\Vert\Delta G_i\Vert_F\leq d$.
\end{lemma}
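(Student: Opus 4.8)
The goal is to show that for $\tilde{P}_i$ ranging over the \emph{compact} ball $\bar{\mathcal{B}}_{\delta_0}(P^*)$, a single disturbance threshold $d>0$ (independent of $\tilde{P}_i$) guarantees that the perturbed quantities $[\hat{G}_i]_{uu}$ and $\mathcal{A}(\hat{K}_{i+1})$ retain invertibility and the Hurwitz property, respectively. The plan is to combine the continuity results of Lemma \ref{math_fundamentals} with a compactness argument to extract a \emph{uniform} bound. The key observation is that $\tilde{G}_i = G(\tilde{P}_i)$ depends continuously (indeed affinely) on $\tilde{P}_i$, so $[\tilde{G}_i]_{uu} = R + \Sigma(\tilde{P}_i)$ is a continuous positive-definite-valued map on $\bar{\mathcal{B}}_{\delta_0}(P^*)$; by Lemma \ref{lemma_invertible_hurwitz} it is positive definite for every $\tilde{P}_i$ in this ball. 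The perturbed block is $[\hat{G}_i]_{uu} = [\tilde{G}_i]_{uu} + [\Delta G_i]_{uu}$, and we want its invertibility to survive the perturbation.

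First I would handle the invertibility of $[\hat{G}_i]_{uu}$. Since $[\tilde{G}_i]_{uu}$ is continuous and positive definite on the compact set $\bar{\mathcal{B}}_{\delta_0}(P^*)$, its smallest eigenvalue $\lambda_{\min}([\tilde{G}_i]_{uu})$ is a continuous positive function of $\tilde{P}_i$ and therefore, by Lemma \ref{math_fundamentals} Item \ref{extremum_value_theorem}, attains a uniform positive lower bound $\underline{\lambda}>0$ over the ball. Choosing the perturbation small enough, say $\Vert \Delta G_i\Vert_F < \underline{\lambda}$ (noting $\Vert[\Delta G_i]_{uu}\Vert_2\leq\Vert\Delta G_i\Vert_F$), a standard eigenvalue-perturbation estimate forces $[\hat{G}_i]_{uu}$ to remain positive definite, hence invertible, uniformly in $\tilde{P}_i$. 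This makes $\hat{K}_{i+1} = [\hat{G}_i]_{uu}^{-1}[\hat{G}_i]_{ux}$ well-defined.

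Next I would address the Hurwitz property of $\mathcal{A}(\hat{K}_{i+1})$. When $\Delta G_i = 0$ we have $\hat{K}_{i+1} = \mathscr{K}(\tilde{P}_i)$, and by Lemma \ref{lemma_invertible_hurwitz} the matrix $\mathcal{A}(\mathscr{K}(\tilde{P}_i))$ is Hurwitz for every $\tilde{P}_i\in\bar{\mathcal{B}}_{\delta_0}(P^*)$. The map $\tilde{P}_i\mapsto\mathcal{A}(\mathscr{K}(\tilde{P}_i))$ is continuous on the compact ball, so its image is a compact set of Hurwitz matrices, and Lemma \ref{uniform_exp_stab_over_compact_set} furnishes \emph{uniform} exponential-stability constants $a_0,b_0$. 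Using the perturbation estimate \eqref{linear_perturbation} (with the uniform bounds on $\Vert[\tilde{G}_i]_{uu}^{-1}\Vert_F$ and $\Vert[\tilde{G}_i]_{ux}\Vert_F$ over the compact ball) I would show $\Vert \hat{K}_{i+1} - \mathscr{K}(\tilde{P}_i)\Vert_F$ is bounded by a continuous function of $\Vert\Delta G_i\Vert_F$ that vanishes as $\Delta G_i\to 0$, uniformly in $\tilde{P}_i$. Since $\mathcal{A}(\cdot)$ is continuous (affine) in its argument, the perturbed matrix $\mathcal{A}(\hat{K}_{i+1})$ lands in a small neighborhood of the uniformly-Hurwitz compact set, and by Lemma \ref{math_fundamentals} Item \ref{continuity_of_eigenvalues} its eigenvalues stay in the open left half plane provided $\Vert\Delta G_i\Vert_F\leq d$ for $d$ small enough.

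The main obstacle is ensuring the threshold $d$ is genuinely \emph{independent} of $\tilde{P}_i$, rather than merely existing pointwise. This is precisely where compactness of $\bar{\mathcal{B}}_{\delta_0}(P^*)$ is essential: every pointwise continuity statement must be upgraded to a uniform bound via Lemma \ref{math_fundamentals} Item \ref{extremum_value_theorem} and Lemma \ref{uniform_exp_stab_over_compact_set}. The delicate point is that the eigenvalue-location margin for the Hurwitz conclusion could in principle shrink to zero as $\tilde{P}_i$ varies; compactness of the image $\{\mathcal{A}(\mathscr{K}(\tilde{P}_i)) : \tilde{P}_i\in\bar{\mathcal{B}}_{\delta_0}(P^*)\}$ rules this out by guaranteeing a uniform distance of its spectrum from the imaginary axis. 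Once the uniform margins are secured, taking $d$ as the minimum of the finitely many thresholds produced in the two steps completes the argument, exactly as in the continuity-based proof of \cite[Lemma 6]{pang2020robust_discrete}.
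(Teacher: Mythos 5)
Your proposal is correct and follows essentially the same route as the paper, which simply invokes ``continuity'' (citing the analogous Lemma 6 of \cite{pang2020robust_discrete}); you have filled in exactly that continuity-plus-compactness argument, using the uniform positive lower bound on $\lambda_{\min}(R+\Sigma(\tilde{P}_i))$ over $\bar{\mathcal{B}}_{\delta_0}(P^*)$ for invertibility and a uniform Hurwitz margin over the compact image $\{\mathcal{A}(\mathscr{K}(\tilde{P}_i))\}$ for stability. The only slip is calling $\mathcal{A}(\cdot)$ affine in its argument --- it is quadratic in $K$ because of the $\sum_k (F_kK)^T\otimes(F_kK)^T$ term --- but it is still continuous (and Lipschitz on bounded sets), so the argument is unaffected.
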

\begin{lemma}\label{lemma_total_error_bounded_ISS}
For any $\tilde{P}_i\in\bar{\mathcal{B}}_{\delta_0}(P^*)$ and any $\epsilon_3>0$, there exists a $0<\delta_2(\delta_0,\epsilon_3)\leq d$, independent of $\tilde{P}_i$, such that $\Vert \mathcal{E}(\tilde{G}_i,\Delta G_i)\Vert_2\leq C_2\Vert \Delta G_i\Vert_F<\epsilon_3$,
if $\Vert \Delta G_i\Vert_F<\delta_2$, where $C_2(\delta_2)>0$.
\end{lemma}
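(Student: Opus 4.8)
The plan is to exploit the fact that $\mathcal{E}(\tilde{G}_i,\Delta G_i)$ vanishes identically when $\Delta G_i=0$, and then show it is Lipschitz in $\Delta G_i$ near the origin with a constant that is uniform over $\tilde{P}_i\in\bar{\mathcal{B}}_{\delta_0}(P^*)$. Recalling the definition of $G(\cdot)$, the lower blocks of $\tilde{G}_i=G(\tilde{P}_i)$ are $[\tilde{G}_i]_{uu}=R+\Sigma(\tilde{P}_i)=\mathscr{R}(\tilde{P}_i)$ and $[\tilde{G}_i]_{ux}=B^T\tilde{P}_i$, so $[\tilde{G}_i]_{uu}^{-1}[\tilde{G}_i]_{ux}=\mathscr{K}(\tilde{P}_i)$; when $\Delta G_i=0$ we have $\hat{G}_i=\tilde{G}_i$, hence $\hat{K}_{i+1}=\mathscr{K}(\tilde{P}_i)$, which makes the two terms of $\mathcal{E}=g(\mathscr{K}(\tilde{P}_i))-g(\hat{K}_{i+1})$ coincide, where $g(K)=\mathcal{A}^{-1}(K)\vect(Q+K^TRK)$. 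The argument then factors into two steps: first control $\Vert\hat{K}_{i+1}-\mathscr{K}(\tilde{P}_i)\Vert_F$ by $\Vert\Delta G_i\Vert_F$, then control $\Vert\mathcal{E}\Vert_2$ by $\Vert\hat{K}_{i+1}-\mathscr{K}(\tilde{P}_i)\Vert_F$.

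For the first step I would apply the perturbation estimate \eqref{linear_perturbation} with $X=[\tilde{G}_i]_{uu}$, $Y=[\tilde{G}_i]_{ux}$, $\Delta X=[\Delta G_i]_{uu}$ and $\Delta Y=[\Delta G_i]_{ux}$, all block norms being dominated by $\Vert\Delta G_i\Vert_F$. This bounds $\Vert\hat{K}_{i+1}-\mathscr{K}(\tilde{P}_i)\Vert_F$ by $\Vert\Delta G_i\Vert_F$ times a product of the factors $\Vert[\tilde{G}_i]_{uu}^{-1}\Vert_F$, $\Vert[\hat{G}_i]_{uu}^{-1}\Vert_F$ and $\Vert[\hat{G}_i]_{ux}\Vert_F$. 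Each factor is a continuous function of $(\tilde{P}_i,\Delta G_i)$ on the compact set $\bar{\mathcal{B}}_{\delta_0}(P^*)\times\{\Vert\Delta G_i\Vert_F\le d\}$ (invertibility of $[\hat{G}_i]_{uu}$ being guaranteed there by Lemma \ref{lemma_stability_ISS}, and continuity of the inverse by Lemma \ref{math_fundamentals}), hence uniformly bounded by Lemma \ref{math_fundamentals}. This yields a constant $L_K>0$ independent of $\tilde{P}_i$ with $\Vert\hat{K}_{i+1}-\mathscr{K}(\tilde{P}_i)\Vert_F\le L_K\Vert\Delta G_i\Vert_F$ whenever $\Vert\Delta G_i\Vert_F\le d$.

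For the second step, note that $\mathcal{A}(K)$ is a polynomial in $K$ and matrix inversion is smooth on invertible matrices, so $g$ is $C^1$ on the open set of $K$ for which $\mathcal{A}(K)$ is Hurwitz. The image $\mathscr{K}(\bar{\mathcal{B}}_{\delta_0}(P^*))$ is compact and, by Lemma \ref{lemma_invertible_hurwitz}, consists of gains with $\mathcal{A}(\cdot)$ Hurwitz; by continuity of eigenvalues (Lemma \ref{math_fundamentals}) there is a $\rho>0$ such that its closed $\rho$-neighborhood $\mathcal{N}_\rho$ is a compact all-Hurwitz set on which $\Vert Dg\Vert\le L_g$ for some $L_g>0$. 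I would then take $\delta_2\le\min\{d,\rho/L_K,\epsilon_3/(L_gL_K)\}$, so that for $\Vert\Delta G_i\Vert_F<\delta_2$ the whole segment joining $\mathscr{K}(\tilde{P}_i)$ to $\hat{K}_{i+1}$ stays within distance $\rho$ of $\mathscr{K}(\tilde{P}_i)$ and hence inside $\mathcal{N}_\rho$; the mean-value inequality then gives $\Vert\mathcal{E}\Vert_2\le L_g\Vert\hat{K}_{i+1}-\mathscr{K}(\tilde{P}_i)\Vert_F\le C_2\Vert\Delta G_i\Vert_F<\epsilon_3$ with $C_2=L_gL_K$. The main obstacle is precisely the required uniformity: a single $\rho$, $L_g$ and $L_K$ must serve every $\tilde{P}_i$ in the ball, and the connecting segment must never leave the Hurwitz region, both of which I would settle through compactness of $\bar{\mathcal{B}}_{\delta_0}(P^*)$ and continuity of the eigenvalue map, ensuring that all constants depend only on $\delta_0$ and the fixed data, not on $\tilde{P}_i$.
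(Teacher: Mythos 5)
Your proposal is correct and follows essentially the same route as the paper: both first bound $\Vert\hat{K}_{i+1}-\mathscr{K}(\tilde{P}_i)\Vert_F$ by $C_3\Vert\Delta G_i\Vert_F$ via the perturbation estimate \eqref{linear_perturbation} with uniform constants from compactness of $\bar{\mathcal{B}}_{\delta_0}(P^*)$ and Lemma \ref{lemma_stability_ISS}, and then propagate this to $\mathcal{E}=g(\mathscr{K}(\tilde{P}_i))-g(\hat{K}_{i+1})$. The only cosmetic difference is in the second step, where you invoke smoothness of $g$ and the mean-value inequality on a compact all-Hurwitz neighborhood, whereas the paper carries out the same Lipschitz bound explicitly by a second application of \eqref{linear_perturbation} to the quantities $\Delta\mathcal{A}_i$ and $\Delta b_i$.
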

\begin{proof}
For any $\tilde{P}_i\in\bar{\mathcal{B}}_{\delta_0}(P^*)$, $\Vert \Delta G_i\Vert_F\leq d$, by \eqref{linear_perturbation} and Lemma \ref{math_fundamentals}
\begin{align}
    &\Vert \mathscr{K}(\tilde{P}_i) - \hat{K}_{i+1}\Vert_F
    \leq \Vert [\hat{G}_{i}]_{uu}^{-1}\Vert_F\left(\Vert\Delta G_i\Vert_F \nonumber \right.\\
    &\left.+ \Vert [\tilde{G}_{i}]_{uu}^{-1}\Vert_F \Vert[\tilde{G}_{i}]_{ux}\Vert_F\Vert \Delta G_i\Vert_F\right)  \leq C_3(\delta_0,d)\Vert \Delta G_i\Vert_F. \label{K_close}
\end{align}
Define $\Delta \mathcal{A}_i = \mathcal{A}(\mathscr{K}(\tilde{P}_i)) - \mathcal{A}(\hat{K}_{i+1})$ and
$$\Delta b_i = \vect\left(\mathscr{K}(\tilde{P}_i)^TR\mathscr{K}(\tilde{P}_i) - \hat{K}_{i+1}^TR\hat{K}_{i+1}\right).$$
Using \eqref{K_close}, it is easy to check that $\Vert \Delta \mathcal{A}_i\Vert_F\leq C_4\Vert\Delta G_i\Vert_F$, $\Vert \Delta b_i\Vert_2\leq C_5\Vert\Delta G_i\Vert_F$, for some $C_4(\delta_0,d)>0$, $C_5(\delta_0,d)>0$. Then by \eqref{linear_perturbation}, Lemma \ref{lemma_stability_ISS} and Lemma \ref{math_fundamentals}
\begin{equation*}
    \begin{split}
        &\Vert\mathcal{E}(\tilde{G}_i,\Delta G_i) \Vert_F \leq \left\Vert \mathcal{A}^{-1}\left(\hat{K}_{i+1}\right)\right\Vert_F \left(C_5 + C_4\left\Vert\mathcal{A}^{-1}(\mathscr{K}(\tilde{P}_i)\right\Vert_F\right.\\
        &\left.\times\left\Vert Q + \mathscr{K}(\tilde{P}_i)^TR\mathscr{K}(\tilde{P}_i)\right\Vert_F\right)\Vert \Delta G_i\Vert_F \leq C_2(\delta_0)\Vert \Delta G_i\Vert_F
    \end{split}
\end{equation*}
Choosing $0<\delta_2\leq d$ such that $C_2\delta_2<\epsilon_3$ completes the proof.
\end{proof}
\begin{proof}[Proof of Theorem \ref{theorem_local_ISS}]
Let $\epsilon_3=(1-\sigma)\delta_1$ in Lemma \ref{lemma_total_error_bounded_ISS}, and $\delta_3$ be equal to the $\delta_2$ associated with $\epsilon_3$. For any $i\in\mathbb{Z}_+$, if $\tilde{P}_i\in\mathcal{B}_{\delta_1}(P^*)$, then $[\hat{G}_i]_{uu}$ is invertible, $\mathcal{A}(\hat{K}_{i+1})$ is Hurwitz and
\begin{align}
    \Vert \tilde{P}_{i+1} - P^*\Vert_F&\leq \Vert \mathcal{E}(\tilde{G}_i,\Delta G_i)\Vert_2 +\left\Vert \mathcal{A}^{-1}(\mathscr{K}(\tilde{P}_i))\right.\nonumber\\
    &\left.\times\vect(-Q-\mathscr{K}(\tilde{P}_i)^TR\mathscr{K}(\tilde{P}_i)) - \vect(P^*) \right\Vert_2 \nonumber \\
    &\leq  \sigma\Vert \tilde{P}_i - P^*\Vert_F + C_2\Vert\Delta G\Vert_\infty \label{proof_theorem_local_ISS_inequality_2}\\
    &<\sigma\delta_1 + C_2\delta_3<\sigma\delta_1 + \epsilon_3 =\delta_1, \label{proof_theorem_local_ISS_inequality_3}
\end{align}
where the last two inequality are due to Lemmas \ref{lemma_local_exponentail_stable} and \ref{lemma_total_error_bounded_ISS}. By induction, \eqref{proof_theorem_local_ISS_inequality_2} and \eqref{proof_theorem_local_ISS_inequality_3} hold for all $i\in\mathbb{Z}_+$. Unrolling \eqref{proof_theorem_local_ISS_inequality_2} proves \ref{theorem_local_ISS_item_well_defined} and \ref{theorem_local_ISS_item_local_ISS} in Theorem \ref{theorem_local_ISS}. Then \eqref{K_close} implies \ref{theorem_local_ISS_item_K_bounded} in Theorem \ref{theorem_local_ISS}. \ref{theorem_local_ISS__item_converging_input_converging_state} can be proved in a similar way to that of (iii) in \cite[Theorem 2]{pang2020robust}.
\end{proof}

\section{Proof of Corollary \ref{corollary_global}}\label{section_proof_corollary}
Some auxiliary lemmas are presented before the proof of Corollary \ref{corollary_global}. For convenience, in the following we assume $Q>I_n$ and $R>I_m$. All the proofs still work for any $Q>0$ and $R>0$, by suitable rescaling.
\begin{lemma}\label{lemma_stability}
If $\mathcal{A}(\hat{K}_i)$ is Hurwitz, then $[\hat{G}_{i}]_{uu}$ is invertible and $\mathcal{A}(\hat{K}_{i+1})$ is Hurwitz, as long as $\Vert \Delta G_i\Vert_F < a_i$, where
\begin{equation*}
     a_i=\left(m(\sqrt{n} + \Vert \hat{K}_{i}\Vert_2)^2+m(\sqrt{n} + \Vert \hat{K}_{i+1}\Vert_2)^2\right)^{-1}.
\end{equation*}
Furthermore,
\begin{equation}\label{K_bound_by_P}
    \Vert\hat{K}_{i+1}\Vert_F \leq 2\Vert R^{-1}\Vert_F(1 + \Vert B^T\tilde{P}_i\Vert_F).
\end{equation}
\end{lemma}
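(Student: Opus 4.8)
The plan is to use the positive definite matrix $\tilde{P}_i$ itself as a Lyapunov certificate for the updated gain $\hat{K}_{i+1}$, exploiting the policy-update optimality of $\hat{K}_{i+1}$ to absorb the disturbance $\Delta G_i$ into a controllable perturbation of the Lyapunov inequality. First I would note $\tilde{P}_i>0$: since $\mathcal{A}(\hat{K}_i)$ is Hurwitz and \eqref{eRPI_PE} is equivalent to $\mathcal{L}_{\hat{K}_i}(\tilde{P}_i)+Q+\hat{K}_i^TR\hat{K}_i=0$ with $Q+\hat{K}_i^TR\hat{K}_i>0$, Lemma \ref{lemma_lyapunov} yields $\tilde{P}_i>0$, and hence $[\tilde{G}_i]_{uu}=R+\Sigma(\tilde{P}_i)\ge R>I_m$. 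Because the denominator defining $a_i$ is bounded below by $2mn\ge2$, we have $a_i\le1/2$, so the hypothesis already forces $\Vert\Delta G_i\Vert_F<1$; since $\Vert[\Delta G_i]_{uu}\Vert_2\le\Vert\Delta G_i\Vert_F<1<\lambda_{\min}([\tilde{G}_i]_{uu})$, Weyl's inequality gives $[\hat{G}_i]_{uu}=[\tilde{G}_i]_{uu}+[\Delta G_i]_{uu}>0$. Thus $[\hat{G}_i]_{uu}$ is invertible, $\hat{K}_{i+1}$ is well defined, and $a_i$ is meaningful.

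The core step is to establish $\mathcal{L}_{\hat{K}_{i+1}}(\tilde{P}_i)<0$. Using the identity $\mathcal{H}(G(P),K)=\mathcal{L}_K(P)+Q+K^TRK$ and linearity of $\mathcal{H}$ in its first argument, I would write $\mathcal{L}_{\hat{K}_{i+1}}(\tilde{P}_i)=\mathcal{H}(\tilde{G}_i,\hat{K}_{i+1})-Q-\hat{K}_{i+1}^TR\hat{K}_{i+1}$ and $\mathcal{H}(\tilde{G}_i,\hat{K}_{i+1})=\mathcal{H}(\hat{G}_i,\hat{K}_{i+1})-\mathcal{H}(\Delta G_i,\hat{K}_{i+1})$. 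Completing the square in $\mathcal{H}(\hat{G}_i,\cdot)$, legitimate since $[\hat{G}_i]_{uu}>0$, shows that $\hat{K}_{i+1}$ minimizes $\mathcal{H}(\hat{G}_i,\cdot)$, so $\mathcal{H}(\hat{G}_i,\hat{K}_{i+1})\le\mathcal{H}(\hat{G}_i,\hat{K}_i)=\mathcal{H}(\tilde{G}_i,\hat{K}_i)+\mathcal{H}(\Delta G_i,\hat{K}_i)=\mathcal{H}(\Delta G_i,\hat{K}_i)$, the last equality from \eqref{eRPI_PE}. Dropping $-\hat{K}_{i+1}^TR\hat{K}_{i+1}\le0$ leaves $\mathcal{L}_{\hat{K}_{i+1}}(\tilde{P}_i)\le\mathcal{H}(\Delta G_i,\hat{K}_i)-\mathcal{H}(\Delta G_i,\hat{K}_{i+1})-Q$. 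Writing $\mathcal{H}(\Delta G_i,K)=M_K\Delta G_iM_K^T$ with $M_K=[I_n,-K^T]$ and using $\Vert M_K\Delta G_iM_K^T\Vert_2\le\Vert M_K\Vert_2^2\Vert\Delta G_i\Vert_F\le\Vert M_K\Vert_F^2\Vert\Delta G_i\Vert_F$ together with $\Vert M_K\Vert_F^2=n+\Vert K\Vert_F^2\le m(\sqrt{n}+\Vert K\Vert_2)^2$, I obtain $\Vert\mathcal{H}(\Delta G_i,\hat{K}_i)-\mathcal{H}(\Delta G_i,\hat{K}_{i+1})\Vert_2\le a_i^{-1}\Vert\Delta G_i\Vert_F<1$. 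Hence this matrix is bounded above by $I_n<Q$, giving $\mathcal{L}_{\hat{K}_{i+1}}(\tilde{P}_i)<0$, and Lemma \ref{lemma_lyapunov} makes $\mathcal{A}(\hat{K}_{i+1})$ Hurwitz.

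For the bound \eqref{K_bound_by_P}, since $[\tilde{G}_i]_{uu}\ge R$ implies $\Vert[\tilde{G}_i]_{uu}^{-1}\Vert_2\le\Vert R^{-1}\Vert_2<1$ and $\Vert[\Delta G_i]_{uu}\Vert_2<a_i\le1/2$, the product $\Vert[\tilde{G}_i]_{uu}^{-1}\Vert_2\Vert[\Delta G_i]_{uu}\Vert_2<1/2$, so a Neumann-series estimate gives $\Vert[\hat{G}_i]_{uu}^{-1}\Vert_2\le2\Vert[\tilde{G}_i]_{uu}^{-1}\Vert_2\le2\Vert R^{-1}\Vert_F$. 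Combined with $\Vert[\hat{G}_i]_{ux}\Vert_F\le\Vert B^T\tilde{P}_i\Vert_F+\Vert\Delta G_i\Vert_F<1+\Vert B^T\tilde{P}_i\Vert_F$, this yields $\Vert\hat{K}_{i+1}\Vert_F\le\Vert[\hat{G}_i]_{uu}^{-1}\Vert_2\Vert[\hat{G}_i]_{ux}\Vert_F\le2\Vert R^{-1}\Vert_F(1+\Vert B^T\tilde{P}_i\Vert_F)$.

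The main obstacle is the core step, specifically the recognition that the policy-update optimality inequality $\mathcal{H}(\hat{G}_i,\hat{K}_{i+1})\le\mathcal{H}(\hat{G}_i,\hat{K}_i)$, together with $\mathcal{H}(\tilde{G}_i,\hat{K}_i)=0$, collapses $\mathcal{L}_{\hat{K}_{i+1}}(\tilde{P}_i)$ into a quantity depending only on $\Delta G_i$ evaluated at the two gains $\hat{K}_i,\hat{K}_{i+1}$, rather than on the unknown gap between $\hat{K}_{i+1}$ and the exact update $\mathscr{K}(\tilde{P}_i)$. Matching the precise threshold $a_i$ then hinges on the elementary but dimension-sensitive estimate $\Vert M_K\Vert_F^2\le m(\sqrt{n}+\Vert K\Vert_2)^2$.
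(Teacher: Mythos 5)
Your proof is correct and takes essentially the same route as the paper: the chain $\mathcal{H}(\tilde G_i,\hat K_{i+1})=\mathcal{H}(\hat G_i,\hat K_{i+1})-\mathcal{H}(\Delta G_i,\hat K_{i+1})\le \mathcal{H}(\Delta G_i,\hat K_i)-\mathcal{H}(\Delta G_i,\hat K_{i+1})$, obtained from the optimality of $\hat K_{i+1}$ for $\mathcal{H}(\hat G_i,\cdot)$ together with \eqref{eRPI_PE}, is exactly the inequality \eqref{key_inequality_2} that the paper imports from \cite[Lemma 6]{pang2020robust}, and your estimate $\Vert[I_n,-K^T]\Vert_F^2\le m(\sqrt n+\Vert K\Vert_2)^2$ plays the role of the paper's bound $\mathbf{1}^T\vert\mathcal{X}_{K}\vert_{abs}\mathbf{1}\le m(\sqrt n+\Vert K\Vert_2)^2$, yielding the same threshold $a_i$. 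The invertibility of $[\hat G_i]_{uu}$ and the bound \eqref{K_bound_by_P} are likewise obtained as in the cited derivation, so the only difference is that you spell out details the paper defers to the reference.
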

\begin{proof}
By similar derivations to the proof of \cite[Lemma 6]{pang2020robust}, the invertibility of $[\hat{G}_i]_{uu}$ and inequality \eqref{K_bound_by_P} can be proved, and the following inequality is obtained
\begin{equation}\label{key_inequality_2}
    x^T\left(\mathcal{L}_{\hat{K}_{i+1}}(\tilde{P}_i) + Q+\hat{K}_{i+1}^TR\hat{K}_{i+1}\right)x - \epsilon_{2,i} \leq 0
\end{equation}
where for any $x\in\mathbb{R}^n$ on the unit ball
\begin{equation*}
\begin{split}
    \epsilon_{2,i} &= \Vert \Delta G_i\Vert_F \mathbf{1}^T(\vert \mathcal{X}_{\hat{K}_{i}}\vert_{abs}+\vert \mathcal{X}_{\hat{K}_{i+1}}\vert_{abs})\mathbf{1}<1,\\
    \mathcal{X}_{\hat{K}_i} &= \left[\begin{array}{c}
         I  \\
         -\hat{K}_i
    \end{array}\right]xx^T \left[\begin{array}{cc}
        I &  -\hat{K}_i^T
    \end{array}\right],\\
    \mathbf{1}^T\vert \mathcal{X}_{\hat{K}_i}\vert_{abs}\mathbf{1}&\leq m(\sqrt{n}+\Vert \hat{K}_i\Vert_2)^2,
\end{split}
\end{equation*}
and $\vert \mathcal{X}_{\hat{K}_{i}} \vert_{abs}$ denotes the matrix obtained from $\mathcal{X}_{\hat{K}_{i}}$ by taking the absolute value of each entry. Then $Q>I_n$ leads to $x^T\mathcal{L}_{\hat{K}_{i+1}}(\tilde{P}_i)x < 0$
for all $x$ on the unit ball. So $\mathcal{A}(\hat{K}_{i+1})$ is Hurwitz by Lemma \ref{lemma_lyapunov}.
\end{proof}
\begin{lemma}\label{lemma_boundedness}
For any $\bar{i}\in\mathbb{Z}_+$, $\bar{i}>0$,  if
\begin{equation}\label{error_condition_bounded}
  \Vert \Delta G_i\Vert_F< (1+i^2)^{-1}a_i,\quad i=1,\cdots, \bar{i},
\end{equation}
where $a_i$ is defined in Lemma \ref{lemma_stability}, then $\Vert\tilde{P}_i\Vert_F\leq 6\Vert \tilde{P}_1\Vert_F$, and $\quad \Vert\hat{K}_{i}\Vert_F\leq C_0$,
for $i=1,\cdots,\bar{i}$, where
$$ C_0 = \max\left\{ \Vert\hat{K}_1 \Vert_F, 2\Vert R^{-1}\Vert_F\left(1+6\Vert B^T\Vert_F\Vert \tilde{P}_1\Vert_F\right)\right\}.$$
\end{lemma}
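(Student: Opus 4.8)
The plan is to prove both estimates by induction on $i$, deriving the gain bound from the bound on $\tilde P_i$. First I would dispose of well-definedness. Condition \eqref{error_condition_bounded} in particular forces $\Vert\Delta G_i\Vert_F<a_i$, so Lemma \ref{lemma_stability} propagates the Hurwitz property one step at a time; starting from the Hurwitz $\mathcal{A}(\hat K_1)$, a trivial induction shows $\mathcal{A}(\hat K_i)$ is Hurwitz and $[\hat G_i]_{uu}$ is invertible for all $i=1,\dots,\bar i$. Consequently each $\tilde P_i=P_{\hat K_i}$ is the unique positive definite solution of $\mathcal{L}_{\hat K_i}(\tilde P_i)+Q+\hat K_i^TR\hat K_i=0$, and every quantity below is defined and positive definite.

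The core of the argument is a one-step \emph{multiplicative} bound on the Lyapunov solutions. I would begin from inequality \eqref{key_inequality_2} in the proof of Lemma \ref{lemma_stability}, but now use the sharper hypothesis \eqref{error_condition_bounded}: since $\epsilon_{2,i}\leq\Vert\Delta G_i\Vert_F/a_i$, condition \eqref{error_condition_bounded} gives $\epsilon_{2,i}<(1+i^2)^{-1}$ uniformly over the unit ball, so as a matrix inequality $\mathcal{L}_{\hat K_{i+1}}(\tilde P_i)+Q+\hat K_{i+1}^TR\hat K_{i+1}<(1+i^2)^{-1}I_n$. Subtracting the defining equation $\mathcal{L}_{\hat K_{i+1}}(\tilde P_{i+1})+Q+\hat K_{i+1}^TR\hat K_{i+1}=0$ yields $\mathcal{L}_{\hat K_{i+1}}(\tilde P_i-\tilde P_{i+1})<(1+i^2)^{-1}I_n$. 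By Lemma \ref{lemma_lyapunov} the operator $\mathcal{L}_{\hat K_{i+1}}^{-1}$ maps positive definite matrices to negative definite ones and is therefore order-reversing; applying it to the shifted inequality gives $\tilde P_{i+1}<\tilde P_i+(1+i^2)^{-1}\bigl(-\mathcal{L}_{\hat K_{i+1}}^{-1}(I_n)\bigr)$.

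The decisive simplification is to convert this additive correction into a multiplicative one using $Q>I_n$. Since $\tilde P_{i+1}=-\mathcal{L}_{\hat K_{i+1}}^{-1}(Q+\hat K_{i+1}^TR\hat K_{i+1})$ and $Q+\hat K_{i+1}^TR\hat K_{i+1}\geq Q>I_n$, order-reversal of $\mathcal{L}_{\hat K_{i+1}}^{-1}$ gives $-\mathcal{L}_{\hat K_{i+1}}^{-1}(I_n)\leq\tilde P_{i+1}$. Substituting, $\tilde P_{i+1}<\tilde P_i+(1+i^2)^{-1}\tilde P_{i+1}$, whence $\tilde P_{i+1}<(1+i^{-2})\tilde P_i$. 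Chaining these inequalities from $\tilde P_1$ and bounding the product by $\prod_{k\geq1}(1+k^{-2})\leq\exp\bigl(\sum_{k\geq1}k^{-2}\bigr)=e^{\pi^2/6}<6$ gives $0<\tilde P_i<6\tilde P_1$. Because $0\leq A\leq B$ for symmetric PSD matrices implies $\Vert A\Vert_F\leq\Vert B\Vert_F$ (from $\trace(A^2)\leq\trace(AB)\leq\trace(B^2)$), this yields $\Vert\tilde P_i\Vert_F\leq6\Vert\tilde P_1\Vert_F$ for $i=1,\dots,\bar i$.

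The gain bound is then immediate: for $i\geq1$, \eqref{K_bound_by_P} gives $\Vert\hat K_{i+1}\Vert_F\leq2\Vert R^{-1}\Vert_F(1+\Vert B^T\Vert_F\Vert\tilde P_i\Vert_F)\leq2\Vert R^{-1}\Vert_F(1+6\Vert B^T\Vert_F\Vert\tilde P_1\Vert_F)\leq C_0$, while $\Vert\hat K_1\Vert_F\leq C_0$ by definition of $C_0$, covering all $i=1,\dots,\bar i$. I expect the main obstacle to lie in the second and third steps: extracting the sharp matrix inequality with constant $(1+i^2)^{-1}$ from \eqref{key_inequality_2}, and identifying the estimate $-\mathcal{L}_{\hat K_{i+1}}^{-1}(I_n)\leq\tilde P_{i+1}$ that turns the additive perturbation into a geometrically summable multiplicative factor — absent this trick, the per-step corrections need not telescope to a finite constant, and the uniform bound would fail.
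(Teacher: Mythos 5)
Your proof is correct and follows essentially the same route as the paper's: extract $\epsilon_{2,i}<(1+i^2)^{-1}$ from \eqref{key_inequality_2}, use $Q>I_n$ and the order-reversal of $\mathcal{L}_{\hat K_{i+1}}^{-1}$ to turn the additive correction $\epsilon_{2,i}\mathcal{L}_{\hat K_{i+1}}^{-1}(-I)$ into a multiplicative factor, telescope the product, and finish with \eqref{K_bound_by_P}. The only (equivalent) variation is that you bound $-\mathcal{L}_{\hat K_{i+1}}^{-1}(I)$ by $\tilde P_{i+1}$ via \eqref{eRPI_PE} while the paper bounds it by $\tfrac{1}{1-\epsilon_{2,i}}\tilde P_i$ via \eqref{key_inequality_3}, and you verify the constant $6$ explicitly through $\prod_k(1+k^{-2})\leq e^{\pi^2/6}<6$ rather than by citation.
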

\begin{proof}
Inequality (\ref{key_inequality_2}) yields
\begin{equation}\label{key_inequality_3}
    \mathcal{L}_{\hat{K}_{i+1}}(\tilde{P}_i) + (Q+\hat{K}_{i+1}^TR\hat{K}_{i+1}) - \epsilon_{2,i}I < 0.
\end{equation}
Inserting (\ref{eRPI_PE}) into above inequality, we have
\begin{equation*}
    \mathcal{L}_{\hat{K}_{i+1}}\left(\tilde{P}_{i} - \tilde{P}_{i+1}   - \mathcal{L}^{-1}_{\hat{K}_{i+1}}\left(\epsilon_{2,i}I\right)\right)<0.
\end{equation*}
By Lemma \ref{lemma_lyapunov},
\begin{equation}\label{key_inequality_4_1}
    \tilde{P}_{i+1} < \tilde{P}_{i} + \epsilon_{2,i}\mathcal{L}_{\hat{K}_{i+1}}^{-1}(-I).
\end{equation}
With $Q>I_n$, (\ref{key_inequality_3}) yields
\begin{equation*}
    \mathcal{L}_{\hat{K}_{i+1}}\left(\tilde{P}_i + (1 - \epsilon_{2,i})\mathcal{L}_{\hat{K}_{i+1}}^{-1}(I)\right) < 0.
\end{equation*}
By definition of $\epsilon_{2,i}$ and \eqref{error_condition_bounded}, $\epsilon_{2,i}<1$. Thus Lemma \ref{lemma_lyapunov} implies
\begin{equation}\label{key_inequality_4_2}
    \mathcal{L}_{\hat{K}_{i+1}}^{-1}(-I)<\frac{1}{1-\epsilon_{2,i}}\tilde{P}_i.
\end{equation}
From (\ref{key_inequality_4_1}) and (\ref{key_inequality_4_2}), we obtain
    $\tilde{P}_{i+1}<\left(1+\epsilon_{2,i}/(1-\epsilon_{2,i})\right)\tilde{P}_i$.
By definition of $\epsilon_{2,i}$ and condition \eqref{error_condition_bounded},
$\epsilon_{2,i}/(1-\epsilon_{2,i}) \leq 1/i^2$, $i=1,\cdots,\bar{i}$.
Then \cite[\S 28. Theorem 3]{InfinitSeries} yields
$\tilde{P}_i\leq 6\tilde{P}_1$, $i=1,\cdots,\bar{i}$.
An application of \eqref{K_bound_by_P} completes the proof.
\end{proof}
Note that all the conclusions of Corollary \ref{corollary_global} can be implied by Theorem \ref{theorem_local_ISS} if $\delta_4<\min(\gamma^{-1}(\epsilon),\delta_3)$, and $\tilde{P}_1\in\mathcal{B}_{\delta_1}(P^*)$ for Procedure \ref{procedure_robust_policy_iteration}. Thus the proof of Corollary \ref{corollary_global} reduces to the proof of the following lemma.
\begin{lemma}\label{lemma_converge_to_neighbourhood}
Given $\hat{K}_1$ with $\mathcal{A}(\hat{K}_1)$ Hurwitz, there exist $0<\delta_4<\min(\gamma^{-1}(\epsilon),\delta_3)$, $\bar{i}\in\mathbb{Z}_+$, $\kappa^a_3>0$, $\kappa^b_3>0$, such that $[\hat{G}_{i}]_{uu}$ is invertible, $\mathcal{A}(\hat{K}_i)$ is Hurwitz, $\Vert\tilde{P}_i\Vert_F< \kappa^a_3$, $\Vert \hat{K}_i\Vert_F< \kappa^b_3$, $i=1,\cdots,\bar{i}$, $\tilde{P}_{\bar{i}}\in\mathcal{B}_{\delta_1}(P^*)$, as long as $\Vert \Delta G\Vert_\infty<\delta_4$.
\end{lemma}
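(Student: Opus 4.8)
The plan is to split the evolution of Procedure \ref{procedure_robust_policy_iteration} into a \emph{reaching phase} that drives $\tilde P_i$ into the local ball $\mathcal B_{\delta_1}(P^*)$ in finitely many steps, after which Theorem \ref{theorem_local_ISS} takes over (this is exactly the reduction noted before the lemma). The engine of the reaching phase is the error-free iteration: let $\{P_i\}_{i\ge 1}$ be the sequence generated by Procedure \ref{procedure_policy_itration} initialized at $K_1=\hat K_1$, so that $P_1=\tilde P_1$ and, by \eqref{Kleinman_nonlinear_system}, $\vect(P_{i+1})=F(\vect(P_i))$ for the exact policy-iteration map $F$. Since $\mathcal A(\hat K_1)$ is Hurwitz, Theorem \ref{theorem_standard_PI} gives $P_i\to P^*$, so there is a finite $\bar i$ with $P_{\bar i}\in\mathcal B_{\delta_1/2}(P^*)$. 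The goal is then to show that, for $\Vert\Delta G\Vert_\infty$ small, the perturbed sequence $\{\tilde P_i\}$ stays well defined and close to $\{P_i\}$ over the finite window $i=1,\dots,\bar i$, so that $\tilde P_{\bar i}$ lands in $\mathcal B_{\delta_1}(P^*)$.

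First I would secure boundedness and feasibility (invertibility of $[\hat G_i]_{uu}$ and Hurwitzness of $\mathcal A(\hat K_i)$) over this window via Lemmas \ref{lemma_stability} and \ref{lemma_boundedness}. The delicate point is that the admissible error size $a_i$ of Lemma \ref{lemma_stability} depends on the iterates, so I would break the circularity by a bootstrap: the bound $C_0$ of Lemma \ref{lemma_boundedness} depends only on $\hat K_1,\tilde P_1,R,B$ and is therefore a fixed constant, and setting $\underline a=(2m(\sqrt n+C_0)^2)^{-1}$ gives $a_i\ge\underline a$ whenever $\Vert\hat K_i\Vert_2,\Vert\hat K_{i+1}\Vert_2\le C_0$. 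Choosing $\delta_4\le(1+\bar i^2)^{-1}\underline a$ then lets $\Vert\Delta G\Vert_\infty<\delta_4$ imply condition \eqref{error_condition_bounded} by induction on $i$, so Lemma \ref{lemma_boundedness} yields $\Vert\tilde P_i\Vert_F\le 6\Vert\tilde P_1\Vert_F=:\kappa^a_3$ and $\Vert\hat K_i\Vert_F\le C_0=:\kappa^b_3$ for $i\le\bar i$, with all feasibility conditions intact.

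Next I would carry out the finite-horizon tracking estimate for the perturbed recursion $\vect(\tilde P_{i+1})=F(\vect(\tilde P_i))+\mathcal E(\tilde G_i,\Delta G_i)$ of \eqref{Robust_Policy_Iteration_nonlinear_system}, noting $\mathcal E(\tilde G_i,0)=0$. I would work on a compact set $\mathcal S$ formed by the exact trajectory $\{P_i\}_{i=1}^{\bar i}$ and a small neighbourhood of it; since $\mathcal A(\mathscr K(P_i))$ is Hurwitz for each $i$ by Theorem \ref{theorem_standard_PI}, continuity of eigenvalues (Lemma \ref{math_fundamentals}) lets me shrink the neighbourhood so that $\mathcal A(\mathscr K(\cdot))$ stays Hurwitz on all of $\mathcal S$, whence $F$ is Lipschitz there with some constant $L_F$. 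Re-running the argument of Lemma \ref{lemma_total_error_bounded_ISS}, but with its constants attached to $\mathcal S$ instead of $\bar{\mathcal B}_{\delta_0}(P^*)$, gives $\Vert\mathcal E(\tilde G_i,\Delta G_i)\Vert_2\le C'\Vert\Delta G_i\Vert_F$, and a discrete Gr\"onwall recursion with $\tilde P_1=P_1$ yields $\Vert\tilde P_{\bar i}-P_{\bar i}\Vert_F\le C'\bigl(\sum_{k=0}^{\bar i-1}L_F^{k}\bigr)\Vert\Delta G\Vert_\infty=:L\Vert\Delta G\Vert_\infty$.

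Finally I would assemble the constants: taking $\delta_4<\min\bigl(\gamma^{-1}(\epsilon),\delta_3,(1+\bar i^2)^{-1}\underline a,\delta_1/(2L)\bigr)$ forces $\Vert\tilde P_{\bar i}-P_{\bar i}\Vert_F<\delta_1/2$, so $\tilde P_{\bar i}\in\mathcal B_{\delta_1}(P^*)$ by the triangle inequality with $P_{\bar i}\in\mathcal B_{\delta_1/2}(P^*)$, and the boundedness and feasibility claims hold with $\kappa^a_3,\kappa^b_3$ above. The main obstacle is the tracking step: I must re-establish the perturbation bound $\Vert\mathcal E\Vert\lesssim\Vert\Delta G\Vert$ on the large compact set $\mathcal S$ far from $P^*$ (Lemma \ref{lemma_total_error_bounded_ISS} covers only $\bar{\mathcal B}_{\delta_0}(P^*)$) while, through the bootstrap of the second paragraph, guaranteeing that the perturbed iterates never leave $\mathcal S$ nor violate the Hurwitz and invertibility conditions that make each step of the recursion meaningful.
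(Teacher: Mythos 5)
Your argument is correct, and while it shares the paper's overall skeleton (reduce to reaching $\mathcal{B}_{\delta_1}(P^*)$ in finitely many steps, then hand off to Theorem \ref{theorem_local_ISS}; secure boundedness and feasibility via the bootstrap through Lemmas \ref{lemma_stability} and \ref{lemma_boundedness}, where your threshold $(1+\bar{i}^2)^{-1}\underline{a}$ is exactly the paper's $b_{\bar{i}}$ in \eqref{error_bound_2}), the tracking step is genuinely different. You fix $\bar{i}$ from the \emph{unperturbed} trajectory alone, so that $P_{\bar{i}}\in\mathcal{B}_{\delta_1/2}(P^*)$, and run a discrete Gronwall comparison between $\{\tilde{P}_i\}$ and $\{P_i\}$ from the shared initial condition $\tilde{P}_1=P_1$ over the whole window, on a compact tube $\mathcal{S}$ around the exact trajectory. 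The paper instead rewrites the iteration in Newton form \eqref{perturbed_newton}, introduces the set $\mathcal{M}$ of \emph{all} iterates reachable under admissible perturbations, shows it is compact, extracts a \emph{uniform} settling time $k_0$ such that the exact iteration started anywhere in $\mathcal{M}$ reaches $\mathcal{B}_{\delta_1/2}(P^*)$ in $k_0$ steps, and applies the Gronwall estimate \eqref{continuous_dependence} only over those last $k_0$ steps, restarting the exact iteration from the perturbed iterate $\tilde{P}_{\bar{i}-k_0}$. Your route buys simplicity: it avoids constructing $\mathcal{M}$ and the uniform-$k_0$ claim (which in the paper rests on compactness of $\mathcal{M}$ together with Theorem \ref{theorem_standard_PI} and implicitly needs a uniform-convergence argument), at the cost of a Gronwall factor $\sum_{k=0}^{\bar{i}-1}L_F^{k}$ that grows with $\bar{i}$ — harmless since $\bar{i}$ is fixed before $\delta_4$ is chosen. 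The paper's route keeps the comparison horizon at the fixed $k_0$ independent of $\bar{i}$, which meshes with its forced choice $\delta_4=b_{\bar{i}}\propto\bar{i}^{-2}$, and its $\mathcal{M}$-machinery is reused later in the proof of Lemma \ref{Lemma_policy_evaluation_error_bounded}. The two obstacles you flag — a uniform bound $\Vert\mathcal{E}(\tilde{G}_i,\Delta G_i)\Vert\leq C'\Vert\Delta G_i\Vert_F$ on $\mathcal{S}$ and invariance of $\mathcal{S}$ — are resolved exactly as you indicate (compactness plus Hurwitzness of $\mathcal{A}(\mathscr{K}(\cdot))$ on a shrunk tube, and an induction keeping the accumulated tracking error below the tube radius), and correspond to the paper's estimates \eqref{perturbed_inverse_lypunov} and \eqref{Newton_Lipschitz_inequality} on $\mathcal{M}$; so there is no gap, only standard details to write out.
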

\begin{proof}
Consider Procedure \ref{procedure_robust_policy_iteration} confined to the first $\bar{i}$ iterations, where $\bar{i}$ is a sufficiently large integer to be determined later in this proof. Suppose
\begin{equation}\label{error_bound_2}
        \Vert\Delta G_i\Vert_F<b_{\bar{i}}\triangleq\frac{1}{2m(1+\bar{i}^2)}\left(\sqrt{n} + C_0\right)^{-2}.
\end{equation}
Condition (\ref{error_bound_2}) implies condition (\ref{error_condition_bounded}). Thus $\mathcal{A}(\hat{K}_i)$ is Hurwitz, $[\hat{G}_{i}]_{uu}^{-1}$ is invertible, $\Vert\tilde{P}_i\Vert_F$ and $\Vert\hat{K}_i\Vert_F$ are bounded. By (\ref{eRPI_PE}) we have
$\mathcal{L}_{\hat{K}_{i+1}}(\tilde{P}_{i+1}-\tilde{P}_{i}) = -Q-\hat{K}_{i+1}^TR\hat{K}_{i+1}-\mathcal{L}_{\hat{K}_{i+1}}(\tilde{P}_{i})$.
Letting $E_i = \hat{K}_{i+1} - \mathscr{K}(\tilde{P}_i)$, the above equation can be rewritten as
\begin{equation}\label{perturbed_newton}
    \tilde{P}_{i+1} = \tilde{P}_i - \mathcal{N}(\tilde{P_i}) + \mathcal{L}_{\mathscr{K}(\tilde{P_i})}^{-1}(\mathscr{E}_i),
\end{equation}
where $\mathcal{N}(\tilde{P_i}) = \mathcal{L}_{\mathscr{K}(\tilde{P_i})}^{-1}\circ\mathcal{R}(\tilde{P}_i),$ and 
\begin{equation*}
    \begin{split}
    \mathscr{E}_i &= - E^T_i\mathscr{R}(\tilde{P}_{i+1})E_i + E^T_i\mathscr{R}(\tilde{P}_{i+1})\left(\mathscr{K}(\tilde{P}_{i+1})-\mathscr{K}(\tilde{P}_i)\right) \\
    &+ \left(\mathscr{K}(\tilde{P}_{i+1})-\mathscr{K}(\tilde{P}_i)\right)^T\mathscr{R}(\tilde{P}_{i+1})E_i
    \end{split}
\end{equation*}
Given $\hat{K}_1$, let $\mathcal{M}_{\bar{i}}$ denote the set of all possible $\tilde{P}_i$, generated by (\ref{perturbed_newton}) under condition (\ref{error_bound_2}). By definition, $\{\mathcal{M}_j\}_{j=1}^\infty$ is a nondecreasing sequence of sets, i.e., $\mathcal{M}_1\subset \mathcal{M}_2 \subset \cdots$. Define $\mathcal{M} = \cup_{j=1}^\infty\mathcal{M}_j$, $\mathcal{D} = \{P\in \mathbb{S}^n\ \vert\ \Vert P\Vert_F\leq 6\Vert \tilde{P}_1\Vert_F\}$. Then by Lemma \ref{lemma_boundedness} and Theorem \ref{theorem_standard_PI}, $\mathcal{M}\subset\mathcal{D}$; $\mathcal{M}$ is compact; $\mathcal{A}(P)$ is Hurwitz for any $P\in\mathcal{M}$.\par
Now we prove that $\mathcal{N}(\cdot)$ is Lipschitz continuous on $\mathcal{M}$. By definition, \eqref{lyapunov_operator_equivalent}, \eqref{linear_perturbation} and Lemma \ref{math_fundamentals}
\begin{align}\label{Newton_Lipschitz_inequality}
        &\Vert\mathcal{N}(P^1)-\mathcal{N}(P^2)\Vert_F=\Vert\mathcal{A}^{-1}(\mathscr{K}(P^1))\vect(\mathcal{R}(P^1))\nonumber\\
        &- \mathcal{A}^{-1}(\mathscr{K}(P^2))\vect(\mathcal{R}(P^2))\Vert_2 \nonumber\\
        &\leq\Vert\mathcal{A}^{-1}(\mathscr{K}(P^1))\Vert_F\left(\Vert\mathcal{R}(P^1) - \mathcal{R}(P^2)\Vert_F\right. \nonumber\\
        &\left.+ \Vert\mathcal{A}^{-1}(\mathscr{K}(P^2))\Vert_F\Vert\mathcal{A}(\mathscr{K}(P^1))-\mathcal{A}(\mathscr{K}(P^2))\Vert_F\Vert\mathcal{R}(P^2)\Vert_F\right)  \nonumber\\
        &\leq \Vert\mathcal{A}^{-1}(\mathscr{K}(P^1))\Vert_F\left( C_6+ C_7 \Vert\mathcal{A}^{-1}(\mathscr{K}(P^2))\Vert_F\Vert\mathcal{R}(P^2)\Vert_F\right)\nonumber\\
        &\times\Vert P^1 - P^2\Vert_F \leq L_{\mathcal{N}} \Vert P^1 -P^2 \Vert_F,
\end{align}
where $C_6$ is the Lipschitz constant of $\mathcal{R}(\cdot)$ on compact set $\mathcal{D}$, $C_7$ is the Lipschitz constant of $\mathcal{A}(\mathscr{K}(\cdot))$ on compact set $\mathcal{M}$.

Define $\{P_{k\vert i}\}_{k=0}^{\infty}$ as the sequence generated by \eqref{Kleinman_nonlinear_system} with $P_{0\vert i}=\tilde{P}_i$.
Similar to \eqref{perturbed_newton}, we have
\begin{equation}\label{exact_newton}
    P_{k+1\vert i} = P_{k\vert i} - \mathcal{N}(P_{k\vert i}), \quad k\in\mathbb{Z}_+.
\end{equation}
By Theorem \ref{theorem_standard_PI} and the fact that $\mathcal{M}$ is compact, there exists $k_0\in\mathbb{Z}_+$, such that 
\begin{equation}\label{triangle_inequality_first_part}
\Vert P_{k_0\vert i}-P^*\Vert_F<\delta_1/2, \qquad \forall P_{0\vert i}\in\mathcal{M}.
\end{equation}
Suppose
\begin{equation}\label{perturb_small}
    \Vert\mathcal{L}_{\mathscr{K}(\tilde{P}_{i+j})}^{-1}(\mathscr{E}_{i+j})\Vert_F<\mu,\qquad j=0,\cdots, \bar{i} - i.
\end{equation}
Unrolling (\ref{perturbed_newton}) and (\ref{exact_newton}),
and using \eqref{Newton_Lipschitz_inequality} and (\ref{perturb_small}), we have 
\begin{equation*}
    \begin{split}
        \Vert P_{k\vert i} - \tilde{P}_{i+k}\Vert_F \leq k\mu + \sum_{j=0}^{k-1} L_{\mathcal{N}} \Vert P_{j\vert i} - \tilde{P}_{i+j}\Vert_F.
    \end{split}
\end{equation*}
An application of the Gronwall inequality \cite[Theorem 4.1.1.]{agarwal2000difference} to the above inequality implies
\begin{equation}\label{continuous_dependence}
    \Vert P_{k\vert i} - \tilde{P}_{i+k}\Vert_F \leq k\mu + L_{\mathcal{N}}\mu \sum_{j=0}^{k-1}j (1+L_{\mathcal{N}})^{k-j-1}.
\end{equation}
The error term in (\ref{perturbed_newton}) satisfies
\begin{align}
    \left\Vert \mathcal{L}_{\mathscr{K}(\tilde{P_i})}^{-1}(\mathscr{E}_i)\right\Vert_F 
    &= \left\Vert \mathcal{A}^{-1}(\tilde{P}_i) \vect\left( \mathscr{E}_i\right)\right\Vert_2 \nonumber \leq C_8\Vert \mathscr{E}_i\Vert_F,\label{perturbed_inverse_lypunov}
\end{align}
where $C_8>0$ is obtained by Lemma \ref{math_fundamentals}. Let $\bar{i}>k_0$, and $k=k_0$, $i = \bar{i}-k_0$ in \eqref{continuous_dependence}. Then by condition \eqref{error_bound_2}, Lemma \ref{lemma_boundedness}, (\ref{perturb_small}), and (\ref{continuous_dependence}), there exists $i_0\in\mathbb{Z}_+$, $i_0>k_0$, such that $\Vert P_{k_0\vert \bar{i}-k_0} -\tilde{P}_{\bar{i}}\Vert_F<\delta_1/2$, for all $\bar{i}\geq i_0$. Setting $i = \bar{i}-k_0$ in \eqref{triangle_inequality_first_part}, the triangle inequality yields $\tilde{P}_{\bar{i}}\in\mathcal{B}_{\delta_1}(P^*)$, for $\bar{i}\geq i_0$. Then in \eqref{error_bound_2}, choosing $\bar{i}\geq i_0$ such that $\delta_4 = b_{\bar{i}}<\min(\gamma^{-1}(\epsilon),\delta_3)$ completes the proof.
\end{proof}

\section{Proof of Lemma \ref{Lemma_policy_evaluation_error_bounded}}\label{section_proof_lemma_algorithm}
\begin{proof}
By definition,
\begin{equation*}
    \begin{split}
        \Vert\Delta G_i\Vert_F \leq \Vert \hat{G}_i - G(\hat{P}_i(s_f))\Vert_F + \Vert G(\hat{P}_i(s_f)) - G(\tilde{P}_i)\Vert_F.
    \end{split}
\end{equation*}
Thus the task is to prove that each term in the right-hand side of the above inequality is less than $\delta_4/2$. To this end, we firstly study $\Vert \hat{P}_i(s_f)-\tilde{P}_i\Vert_F$. Letting $\hat{p}_i=\vect(\hat{P}_i)$, by Lemma \ref{lemma_relationship_vec_svec}, Line \ref{algorithm_policy_evaluation} in Algorithm \ref{algorithm_off_policy_stochastic} can be rewritten as
\begin{equation}\label{ode_policy_evaluation_data_estimate_2}
\begin{split}
    \dot{\hat{p}}_{i} = \mathcal{T}^1(\psi_{t_f},\zeta_{t_f},\hat{K}_i)\hat{p}_{i} + \mathcal{T}^2(\psi_{t_f},\xi_{t_f},\hat{K}_i),
\end{split}
\end{equation}
where $\hat{p}_i(0)\in\mathbb{R}^{n^2}$ and
\begin{equation*}
\begin{split}
&\mathcal{T}^1(\psi_{t_f},\zeta_{t_f},\hat{K}_i) = \Gamma_i D_{m+n+1}\psi_{t_f}^\dagger\zeta_{t_f}D_n^\dagger,    \\
&\mathcal{T}^2(\psi_{t_f},\xi_{t_f},\hat{K}_i) = \Gamma_i D_{m+n+1}\psi_{t_f}^\dagger\xi_{t_f}, \\
&\Gamma_i = \left(\left[I_n,-\hat{K}^T_i\right]\otimes\left[I_n,-\hat{K}^T_i\right]\right)\left(\left[I_{m+n},0\right]\otimes\left[I_{m+n},0\right]\right).
\end{split}
\end{equation*}
For convenience, in the sequel we define
\begin{equation*}
    \begin{split}
        \mathcal{T}^1_{t_f,i} &= \mathcal{T}^1(\psi_{t_f},\zeta_{t_f},\hat{K}_i),\quad \mathcal{T}^2_{t_f,i} = \mathcal{T}^2(\psi_{t_f},\xi_{t_f},\hat{K}_i),  \\
        \mathcal{T}^1_i &= \mathcal{T}^1(\Psi,\mathcal{Z},\hat{K}_i),\quad \mathcal{T}^2_i = \mathcal{T}^2(\Psi,\Xi,\hat{K}_i).
    \end{split}
\end{equation*}
Similar arguments applied to \eqref{ode_policy_evaluation_data_precise} with $K=\hat{K}_i$ yield
\begin{equation}\label{ode_policy_evaluation_exact_estimation}
    \dot{\bar{p}}_{i} =
    \mathcal{T}^1_i\bar{p}_{i} + \mathcal{T}^2_i,\quad \bar{p}_{i}(0)\in\mathbb{R}^{n^2}.
\end{equation}
Since \eqref{ode_policy_evaluation_data_precise} is identical to \eqref{ode_policy_evaluation_model}, \eqref{ode_policy_evaluation_exact_estimation} is identical to \eqref{ode_policy_evaluation_model_vector} with $K$ and $\vect(P)$ replaced by $\hat{K}_i$ and $\bar{p}_i$ respectively, and
\begin{equation}\label{data_model_identical}
    \mathcal{T}^1_i=\mathcal{A}(\hat{K}_i),\quad\mathcal{T}^2_i = \vect(Q+\hat{K}_i^TR\hat{K}_i).
\end{equation}
Let $s$ denote the time variable in \eqref{ode_policy_evaluation_data_estimate_2} and \eqref{ode_policy_evaluation_exact_estimation}, since $\mathcal{A}(\hat{K}_i)$, $\hat{K}_i\in\mathcal{K}$ is Hurwitz, by Lemma \ref{Lemma_optimistic_PI}
\begin{equation}\label{policy_evaluation_converge}
    \lim_{s\rightarrow\infty} \bar{P}_{i}(s) = \tilde{P}_i
\end{equation} 
where $\bar{P}_i = \vect^{-1}(\bar{p}_i)$ and $\tilde{P}_i$ is the unique solution of \eqref{algebraic_lyapunov_equation} with $K=\hat{K}_i$. Let $\mathcal{V}$ be the set of the unique solutions of \eqref{algebraic_lyapunov_equation} with $K\in\mathcal{K}$. Then by Corollary \ref{corollary_global} $\mathcal{V}$ and $\mathcal{K}$ are bounded, and $\mathcal{A}(K)$ is Hurwitz, for $\forall K\in\mathcal{K}$. So $\mathcal{A}(K)$ is Hurwitz, for $\forall K\in\bar{\mathcal{K}}$. Otherwise by definition and continuity of the determinant of a matrix, there must be a limit point of set $\mathcal{K}$, denoted as $K_{\text{lim}}$, such that $\mathcal{A}(K_{\text{lim}})$ is singular. Then let $K^{(k)}$ be a sequence of control gains in $\mathcal{K}$ converging to $K_{\text{lim}}$. Again by continuity of determinant of a matrix, we have $\lim_{K\rightarrow\infty}\det(K^{(k)}) = \det(K_{\text{lim}}) = 0$. Let $P^{(k)}$ be the solution of \eqref{algebraic_lyapunov_equation} with $K=K^{(k)}$. Then by \eqref{lyapunov_operator_equivalent} and \eqref{algebraic_lyapunov_equation} we have
$\lim_{k\rightarrow\infty}\Vert P^{(k)}\Vert_F = \infty$
which contradicts the boundedness of $\mathcal{V}$. Define $\mathcal{K}_1 = \{\mathcal{A}(K)\vert K\in\bar{\mathcal{K}}\}$. By continuity, $\mathcal{K}_1$ is a compact set of Hurwitz matrices, and there exists a $\delta_5>0$, such that any $X\in\bar{\mathcal{K}}_2$ is Hurwitz, where $\mathcal{K}_2 = \{X\vert X\in\mathcal{B}_{\delta_5}(Y),Y\in\mathcal{K}_1\}$.
Define
\begin{equation*}
\begin{split}
    \Delta\mathcal{T}^1_{t_f,i} = \mathcal{T}^1_i - \mathcal{T}^1_{t_f,i},\quad \Delta\mathcal{T}^2_{t_f,i} =  \mathcal{T}^2_i - \mathcal{T}^2_{t_f,i}.
\end{split}    
\end{equation*}
The boundedness of $\mathcal{K}$, \eqref{ergodic_converge_1}, \eqref{ergodic_converge_2} and \eqref{data_model_identical} imply the existence of $t_1>0$, such that for any $t_f\geq t_1$, any $\hat{K}_i\in\mathcal{K}$, almost surely
\begin{equation}\label{belong_to_compact_stable_set}
    \mathcal{T}^1_{t_f,i}\in \bar{\mathcal{K}}_2, \quad \mathcal{T}^2_{t_f,i}< C_9,
\end{equation}
where $C_9>0$ is a constant. Then $ \mathcal{T}^1_{t_f,i}$ is Hurwitz and \eqref{ode_policy_evaluation_data_estimate_2} admits a unique stable equilibrium, that is,
\begin{equation}\label{data_policy_evaluation_converge}
    \lim_{s\rightarrow\infty}\hat{P}_{i}(s) = \mathring{P}_i
\end{equation}
for some $\mathring{P}_i\in\mathbb{S}^n$. From \eqref{ode_policy_evaluation_data_estimate_2}, \eqref{ode_policy_evaluation_exact_estimation}, \eqref{policy_evaluation_converge} and \eqref{data_policy_evaluation_converge}, we have
\begin{equation*}
    \begin{split}
        \vect(\tilde{P}_i) = -\left( \mathcal{T}^1_i \right)^{-1}\mathcal{T}^2_i, \quad
        \vect(\mathring{P}_i) = -\left(\mathcal{T}^1_{t_f,i} \right)^{-1}\mathcal{T}^2_{t_f,i}.
    \end{split}
\end{equation*}
Thus by \eqref{linear_perturbation}, for any $t_f\geq t_1$ and any $\hat{K}_i\in\mathcal{K}$, almost surely
\begin{align*}
&\Vert \mathring{P}_i - \tilde{P}_i\Vert_F\leq \left\Vert\left(\mathcal{T}^1_i \right)^{-1}\right\Vert_F\left(\Vert \Delta\mathcal{T}^2_{t_f,i} \Vert_F +\right.\\
&\left.\left\Vert\left(\mathcal{T}^1_{t_f,i} \right)^{-1}\right\Vert_F\left\Vert\mathcal{T}^2_{t_f,i}\right\Vert_2\Vert\Delta\mathcal{T}^1_{t_f,i} \Vert_F\right) \\
&\leq C_{10} \Vert \Delta\mathcal{T}^2_{t_f,i} \Vert_F + C_{11}\Vert\Delta\mathcal{T}^1_{t_f,i} \Vert_F
\end{align*}
where $C_{10}$ and $C_{11}$ are some positive constants, and the last inequality is due to \eqref{data_model_identical}, \eqref{belong_to_compact_stable_set} and the fact that $\mathcal{K}_1$ and $\bar{\mathcal{K}}_2$ are compact sets of Hurwitz matrices. Then for any $\epsilon_1>0$, the boundedness of $\mathcal{K}$, \eqref{ergodic_converge_1} and \eqref{ergodic_converge_2} imply the existence of $t_2\geq t_1$, such that for any $t_f\geq t_2$, almost surely
\begin{equation}\label{PE_error_first_half}
    \Vert \mathring{P}_i - \tilde{P}_i\Vert_F<\epsilon_1/2,
\end{equation}
as long as $\hat{K}_i\in\mathcal{K}$. By Lemma \ref{uniform_exp_stab_over_compact_set} and \eqref{PE_error_first_half}, for any $t_f\geq t_2$ and any $\hat{K}_i\in\mathcal{K}$,
$\Vert \mathring{P}_{i}-\hat{P}_i(s)\Vert_F\leq a_0\exp(-b_0s)\Vert \mathring{P}_i\Vert_F\leq a_1\exp(-b_0s)$,
for some $a_0>0$, $b_0>0$ and $a_1>0$. Therefore there exists a $s_1>0$, such that for any $s_f\geq s_1$ and any $t_f\geq t_2$, almost surely
\begin{equation}\label{PE_error_second_half}
    \Vert \hat{P}_{i}(s_f)-\mathring{P}_i\Vert_F<\epsilon_1/2,
\end{equation}
as long as $\hat{K}_i\in\mathcal{K}$. With \eqref{PE_error_first_half} and \eqref{PE_error_second_half}, we obtain
\begin{equation}\label{triangle_inequality_3}
    \Vert\hat{P}_{i}(s_f)-\tilde{P}_i\Vert_F<\epsilon_1,
\end{equation}
almost surely for any $s_f\geq s_1$, any $t_f\geq t_2$, as long as $\hat{K}_i\in\mathcal{K}$. Since $\epsilon_1$ is arbitrary, we can choose $\epsilon_1$ such that almost surely
$\Vert G(\hat{P}_{i}(s_f)) - G(\tilde{P}_i)\Vert_F<\delta_4/2$
for any $s_f\geq s_1$ and any $t_f\geq t_2$, as long as $\hat{K}_i\in\mathcal{K}$.

Secondly, since $\mathcal{V}$ is bounded, by \eqref{triangle_inequality_3} $\hat{P}_{i}(s_f)$ is also almost surely bounded. Thus from Lines \ref{algorithm_estimate_final_Q_1} to \ref{algorithm_estimate_final_Q_2} in Algorithm \ref{algorithm_off_policy_stochastic}, \eqref{ergodic_converge_1} and \eqref{ergodic_converge_2}, there exists $t_3\geq t_2$, such that
$\Vert \hat{G}_{i} - G(\hat{P}_{i}(s_f))\Vert_F<\delta_4/2$
for any $t_f\geq t_3$ and any $s_f\geq s_1$, as long as $\hat{K}_i\in\mathcal{K}$.

Setting $s_0 = s_1$ and $t_0 = t_3$ yields $\Vert\Delta G_i\Vert<\delta_4$.
\end{proof}

\bibliography{RPIbib} 
\bibliographystyle{ieeetr}

\end{document}